\newcommand{\Id}{\mathbb{I}} % Identity characteristic function
\newcommand{\MoSV}{\mu_{x}} % The measure of spin value 
\newcommand{\MoQP}{\mu_\omega} % The measure of quenched parameter
\newcommand{\QP}{\mathcal{Q}} % The quenched parameter configuration
\newcommand{\R}{\mathcal{R}} % real number 
\newcommand{\SV}{\mathcal{S}} % spin value
\newcommand{\Zd}{{\mathbb{Z}^d}}
\newcommand{\Pb}{{\mathbb{P}}} % Probability over the random fields
\newcommand{\Ex}{\mathbb{E}} % Expectation over the random fields 
\newcommand{\Reg}{\Lambda} % finite region of the lattice 
\newcommand{\ExB}{\partial^{\text{ex}}} % The external boundary of the region 
\newcommand{\InB}{\partial^{\text{in}}} % The internal boundary of the region
\newcommand{\U}{\mathcal{U}} % The unstable region 
\newcommand{\TB}{\Gamma} % The thickened spin boundary set
\newcommand{\C}{\gamma} % The component of the thickened boundary set; The contour 
\newcommand{\sC}{{\bar{\gamma}}} % The support of the contour 
\newcommand{\Ext}{\text{ext}} % external boundary of a set
\newcommand{\Int}{\text{int}} % internal boundary of a set
\newcommand{\CoC}{\mathcal{C}} %The collection of Contours 
\newcommand{\PF}{\mathcal{Z}}
\newcommand{\Ge}{{e_g}} % ground state energy density. 
\newcommand{\Ns}{{N_s}} % number of spin states 
\newcommand{\Ng}{{N_g}} % number of ground states 
\newcommand{\Na}{{N_\alpha}} % number of random fields 
\newcommand{\Nb}{{N_\beta}} % number of quenched parameters
\newcommand{\trsl}{\tau}%translation operator
\newcommand{\FSC}{{\mathcal{F}_c}} % fluctuation-stabilized contour
\newcommand{\QISC}{\mathcal{I}_c} % quasi-invariance-stabilized contour 
\newcommand{\FSIR}{\mathcal{F}_\Int} %fluctuation-stabilized internal region 
\newcommand{\CoB}{\mathscr{B}} % the collection of the cubes/boxes. 
\theoremstyle{thmstyleone}%
\newtheorem{theorem}{Theorem}%  meant for continuous numbers
\newtheorem{proposition}[theorem]{Proposition}% 
\newtheorem{corollary}[theorem]{Corollary}
\newtheorem{lemma}[theorem]{Lemma} 
\theoremstyle{thmstyletwo}%
\newtheorem{remark}{Remark}%
\theoremstyle{thmstylethree}%
\newcounter{def}
\newtheorem{definition}[def]{Definition}%
\begin{document}

\title{The stability of long-range order in disordered systems: A generalized Ding-Zhuang argument}

\author*[1,2]{Yejia Chen}\email{chenyejia@itp.ac.cn}

\author*[1,2]{Jianwen Zhou}\email{zhoujianwen@itp.ac.cn}

\author[1,2]{Ruifeng Liu}\email{liuruifeng@itp.ac.cn}

\author*[1,2,3]{Hai-Jun Zhou}\email{zhouhj@itp.ac.cn}

\affil*[1]{
  Institute of Theoretical Physics, Chinese Academy of Sciences, Zhong-Guan-Cun East Road 55, Beijing 100190, China
}

\affil[2]{School of Physical Sciences, University of Chinese Academy of Sciences, Beijing 100049, China
}

\affil[3]{MinJiang Collaborative Center for Theoretical Physics, MinJiang University, Fuzhou 350108, China
}

\abstract{
The stability of long-range order against quenched disorder is a central problem in statistical mechanics. This paper develops a generalized framework extending the Ding-Zhuang method and integrated with the Pirogov-Sinai framework, establishing a systematic scheme for studying phase transitions of long-range order in disordered systems. We axiomatize the Ding-Zhuang approach into a theoretical framework consisting of the Peierls condition and a local symmetry condition. For systems in dimensions $d \geq 3$ satisfying these conditions, we prove the persistence of long-range order at low temperatures and under weak disorder, with multiple coexisting distinct Gibbs states. The framework's versatility is demonstrated for diverse models, providing a systematic extension of Peierls methods to disordered systems.
}

\keywords{Disordered systems, Ding-Zhuang argument, Long-range order}

%%\pacs[JEL Classification]{D8, H51}

%%\pacs[MSC Classification]{35A01, 65L10, 65L12, 65L20, 65L70}

\maketitle

\section{Introduction}\label{sec1}

Phase transitions represent singularities in the thermodynamic behavior of many-body systems and lie at the heart of equilibrium statistical mechanics. The phenomenon of long-range order, characterized by the spontaneous breaking of symmetry in the classical Landau's picture, has been rigorously established for various ordered systems, especially for lattice systems. However, the persistence of such ordered phases under quenched disorder, where randomness is frozen in time, remains a profound theoretical challenge. This article addresses the stability of long-range order in disordered lattice systems, where impurities, random fields, or heterogeneous couplings might disrupt collective order. The central question we confront is: Under what conditions can robust phase transitions survive the destabilizing influence of quenched disorder?

The foundational work of Peierls \cite{Peierls_1936} provided the first rigorous demonstration of long-range order in the two-dimensional Ising model. His argument rested on a geometric insight: the energetic penalty for creating a domain of flipped spins scales with its boundary length, while the entropic contribution scales with the number of possible domain configurations. In dimensions $d \geq 2$, the energy cost dominates over the entropy gain, stabilizing ordered phases. This energy-entropy balance framework became paradigmatic and was extended in many ways, including the extensions to systems with continuous symmetries  \cite{malyshevphase1975, VANBEIJEREN1978145}, quantum systems \cite{robinsonproof1969}, and systems with long-range interactions \cite{affonsolongrange2024}.  

Of all these advances, the Pirogov-Sinai theory \cite{Pirogov-SinaiI, Pirogov-SinaiII} revolutionized the landscape of the Peierls argument. By introducing a systematic low-temperature expansion via contours, the excitation domains separating distinct ground states, and the polymers consisting of contours, Pirogov and Sinai quantified how local ground-state properties extend to finite temperatures. Crucially, this framework established that multiple coexisting phases emerge from distinct ground-state configurations precisely when the system admits local symmetries compatible with the contour dynamics, which are represented by transformations that preserve the Hamiltonian while permuting the ground states. Furthermore, Zahradník \cite{Pirogov-Sinai-Zahradnik} contributed substantially to the theory by introducing fundamentally new ideas about truncated pressures so that it can describe the phase diagram even without the symmetry conditions. This framework can even be used in an algorithmic way \cite{Algorithmic_PST}. There are also methods developed differently from the Peierls-type argument, such as the Infrared-bound method \cite{Frohlich-InfraredBoundsPhase-1976} and transfer matrix method \cite{baxterHardsquareLatticeGas}, some of which can even offer quantitative results for interesting physical quantities, especially for two-dimensional systems \cite{sherman1960,baxterExact1980, Baxter-PottsModelCritical-1973, Duminil-Copin-DiscontinuityPhaseTransition-2016,Duminil-Copin-ContinuityPhaseTransition-2017,Kac_Ward1952}.

The introduction of quenched disorder fundamentally disrupts this picture, however. Random fields break symmetry invariance almost everywhere in the lattice and suppress order parameters, necessitating novel adaptations of these symmetry principles. Among the disordered systems, one of the paradigmatic examples is the Random Field Ising Model (RFIM) with Hamiltonian
\begin{equation}
\mathcal{H} \, = \,  -J \sum_{\langle i j \rangle} \sigma_i \sigma_j - \sum_i \eta_i \sigma_i, \quad \quad \sigma_i = \pm 1
\end{equation}
where $\langle i j \rangle$ means a pair of nearest neighbors and $J>0$ is the ferromagnetic coupling constant, and the fields $\eta_i$ are typically independent identically distributed Gaussian variables with zero mean and variance $\epsilon^2$. 

One of the first heuristic arguments for the RFIM was introduced by Imry and Ma \cite{Imry_Ma}, demonstrating that in dimensions $d \leq 2$, an arbitrarily weak random field destroys long-range order. In contrast, for $d > 2$, long-range order can persist under weak disorder, but it is destroyed under strong disorder where the energy gain prevails. Several works employing heuristic methods, numerical simulations, and experimental protocols also predicted this dimensional dichotomy in other systems with quenched disorder, including diluted antiferromagnets in uniform external fields \cite{SFishman-RandomFieldEffects-1979, diluteAntiferromagnet}, Migdal-Kadanoff renormalization-group scheme \cite{Cao-Migdal-Kadanoff-study-RFIM-1993}, connections with Anderson-Mott transitions \cite{AndersonMottTransition}, numerical simulation results of the three-dimensional (3D) RFIM \cite{Fytas-Universality-3dRFIM-2013, Theodorakis-T0-RFIM-2014} and diluted antiferromagnets \cite{Ahrens-2013}, experiments of Ising Jahn-Teller transitions \cite{Jahn-Teller}, binary liquids in porous media \cite{Binaryliquid} and hydrogen in metal alloys \cite{Hydrogen}.

Rigorous theoretical progress has also been made on the dimensional dichotomy in disordered systems.
Aizenman-Wehr argument \cite{aizenmanRoundingEffectsQuenched1990} established a fundamental ``no-go" result for general disordered systems: systems with either continuous symmetry ($d \leq 4$) or discrete symmetry ($d \leq 2$) cannot exhibit spontaneous symmetry breaking under quenched disorder of arbitrarily weak random fields for some specific order parameters, including the case of RFIM. For the existence of long-range order in the RFIM, Imbrie \cite{imbrieground1985} and Bricmont and Kupiainen \cite{Bricmont-PhaseTransition3d-1988} developed intricate methods for dimensions $d \ge 3$ to show that long-range order persists when the quenched parameters are sufficiently small. 

Recently, Ding and Zhuang offered a simplified proof using an extended Peierls method \cite{Ding-Zhuang}, inspiring the generalization of the present work. Based on the multiscale analysis given in \cite{FFS84}, the Ding-Zhuang argument provides a powerful and rigorous framework for establishing phase coexistence in disordered systems via symmetry operations over both spin and random field configurations. This approach is quite efficient and has been extended to other systems like long-range interacting RFIM \cite{Maia}, and to other results over RFIM such as the description of the phase diagram \cite{Ding-LongRangeOrder-2024} and the critical behavior with respect to the size of the region \cite{Ding-PhaseTransitionCritical-2023}. 

Despite these significant progresses, establishing a unified framework for systematically investigating long-range order, whether discrete or continuous,  in disordered lattice systems has still remained challenging. This work bridges this gap by generalizing the Ding-Zhuang methodology, unifying and extending recent advances in the qualitative analysis of long-range ordered phases. We consider general statistical systems on lattices with weakly dependent inter-site interactions and weak random fields. Within the Pirogov-Sinai framework, we characterize contour energy bounds through Peierls conditions, while categorizing random field effects into three fundamental types (Section \ref{sec:Ding-Zhuang-general}).

Our generalization further addresses diverse local symmetries inherent to disordered systems. Notably, in systems without quenched disorder, Peierls arguments utilize not only spin reflection symmetry but also translation symmetry \cite{dobrushin1969_2, heilmannUseReflectionSymmetry1974} and other lattice-dependent symmetries \cite{heilmannUseReflectionSymmetry1974, heilmannexistence1972}. We axiomatize these local symmetries in Definitions \ref{def:local-symmetry} and \ref{def:extended-local-symmetry}, proving that for disordered systems in $d \geq 3$, the coexistence of Peierls conditions in unperturbed systems and local symmetry operations guarantees the persistence of long-range order at low temperatures under weak random fields.

This paper is organized as follows. In Section \ref{sec:notation}, we introduce the model and our notation. Section \ref{sec:contour} develops the Pirogov-Sinai contour representation and its corresponding polymer model. The core of our proof is presented in Section \ref{sec:Ding-zhuang}, where we establish probabilistic control over relevant physical quantities using concentration inequalities and a multiscale chaining argument. With these tools, we prove our main theorems in Section \ref{sec:theorems} and demonstrate their application to a variety of models in Section \ref{sec-model-examples}.

\section{Notations on statistical models}\label{sec:notation}

In this section, we set up the necessary notations and definitions of our argument. In all cases, the statistical models are located on the $d$-dimensional lattice $\Zd$ with origin denoted by $0$ without ambiguity. In the following, the distance on $\Zd$ is defined by the $L^\infty$ norm. For a vector $s \in \Zd$, its $L^\infty$ norm is simply the maximum absolute value of its entries. The state space of the statistical mechanical system is called the space of spin configurations, denoted by $x = \{x_s\} \in \SV^\Zd$, where $\SV$ is a compact metric space (the spin value space) and $x_s \in \SV$ denotes the spin value of site $s \in \Zd$. Unless otherwise specified, $\SV$ is assumed to be a finite discrete set with $\Ns := |\SV| < \infty$; this assumption will be relaxed in Section \ref{sec:theorems} for continuous spin models.

We are going to construct a real-valued Hamiltonian defined over spin configurations $x\in \SV^\Zd$ for each model, excluding the presence of random fields. This Hamiltonian comprises periodic arrays of local interaction functions $g^{\alpha}(x)$, where the index $\alpha$ distinguishes different types of interactions. Each interaction function $g^{\alpha}(x)$ depends solely on the spins within a finite range around the origin $0$, characterized by a constant $R_\alpha>0$. Specifically, $g^{\alpha}(x)$ is a function of the spin variables $x_s$ for all $s$ satisfying $s\leq |R_\alpha|$. Furthermore, each interaction function is bounded, such that $|g^{\alpha}(x)|\le 1$.
To incorporate spatial homogeneity, we define the translated interaction function $g^{\alpha}_s(x)$ by $g^{\alpha}_s(\trsl_s x)=g^{\alpha}(x)$, where $\trsl_s$ denotes the translation operator that shifts the spin configuration by a vector $s \in \Zd$. In other words, $(\trsl_s x)_t= x_{s+t}$.

To perform the Peierls-type argument, we must compare the conditional probabilities between different configurations, which leads to the definition of the Hamiltonian in a finite region $\Reg \subset \Zd $. To investigate the problem in a finite region, we must design the boundary conditions as a transformation in the space of spin configurations $\SV^{\Zd}$. For any $x, b \in \SV^{\Zd}$, we denote $x^{b,\Reg}$ through
\begin{equation}
\label{eq:boundary-transform}
    x^{b, \Reg}_{s} = \left\{ \begin{array}{cc}
        x_s  &  s \in \Reg \\
        b_s  &  s \notin \Reg
    \end{array} \right..
\end{equation}
Here, the auxiliary spin configuration $b$ serves as the boundary condition, fixing the spin values outside $\Reg$ for any configuration $x$. With the above notations, we define the local Hamiltonian $H^{b}_{0, \Reg}(x)$ over $\Reg$ without boundary as: 
\begin{equation}
\label{eq:original_Hamiltonian}
H_{0, \Reg}(x)\, = \, \sum_{\alpha} \sum_{s \in \Reg } h^\alpha g^\alpha_s(x) ,
\end{equation}
where $h^\alpha \in \mathbb{R}$ stands for the constant strength of the interaction for each type, and the local Hamiltonian with the boundary condition $b$ is $H^b_{0, \Reg}(x) = H_{0,\Reg}(x^{b, \Reg})$.

For further usage, we define the external boundary set and the internal boundary set of the finite region $\Reg$ as
\begin{eqnarray}
\ExB_n \Reg   & = &   \{ s \in \Zd |  s \notin \Reg, d(s, \Reg) = n \} ,
\label{eq:internal-boundary}
\\
\InB_n \Reg & = & \{ s \in \Zd |  s \in \Reg, d(s, \Reg^c) = n \},
\label{eq:external-boundary}
\end{eqnarray}
where $d(i, \Reg)$ stands for the distance induced by the $L^\infty$ norm and $\Reg^c$ represents the sites of $\Zd$ not in $\Reg$. When there is no ambiguity, we will write $\ExB \Reg = \ExB_1 \Reg$ and $\InB \Reg = \InB_1 \Reg$ for convenience.  

Now, we consider the case with additional random fields interacting with the system. All random fields are generated by the following rules:
\begin{itemize}
    \item There is an array of independent and identically distributed random variables $\omega=\{\omega^{\beta}_s\}$ at every site $s \in \Zd$ with $\omega_s^\beta \in \QP$, where $\QP$ is a probability space representing the quenched parameter space, common for all $\beta$ without loss of generality, and $\Nb$ is the index set for $\beta$. These random variables represent physical quantities directly coupling to the thermalized environment before quenching. In the following, we refer to $\omega$ as the quenched parameter configuration. 
    
    \item There is another array of random variables $\eta = \{ \eta^{\alpha}_s\} \in \R^{\Na \times \Zd}$, where $\Na$ is the index set of $\alpha$, and each $\eta^{\alpha}_s$ is a local continuous function of $\omega$. To enforce locality, we fix  constants $D_\alpha>0$ and subsets $\Nb(\alpha) \subseteq \Nb$ such that $\eta^{\alpha}_s$ depends on $\omega^{\beta}_t$ only for $|t-s|\le D_\alpha$ and $\beta \in \Nb(\alpha)$. Furthermore, we also assume the translation covariance of $\eta^{\alpha}$, or $\eta^{\alpha}_{s+t}(\trsl_t \omega) = \eta^{\alpha}_s(\omega)$ in other words. In the following, we refer to $\eta$ as the random field configuration. 
    \item There exists local real interaction functions $g'^{\alpha}(x)$ of the spin configuration $x$ which share similar properties as $g^{\alpha}(x)$. Without ambiguity, we denote them as $g^{\alpha}(x)$ also.
\end{itemize}

Combining the periodic Hamiltonian $H^b_{0,\Reg}$ in \eqref{eq:original_Hamiltonian} with the coupling to the additional random fields $\eta$, we obtain the Hamiltonian of a general disordered statistical mechanical system:
\begin{equation}\label{eq:general_Hamiltonian}
H^{b}_{\eta, \Reg}(x) = H^{b}_{0,\Reg}(x) + \sum_{\alpha} \sum_{s \in \Reg } \eta^\alpha_s g^{\alpha}_s(x^{b, \Reg}). 
\end{equation} 

By Pirogov-Sinai theory, for the disorder-free system \eqref{eq:original_Hamiltonian}, long-range behavior in the low-temperature regime arises from the stability of ground states under thermal fluctuations. A spin configuration $x$ is a local ground state if for any finite $\Reg \subset \Zd$ and any $x'$ coinciding with $x$ outside $\Reg$, $H_{0,\Reg'}(x) \leq H_{0, \Reg'}(x')$ for any subset $\Reg' \supseteq \Reg \cup \ExB_{ 2 \max_\alpha\{ R_\alpha \}} \Reg $. In this article, we focus on the periodic local ground state. A local ground state $x$ is also a periodic local ground state if there exists a basis $\{v_i\}_{i=1}^d$ of $\Zd$ such that $x_{s+v_i} = x_s$ for all $s$. We denote $e_i$ as the $i$-th unit axis vector, and $P_b$ as the least common multiple of a series of positive integer $a_i$ such that for every $i$, $a_i e_i$ is the integer combination of $v_i$. Define the characteristic range $R := (\max_\alpha\{R_\alpha\}+\max_\alpha\{D_\alpha\})P_b$. To simplify the problem, one can choose $R = 1$ without loss of generality by modifying the configuration spaces. More explicitly, we can set $\SV' = \SV^{R^d}$ to be the new value space consisting of the spin values within a hypercubic space. It is obvious that there is a natural one-to-one correspondence between $\SV'^{\Zd}$ and $\SV^{\Zd}$ and we can also rewrite all the contributing terms in the new spin configuration space with $R = 1$. Moreover, we also find that the ground state is the constant function over $\Zd$. Therefore, we use $b^k \in \SV$ to label not only the spin values but also the boundary configurations of these ground states. The ground state energy per site is $\Ge := \sum_\alpha h^\alpha g^\alpha_s(b^k)$ for any $s \in \Zd$ and any type of boundary condition $B^k$ by the definition of ground states and the translation invariance. The number of ground states is also denoted by $\Ng$.   

\begin{remark} 
It should be warned that the pair of the same type interaction term acting at yet different sites $g^\alpha_{s_1}(x)$ and $g^\alpha_{s_2}(x)$ might represent different type of interaction in the modified spin configuration space. At the same time, apparent changes should also be made for the dependence of the new types of interacting functions of the random field over the types of the quenched parameters. 
\end{remark}

Proving phase transitions in \eqref{eq:general_Hamiltonian} also requires quantifying the stability of long-range order in \eqref{eq:original_Hamiltonian} under the random fields, which leads to the assumption that $\eta$ is weak in the sense that $\eta^\alpha_s$ is concentrated near $0$. In the following, let $(\MoQP^\beta)_s$ be the distribution of $\omega_s^\beta$ and the full measure is $\bigotimes_{\Nb \times \Zd} (\MoQP^\beta)_s$ where $\bigotimes$ denotes the product measure. 

Also, assume $\QP \subset \mathbb{R}$ and each $\eta^\alpha_s$ is a Lipschitz continuous function over $\omega$ such that $|\nabla \eta^\alpha_s|(\omega) \le (3D_\alpha)^{-d}$ so that $H^b_{\eta,\Reg}$ can be a Lipschitz continuous function with respect to any $\omega^\beta_s$ with constant $1$. We also assume that
\begin{equation}
\label{eq:quenched-condition}
\Ex(\eta_s^\alpha) = 0, \,
 \quad  \eta_s (\omega \equiv 0)= 0, \, \quad \epsilon_\beta :=  \sqrt{\Ex[(\omega_s^\beta)^2]}
\end{equation}
without loss of generality. By Lemma \ref{lem:variation-boundedness} and \cite[Theorem 2.3]{APC550}, $\Ex[ (\eta^\alpha_s)^2]$ is bounded by $\epsilon_{\alpha}^2 := \sum_{\beta \in \Nb(\alpha)} \epsilon_\beta^2$. Let the maximal standard deviation $\epsilon := \sqrt{\sum_\alpha (\epsilon_\alpha)^2}$ represents the extent of the weakness of the random fields. Henceforth, we assume all random fields ${\omega_s^\beta}$ belong exclusively to one of these two classes: Gaussian random variables, or Bounded random variables, and satisfy \eqref{eq:quenched-condition}. This restriction streamlines our study of phase transitions under quenched disorder while preserving physical generality. In the bounded case, we also assume $|\omega_s^\beta| \le 1$. 

\section{Contours and polymer models}\label{sec:contour}

One of the best ways to elucidate the fluctuation properties of the Gibbs distribution is to depict the spin configuration via contours. In this section, we define contours and formulate the quantities using a corresponding polymer model. 

\subsection{Concepts over contours}\label{sec:contour-concept}

To investigate the general Gibbs distribution of the system, we need to find a proper representation of the excitation state based on the ground state. The definition of contours can be a powerful tool in this context. For a spin configuration $x$, a site $s \in \Zd$ is called unstable if for every ground state spin value $b^k$, there exists $t \in \Zd$ with $|s-t| \le 1$ such that $x_t \neq b^k$. Define $\U(x) \subseteq \Zd$ as the unstable region of $x$, consisting of all the unstable sites. With the above definition, we introduce the thickened unstable region:
\begin{equation} \label{eq:Tcontour}
    \TB(x) := \{s \in \Zd| d(s, \U(x)) \le 1 \} \; .
\end{equation}

To facilitate computation of the Hamiltonian and associated physical quantities for spin configurations, we partition the regions such as $\Reg$ and $\TB(x)$ into subdomains for localized analysis. This requires defining restricted configurations, a framework that enables rigorous characterization of contours. Contours decompose the thickened unstable set into connected components based on their underlying configurations. We now formally define restricted spin configurations and contours.

\begin{definition}[Restricted spin configurations and contours]
\label{def:r-config&contour}
A restricted spin configuration $r$ is defined as a pair
\begin{equation}
    r = (\bar{r}, \{r_s\}),
\end{equation}
where:
\begin{itemize}
    \item $\bar{r} \subseteq  \Zd$, called the support of $r$, is the definition domain of the restricted configuration.
    \item $r_s$ is the spin value of the restricted configuration at site $s \in \bar{r}$.  
\end{itemize}
A contour $\C=(\sC,\{\C_s\})$ is a restricted spin configuration with the additional requirements: %
\begin{itemize}
    \item $\sC$ is a path-connected subset of $\Zd$ in the sense that for any pair $s$, $s^\prime \in \sC$, there exists a sequence $\{s_i\}$ such that  $|s_i-s_{i-1}|=1$ and the two end points equal to $s$ and $s^\prime$.
 
    \item $\{\C_s\}$ is the restricted spin configuration that satisfies: (i) In every connected component of $\InB_2 \sC$, $\C_s \equiv b^k$ for some ground state spin value $b^k$. (ii) Any site $s \in \sC \setminus \InB_2 \sC$ is unstable with respect to $\C$.      
\end{itemize}
\end{definition}

In the context of Definition \ref{def:r-config&contour}, the restriction of $x$ to $\TB(x)$ decomposes into contours $\{ \C_i(x) \}$. In particular, the contour conditions for $(\C_i(x))_s$ can be easily verified by the geometric observation. In the following, we also denote by $\TB(x):=\{\C_i(x)\}$ as the collection of the restricted spin configurations. 

To classify the contours in a suitable way, it is useful to consider the connected components of $\sC^c = \cup_i A_i$, which are finite for a contour of finite support $\sC$. It is also obvious in geometry that $\cup_i \ExB A_i = \InB \sC$. Moreover, each $\ExB A_i$ is connected, but $\ExB A_i \cup \ExB A_j$ is disconnected for any pair $i,j$ with $i \neq j$. For any finite contour $\sC$,  there exists a unique infinite connected component in $\sC^c$, denoted as $A_0$ after reordering. We refer to this component as $\Ext \C$. Additionally, we denote the union of the other connected components as $\Int \C$. For each component $A_i \subset \Int \sC$, we assign $A_i \subset \Int_k \sC \subset \Int \sC$ if $\C_s = b^k$ for all $s \in \ExB A_i$.  Consequently, it follows that $\cup_k \Int_k \sC = \Int \sC$. The contour $\C$  can also be categorized by $\ExB \Ext \C$, setting $\CoC^k$ the collection of the contours such that for any $\C \in \CoC^k$, $\C_s = b^k$ for all $ s\in \ExB \Ext \C$.

\subsection{Rewriting as polymer model}\label{sec:polymer}

The framework of restricted spin configurations and contours enables effective decomposition of the Hamiltonian. For example, we define $H_{\eta,\Reg}(r)$ for the restricted spin configuration $r$ with $\bar{r} \supseteq \Reg \cup \ExB \Reg$ due to the locality of the interaction function. We can also define $H_{\eta}(\C)$ for a contour $\C$ as the Hamiltonian over $\sC$, since the boundary condition over $\ExB \sC$ can be inferred from the values of $\C$ over $\InB \sC$ by the contour condition. In the following, we express the partition function of the general Hamiltonian in \eqref{eq:general_Hamiltonian} as a polymer model with every atom representing a contour. We define $H^k_{\eta,\Reg}:= H^{b^k}_{\eta,\Reg}$ for simplicity. The partition function with boundary condition $b^k$ of restricted spin configurations $x$ in $\Reg$, whose boundary set is also contained in $\Reg$, is defined as: 
\begin{equation}\label{eq:partition-function}
    \PF^k_{\eta,\Reg} := \sum_{x} \exp\left[- \frac{1}{T} H^k_{\eta, \Reg }(x)\right],
\end{equation}
where the sum ranges over all possible configurations $x$ restricted in $\Reg$ with $x_s = b^k$ for $s \in \InB_2 \Reg$ and $T$ represents the temperature. For further usage, we also define a modified partition function over the spin configurations $x$ whose boundary set satisfies $\TB(x) \subset \Reg$ and $d(\TB(x),\Reg^c)>1$ denoted by $\tilde{\PF}^k_{\eta, \Reg}$. Specifically, $\tilde{\PF}^k_{\eta, \Reg}$ equals \eqref{eq:partition-function} but sums over all possible configurations of $x$ restricted in $\Reg$ with $x_s = b^k$ for $s \in \InB_3 \Reg$. Accordingly, the Gibbs measure is defined by its density over the configuration $x$ restricted in $\Reg$ with $x_s = b^k$ for $s \in \InB_2 \Reg$ as
\begin{equation}\label{eq:Gibbs-measure}
\mu^k_{\eta, \Reg}(x) := \frac{\exp\left[- \frac{1}{T} H^k_{\eta, \Reg}(x)\right] }{\PF^k_{\eta, \Reg}}.
\end{equation}

An inductive approach provides a systematic approach for computing the partition function and related quantities. Let $\TB$ be a collection of contours. We can define a partial order on the contours of $\TB$ such that $\C \le \C'$ if $\sC \subset \Int \C'$. Under this ordering, an external contour is characterized as a maximal element. Thus we can extract a subcollection $\TB_\Ext \subseteq \TB$ consisting of all the external contours in $\TB$. We also denote by $\TB_{\Ext}(x)$ the collections of all the external contours in $\TB(x)$. We define the external region for a contour collection $\TB$ as the intersection of the external regions of all the contours. Specifically,  
\begin{equation}\label{eq:def-Ext}
\Ext(\TB):= \bigcap_{\C \in \TB} \Ext(\C) = \bigcap_{\C \in \TB_{\Ext}} \Ext(\C). 
\end{equation}
For configuration $x$, set also $\Ext(x):= \Ext(\TB_{\Ext}(x))$.

By extracting the external contours and the external area, we can decompose the Hamiltonian as follows:
\begin{equation}\label{eq:Hamiltonian-decomposition}
    H^{k_0}_{\eta, \Reg}(x) = H^{k_0}_{\eta, \Reg \cap \Ext(x) }(b^{k_0}) + \sum_{\C \in \TB_{\Ext}(x)} \left\{ H_{\eta} (\C)  + \sum_k H^k_{\eta, \Int_k \C  }(x) \right\}
\end{equation}

In the following, we denote by $\TB_{\Ext}^{k} \subseteq \CoC^k$ an admissible collection of external contours in $\CoC^{k}$; a collection of contours is called admissible if any two distinct contours $\C$ and $\C'$ satisfy $d(\sC, \sC')>1$. Therefore, from the decomposition \eqref{eq:Hamiltonian-decomposition}, the corresponding partition function can be derived by summing up the cases of all the possible $\TB^{k_0}_{\Ext}$ as
\begin{equation}\label{eq:PF-decomposition}
\begin{aligned}
\PF^{k_0}_{\eta, \Reg}  = & \sum_{\TB^{k_0}_\Ext} \exp\left(-\frac{\Ge}{T} |\Ext(\TB^{k_0}_\Ext) \cap \Reg| \right) \exp\left[-\frac{1}{T}\sum_{s \in \Ext(\TB^{k_0}_\Ext) \cap \Reg} \sum_{\alpha} \eta^\alpha_s g^\alpha_s(b^{k_0}) \right]  \\
 & \prod_{\C \in \TB^{k_0}_\Ext}\left\{  \exp \left[-\frac{1}{T} H_{\eta}(\C) \right]  \prod_k \tilde{\PF}^k_{\eta, \Int_k \C } \right\}.   
\end{aligned}
\end{equation}

Comparing each term in (\ref{eq:PF-decomposition}) with the ground state term $\exp(-|\Reg|\Ge)$, we derive that
\begin{equation}\label{eq:recursion-pre}
\begin{aligned}
&\exp\left[\frac{\Ge|\Reg|+ S^{k_0}_\Reg}{T}\right] \PF^{k_0}_{\eta,\Reg} \\
= &  \sum_{\TB^{k_0}_\Ext} \prod_{\C \in \TB^{k_0}_\Ext} \left\{\exp\left[-\frac{1}{T}D^{k_0}_\eta(\C) \right] \prod_k \exp\left[\frac{\Ge|\Int_k \C|+ S^{k_0}_{\Int_k \C} }{T} \right]\tilde{\PF}^k_{\eta, \Int_k \C} \right\} 
\end{aligned}
\end{equation}
where $S^k_\Reg =\sum_{s \in 
\Reg} \sum_\alpha \eta^\alpha_s g^\alpha_s (b^k)$ is a random variable over the finite region $\Reg$ and $D^{k^0}_\eta (\C)$ for $\C \in \CoC^{k_0}$ stands for the difference of the Hamiltonian between the excited state and the ground state:  
\begin{equation}\label{eq:difference-formula}
    D^k_\eta(\C):= H_{\eta}(\C) -H^k_{\eta, \sC}(b^k) =  \sum_{ s \in \sC } \sum_\alpha  \left[(h^\alpha + \eta^\alpha_s) (g^\alpha_s(\C)-g^\alpha_s(b^k)) \right]
\end{equation}

\eqref{eq:recursion-pre} suggests that an inductive procedure for deriving the partition function can be formulated. In order to continue the induction procedure, we define the weight function of a contour as 
\begin{equation}\label{eq:weight-function}
    w_\eta^{k_0}(\C) = \exp\left[-\frac{1}{T}D^{k_0}_\eta(\C) \right] \prod_k \frac{\tilde{\PF}^k_{\eta, \Int_k \C}}{\PF^{k_0}_{\eta, \Int_k \C} 
    }  .  
\end{equation}
Applying the calculation of \eqref{eq:recursion-pre} inductively to $\Reg = \Int_k \C$ yields the following formula: 

\begin{proposition}[Polymer formula]\label{prop:polymer-formula}

For the modified partition function $\Xi^{k_0}_{\eta,\Reg} =  \exp[(\Ge|\Reg|+ S^{k_0}_\Reg)/T] \PF^{k_0}_{\eta,\Reg}$, an expansion in terms of the contour polymer can be derived  
\begin{equation}\label{eq:PF-polymer-formula}
    \Xi^k_{\eta, \Reg} = \sum_{\TB^k} \prod_{\C \in \TB^k}  w_\eta^k(\C)  \prod_{\C_i, \C_j, i \neq j} \delta(\C_i, \C_j),
\end{equation}
where the contour polymer is defined by $\TB^k$ ranging in all the possible collections of contours in $\CoC^k$ and the compatibility condition $\delta({\C_i, \C_j})$ as 
\begin{equation}\label{eq:polymer-metric}
    \delta(\C, \C') = \left\{ \begin{array}{cc}
        1 & d(\C, \C')>1, \text{ or } d(\C, \C') =1 \text{ and } \C \ge \C'  \text{ or } \C \ge \C' \\
        0 & \text{otherwise}
    \end{array} \right..
\end{equation}

Moreover, we can derive a polymer formula of the occurrence of a fixed external contour $\C_0$ in the Gibbs measure $\mu^k_{\eta, \Reg}$ as 
\begin{equation}\label{eq:contour-prob-formula}
\mu^k_{\eta, \Reg}(\C_0 \in \TB_\Ext(x)) = \frac{\sum_{\TB^k, \C_0 \in \TB^k_\Ext } w^k_\eta(\C_0)  \prod_{\C \in \TB^k\setminus\{\C_0\} } w^k_\eta(\C)  \prod_{\C, \C' \in \TB^k, \C \neq \C'} \delta(\C, \C')}{ \sum_{\TB^k} \prod_{\C \in \TB^k}  w^k_\eta(\C)  \prod_{\C, \C' \in \TB^k, \C \neq \C'} \delta(\C, \C')} \\
,
\end{equation}
where the weight function is defined in (\ref{eq:weight-function}). 
\end{proposition}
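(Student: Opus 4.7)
My plan is to induct on $|\Reg|$, taking the external-contour decomposition \eqref{eq:recursion-pre} as the starting point. The first step is to rewrite \eqref{eq:recursion-pre} entirely in terms of the weight function \eqref{eq:weight-function} and $\Xi^{k_0}$. Using the identity
\[
\exp\!\left[-D^{k_0}_\eta(\C)/T\right]\prod_k \tilde{\PF}^k_{\eta,\Int_k\C} \;=\; w^{k_0}_\eta(\C) \prod_k \PF^{k_0}_{\eta,\Int_k\C},
\]
which follows directly from the definition of $w^{k_0}_\eta(\C)$, the recursion collapses to
\[
\Xi^{k_0}_{\eta,\Reg} \;=\; \sum_{\TB^{k_0}_\Ext} \prod_{\C \in \TB^{k_0}_\Ext} w^{k_0}_\eta(\C) \prod_{k} \Xi^{k_0}_{\eta,\Int_k \C},
\]
where the outer sum runs over admissible external collections in $\CoC^{k_0}$ (pairs at distance strictly greater than $1$).

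For the base case, when $\Reg$ is too small to accommodate any contour, the only allowed configuration is $x \equiv b^{k_0}$ and a direct computation yields $\Xi^{k_0}_{\eta,\Reg} = 1$, matching the empty-polymer contribution of \eqref{eq:PF-polymer-formula}. In the inductive step, each $\Int_k\C$ is strictly smaller than $\Reg$, so every factor $\Xi^{k_0}_{\eta,\Int_k\C}$ already admits the polymer expansion. Substituting these expansions into the recursion and interchanging sums with products, I amalgamate the outer external collection with the inner polymers into a single collection $\TB^{k_0} = \TB^{k_0}_\Ext \cup \bigcup_{\C \in \TB^{k_0}_\Ext,\,k} \TB^{k_0}_{\Int_k\C}$. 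The combinatorial content is to verify pairwise compatibility $\delta$ case by case: (i) two external contours are at distance greater than $1$ by admissibility; (ii) an external $\C$ together with an inner $\C' \in \TB^{k_0}_{\Int_k\C}$ satisfies $\sC' \subset \Int \C$, so $\C \ge \C'$ and $\delta(\C,\C') = 1$ via the partial-order branch; (iii) inner contours originating from distinct external contours are buffered by the ambient external supports, leaving them at distance greater than $1$; (iv) inner contours from the same $\Int_k\C$ are compatible by the inductive hypothesis. Uniqueness of the decomposition follows because the maximal elements of any admissible $\TB^{k_0}$ under the partial order $\le$ coincide with $\TB^{k_0}_\Ext$; the remaining contours then partition canonically across the interior regions. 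This establishes \eqref{eq:PF-polymer-formula}.

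To prove \eqref{eq:contour-prob-formula}, I would write $\mu^k_{\eta,\Reg}(\C_0 \in \TB_\Ext(x))$ as a ratio of the partition function restricted to configurations with $\C_0$ external over the full $\PF^k_{\eta,\Reg}$. Multiplying numerator and denominator by the common prefactor $\exp[(\Ge|\Reg| + S^k_\Reg)/T]$ passes to $\Xi$-type sums. Rerunning the inductive factorization with $\C_0$ held as a fixed external contour yields the polymer sum restricted to collections $\TB^k \ni \C_0$ with $\C_0 \in \TB^k_\Ext$, while the denominator is the unrestricted polymer expansion of $\Xi^k_{\eta,\Reg}$. The main obstacle I anticipate is the compatibility bookkeeping in case (iii): one must verify carefully that inner contours extracted from two different external contours remain separated by distance strictly greater than $1$ once the ambient external supports are at distance greater than $1$, using the fact that any lattice path between them must cross both external supports. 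Once this geometric check is in place, the rest of the argument is a routine rearrangement of finite sums and products.
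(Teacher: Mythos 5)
Your proposal is correct and follows essentially the same route as the paper, which offers no separate proof beyond the remark that applying the recursion \eqref{eq:recursion-pre} inductively to $\Reg = \Int_k \C$, together with the definition \eqref{eq:weight-function} of $w^{k_0}_\eta$, yields the polymer formula. Your induction on $|\Reg|$ with the case-by-case compatibility check (and the observation that the $\tilde{\PF}$ boundary condition buffers inner contours from the ambient supports) is precisely the argument the authors leave implicit.
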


\section{The Ding-Zhuang argument}\label{sec:Ding-zhuang}
\subsection{General analysis of the weight function}\label{sec:Ding-Zhuang-general}

To demonstrate stability under weak random fields, the core strategy of the Ding-Zhuang argument embodies the introduction of symmetric transformations $\tau^{k_0 \rightarrow k_1}_\Reg$ and $\bar{\tau}^{k_0 \rightarrow k_1}_\Reg$ acting on spin configurations and the random field configurations. Within the Pirogov-Sinai contours framework, our generalized Ding-Zhuang argument decomposes into steps targeting these three key objects:
\begin{enumerate}
    \item $D^k_\eta(\C)$, the modified excitation energy of the contour under additional external fields.
    \item $\tilde{\PF}^{k_1}_{\eta, \Reg} / \PF^{k_0}_{\tau^{k_0 \rightarrow k_1}_{\Reg}(\eta), \Reg}$, representing quasi-invariance under symmetric operations on the Hamiltonian.
    \item $\PF^{k_0}_{\tau^{k_0 \rightarrow k_1}_{\Reg^\prime}(\eta), \Reg}/\PF^{k_0}_{\eta,\Reg}$, the partition function ratio for local random field transformations under fixed boundary conditions.
\end{enumerate}

In the following, we denote the symmetric transformation by $\tau_\Reg$ for simplicity when the superscripts are contextually clear. The main estimates for the long-range order phase focus on achieving rapid decay in the probability of contour $\C$ occurrence. Specifically, the probability must decay at least exponentially to dominate the entropy effects arising from counting the contours, where contour cardinality grows at most exponentially with contour size. Here the exponential rates must be sufficiently large under both low temperature and weak random field conditions. To obtain satisfactory exponential decay, we design the procedure to address the three key objects:

\textit{Step 1: Estimates of $D^k_\eta(\C)$ and $\FSIR$.} To bound the first term $D^k_\eta(\C)$, we introduce the classical Peierls condition from the Peierls argument, which characterizes the excitation energy induced by a contour in the unperturbed Hamiltonian. 

\begin{definition}[Peierls Condition]\label{def:Peierls-Condition}
The unperturbed statistical system with Hamiltonian $H^k_{0,\Reg}$ satisfies the Peierls condition if: 
(i) the number of periodic local ground states is finite, and (ii) there exists $\rho>0$ such that for any contour $\C$ and finite region $\Reg \supset \sC$, $H_{0}(\C) - |\sC|\Ge \ge \rho |\sC|$.
\end{definition}

In the presence of random fields, however, the excitation energy typically fails to satisfy the Peierls condition almost surely. Specifically, random field fluctuations can alter the ground state from its unperturbed configuration with probability one in many cases. Nevertheless, for contours $\C$ with $0 \in \sC \cup \Int \C$, we establish that the Peierls condition holds with modified constant $\rho' < \rho$ with high probability. This probabilistic guarantee relies fundamentally on subgaussian concentration of the random field-induced energy changes. To formalize this stability, we define:

\begin{definition}[Fluctuation-stabilized contour]
A contour $\C$ is fluctuation-stabilized for the random field configuration $\eta$ if $|D^k_\eta(\C) - D^k_0(\C)|\le \rho|\sC|/4$. Denote by $\FSC$ the subset of the random field configuration $\eta$ for which every contour $\C$ satisfying $0\in \sC \cup \Int \C$ is fluctuation-stabilized.
\end{definition}

\textit{Step 2: Estimates of $\tilde{\PF}^{k_1}_{\eta, \Reg} / \PF^{k_0}_{\tau_{\Reg} \eta, \Reg}$ and $ \QISC$}. The second term is bounded by comparing the summands of $ \tilde{\PF}^{k_1}_{\eta, \Reg}$ to $\PF^{k_0}_{\tau_{\Reg} \eta, \Reg}$ using the local symmetry operations. Although the differing boundary requirements in their definitions cause inherent non-equivalence, this distinction enables analysis of  pseudo-symmetry. Consequently, a proper symmetry operation yield the bound. In addition, to utilize the tensorization property, the transformation $\eta \mapsto \tau_\Reg\eta$ is induced by operations on the quenched parameters $\omega$. In other words, $\tau_\Reg\eta (\omega):= \eta (\tau_\Reg \omega)$. Thus, the symmetry condition is formalized as follows.

\begin{definition}[Local symmetry]\label{def:local-symmetry}
For a statistical physical model with random fields, there exists a local symmetry for any pair of local Hamiltonians $H^{k_1}_{\eta, \Reg}$ and $H^{k_2}_{\eta, \Reg}$ if there exists a continuous configuration transformation $\bar{\tau}_{\Reg}: \SV^{\Reg\cup \ExB \Reg} \rightarrow \SV^\Reg$ and a continuous quenched parameter configuration transformation $\tau_\Reg: \QP^{\Nb \times \Reg \cup \ExB_2 \Reg} \rightarrow \QP^{\Nb \times \Reg \cup \ExB \Reg}$ for any finite region $\Reg$ such that:
\begin{enumerate}
    \item (Locality) For any spin configuration $x$ (quenched parameter configuration $\omega $), $(\bar{\tau}_\Reg x)_s$ ($(\tau_\Reg \omega)_s$) with $s \in \Reg$ ($s \in \Reg \cup \ExB \Reg$) depends only on $x_t$ ($\omega_t$) for any $|t-s|\le 1$.
    \item (Injectivity) $\bar{\tau}_\Reg$ is injective over the subspace of the spin configuration $x$ with $x_s = b^{k_1}$ for any $s \in \ExB \Reg \cup \InB \Reg$.
    \item (Energy quasi-invariance) For any spin configuration $x$ with $x_s = b^{k_1}$ for any $s \in \InB_2 \Reg \cup \ExB \Reg$ and any quenched parameter configuration $\omega$, $|H^{k_1}_{ \eta(\omega), \Reg}(x) - H^{k_2}_{\eta (\tau_{\Reg}\omega), \Reg}(\bar{\tau}_\Reg x)| \le \sum_\alpha \sum_{s \in \InB \Reg} |\eta^\alpha_s |+|(\tau_\Reg\eta)^\alpha_s|$. 
    \item (Regularity) $\tau_\Reg$ is a Lipschitz function with a Lipschitz constant $C_\tau$ with respect to each quenched parameter $\omega_s$ on which it depends. %with constant $C_\tau$ over any single quenched parameter $\omega_s$ in dependence on.
    \item (Measure quasi-invariance) For any quenched parameters $\omega$, there exists a subspaces $P_\Reg$, consisting of pairs of form $(\alpha, s) \in N_\alpha \times (\Reg \cup \ExB_2 \Reg)$ such that $N_{\alpha}\times  (\Reg \cup \ExB_2 \Reg) \setminus P_\Reg \subseteq N_\alpha \times (\ExB_2 \Reg\setminus \ExB \Reg)$ and the induced measurable transformation $\tau_\Reg: (  \QP^{P_\Reg}, \bigotimes_{P_\Reg} (\MoQP^\alpha)_s) \rightarrow ( \QP^{N_\alpha \times (\Reg \cup \ExB \Reg ) }, \bigotimes_{N_\alpha \times (\Reg \cup \ExB \Reg )} (\MoQP^\alpha)_s )$ is well-defined and measure invariant.
\end{enumerate}
\end{definition}

From the energy quasi-invariance condition and \eqref{eq:quenched-condition}, for $\eta \equiv 0$ ($\omega \equiv 0$), we have $H^{k_1}_{0,\Reg}(x) = H^{k_2}_{0,\Reg}(\bar{\tau}_\Reg x)$. The ground state definition implies $H^{k_1}_{0,\Reg}(b^{k_1}) = H^{k_2}_{0,\Reg}(b^{k_2})$, so $\bar{\tau}_\Reg(b^{k_1}) = b^{k_2}$ in $\Reg$. By the locality condition, $\bar{\tau}_\Reg(x)_s = b^{k_2}$ whenever $x_t = b^{k_1}$ for some $|t-s| \leq 1$. This provides a local transformation between ground states. Then By the injectivity condition, $(\bar{\tau}_\Reg x)_s = b^{k_2}$ for $s \in \InB_2 \Reg$ if $x_s = b^{k_1}$ for $s \in \InB_3 \Reg$. Returning to the estimates of the partition function ratio, the local symmetry operation defines an injective map from summands in $\tilde{\PF}^{k_1}_{\eta,\Reg}$ to those in $\PF^{k_0}_{\tau_{\Reg}\eta,\Reg}$, bounding the ratio via the energy quasi-invariance condition as 
\begin{equation}
    \frac{\tilde{\PF}^{k_1}_{\eta, \Reg}}{  \PF^{k_0}_{\tau_{\Reg} \eta, \Reg}} \leq \exp \left[ \frac{1}{T} \sum_\alpha \sum_{s \in \InB \Reg} |\eta^\alpha_s |+|(\tau_\Reg\eta)^\alpha_s| \right].
\end{equation}
 
In application, we consider local symmetries on internal regions of contours. Due to the lack of strict invertibility of the local symmetry, the quasi-invariance of energy is inevitable in many cases. As before, we cannot make the energy difference less than $\rho |\sC|$ almost surely. Nevertheless, we can still apply the subgaussian estimations analogously to confirm the quasi-invariance-stability with high probability, formalized as:

\begin{definition}[Quasi-invariance-stabilized contour]
A contour $\C$ is quasi-invariance-stabilized for the random field configuration $\eta$ if $\sum_\alpha \sum_{s \in \InB \Int \C} |\eta^\alpha_s |+|(\tau_\Reg\eta)^\alpha_s| \le \rho|\sC|/4$. Denote by $\QISC$ the subset of the random field configuration $\eta$ for which every contour $\C$ satisfying $0\in \sC \cup \Int \C$ is quasi-invariance-stabilized.
\end{definition}

\textit{Step 3: Estimates of $\PF^{k_0}_{\tau_{\Reg^\prime}\eta, \Reg}/\PF^{k_0}_{\eta,\Reg}$ and $\FSIR$.} The analysis of the third term $\PF^{k_0}_{\tau_{\Reg^\prime}\eta, \Reg}/\PF^{k_0}_{\eta,\Reg}$ constitutes the most technically demanding component of the Ding-Zhuang argument. We rewrite the problem in the form of the free energy difference
\begin{equation}\label{eq:diff-of-free-energy-fml}
\Delta_{\Reg^\prime} F^{k}_\Reg := T\left(\ln \PF^{k}_{\eta ( \tau_{\Reg^\prime} \omega),\Reg}-  \ln \PF^{k}_{\eta(\omega),\Reg } \right).
\end{equation}

In the estimates of probability of the contour occurrence, we focus the case $\Reg^\prime = \Int \C$, with the corresponding local transformation $\tau_{\Int \C}$ the composition of the transformations $\tau^{k \rightarrow k_0}_{\Int_k \C}$ for the pair of Hamiltonian $H^{k}_{\eta, \Int_k \C}$ and $H^{k_0}_{\eta, \Int_k \C}$. A similar yet more tricky stabilization of the internal region is expected. To formalize this notion, we introduce the following definition analogous to that of $\FSC$ as:

\begin{definition}[Fluctuation-stabilized internal region]
The internal region $\Int \C$ of a contour $\C$ is fluctuation-stabilized for the random field configuration $\eta$ if $\Delta_{\Int \C} F^{k}_\Reg \le  \rho|\sC|/4$ with respect to the local quenched parameter transformation $\tau_{\Int \C}$.
Denote by $\FSIR$ the subset of the random field configuration $\eta$ such that for all contour $\C$ with $0\in \sC \cup \Int \C$, the internal region is fluctuation-stabilized. 
\end{definition}

The major distinction in estimating $\Delta_{\Int \C} \tilde{F}^{k}_\Reg$ lies in its dependence on quenched parameters whose number scales superlinearly with $|\sC|$. This fundamentally contrasts with fluctuation-stabilized contours, where only $O(|\sC|)$ parameters are involved. To establish the high probability of $\FSIR$, we employ the classical chaining method for supremum estimation. This technique requires two key prerequisites:
(i) Subgaussian control of variations across domains $\Reg^\prime$ (Section \ref{sec:subgaussian})
(ii) Coarse-grained processes measuring geometric properties of internal regions of contours (Section \ref{sec:coarse-graining}).
\subsection{The subgaussian estimate and its applications}\label{sec:subgaussian}

In the subsequent proof, we frequently need to bound important quantities induced by the identically and independently distributed quenched parameters $\omega$. We aim to show that their values converge with high probability to a region around their mean. Specifically, this convergence occurs in a sub-gaussian manner. In this article, we only pick up the results on two typical cases of distributions over the single quenched parameter, the bounded distribution and the normal distribution. We introduce the following two theorems:

\begin{theorem}[McDiarmid's Inequality \textup{\cite[Theorem 3.11]{APC550}}]
\label{thm:bounded-concentration-origin}
Let $\{X_i\}_{i=1}^n$ be independent random variables. Let $f: \prod_{i=1}^n X_i \to \mathbb{R}$ be a function such that  
\begin{equation}
   D_f:= \max_i (\max_{\{X_j\}_{j\neq i}} (\max_{X_i} f(\{X_s\}) - \min_{X_i}f(\{X_s\}))) < \infty. 
\end{equation}
Then, $f(\{ X_i \})$ is subgaussian with variance $ND_f^2/4$. In particular, for any $t > 0$:
\begin{equation}
\Pb \left[\left| f(\{ X_i \} ) - \Ex f(\{ X_i \} )\right| \ge \lambda \right]\leq 2\exp\left( -\frac{2\lambda^2}{n D_f^2} \right).
\end{equation}
\end{theorem}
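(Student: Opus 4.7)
The plan is to prove this via the classical Doob martingale (bounded-differences) method: reduce the multivariable concentration statement to a coordinate-wise conditional moment-generating function estimate, where independence and the $D_f$-Lipschitz-in-each-coordinate hypothesis can be applied cleanly.

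First I would fix the natural filtration $\mathcal{F}_k = \sigma(X_1,\dots,X_k)$ and introduce the Doob martingale $Y_k = \Ex[f(X_1,\dots,X_n)\mid \mathcal{F}_k]$, so that $Y_0 = \Ex f$, $Y_n = f$, and the differences $Z_k := Y_k - Y_{k-1}$ telescope to $f - \Ex f$. Second, I would establish that each $Z_k$ lies almost surely in a conditional interval of length at most $D_f$: by independence of the $X_i$'s, conditioning on $\mathcal{F}_{k-1}$ factorizes the conditional expectation so that replacing $X_k$ by any alternative value changes $Y_k$ pointwise by at most the coordinate-wise oscillation of $f$, which is bounded by $D_f$. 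Third, I would apply Hoeffding's lemma (the MGF bound for a mean-zero variable supported in an interval of length $L$ gives $e^{\lambda^2 L^2/8}$) to $Z_k$ conditionally on $\mathcal{F}_{k-1}$ and iterate via the tower property:
\[
\Ex \exp\bigl(\lambda (f - \Ex f)\bigr) \;=\; \Ex \exp\!\Bigl(\lambda \sum_{k=1}^n Z_k\Bigr) \;\le\; \exp\!\Bigl(\frac{n \lambda^2 D_f^2}{8}\Bigr),
\]
which is exactly the subgaussian property with variance proxy $nD_f^2/4$. Finally, a Chernoff bound with optimal choice $\lambda = 4t/(nD_f^2)$, together with the symmetric argument for $\Ex f - f$ and a union bound, produces the stated two-sided inequality $2\exp(-2\lambda^2/(nD_f^2))$.

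The main obstacle is the conditional range bound in the second step: one must justify
\[
\operatorname*{ess\,sup}_{x,x'}\bigl|\Ex[f\mid X_1,\dots,X_{k-1},X_k=x] - \Ex[f\mid X_1,\dots,X_{k-1},X_k=x']\bigr| \le D_f,
\]
which is where independence of the $X_j$'s is used in an essential way, allowing the conditional expectation over $X_{k+1},\dots,X_n$ to be swapped with the pointwise substitution in $X_k$ and then dominated by the coordinate-wise hypothesis on $f$. Everything else---Hoeffding's lemma itself (a consequence of convexity of $y\mapsto e^{\lambda y}$ on a bounded interval combined with a Taylor bound on the log-MGF) and the Chernoff optimization---is routine and does not interact with the structure of the underlying statistical model.
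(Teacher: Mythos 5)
Your proposal is correct: it is the standard Doob‑martingale (bounded‑differences) proof of McDiarmid's inequality, with the key conditional range bound $|Z_k|$‑interval of length $\le D_f$ justified exactly where independence is needed, Hoeffding's lemma applied conditionally, and the Chernoff optimization $\lambda = 4t/(nD_f^2)$ yielding the stated constant $2\exp(-2t^2/(nD_f^2))$. The paper itself states this theorem without proof, importing it from the cited reference, and your argument coincides with the proof given there, so there is nothing to reconcile.
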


\begin{theorem}[Gaussian concentration, \textup{\cite[Theorem 3.25]{APC550}}]
\label{thm:gaussian-concentration}
Let $\{X_i\}_{i=1}^n$ be independent and identical Gaussian random variables with $\Ex(X_i)=0$ and $\Ex(X_i^2) = \epsilon^2$ for all $i$. Then, for any function $f: \mathbb{R}^n \to \mathbb{R}$, and for all $t \geq 0$,
\begin{equation}
\Pb \left[\left| f(\{ X_i \} ) - \Ex f(\{ X_i \} )\right| \ge \lambda \right] \leq 2\exp\left( -\frac{\lambda^2}{2nD^2_f\epsilon^2} \right),    
\end{equation}
where $D_f^2 = \frac{1}{n}\left\| \|\nabla f\|^2 \right\|_{\infty}$.
\end{theorem}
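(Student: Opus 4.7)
The plan is to deduce this Gaussian concentration inequality from the Gaussian logarithmic Sobolev inequality via Herbst's argument. First I would normalize the problem: rescaling $X_i \mapsto X_i/\epsilon$ and subtracting $\Ex f$ reduces the task to proving $\Pb(|f|\ge \lambda) \le 2\exp(-\lambda^2/(2L^2))$ for an $L$-Lipschitz, mean-zero function $f$ on $\mathbb{R}^n$ with standard Gaussian input, where $L^2 = nD_f^2\epsilon^2$ in the original notation, so $D_f$ plays the role of (a rescaled) Euclidean Lipschitz constant.

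The main analytic input I would take as a prerequisite is the Gaussian logarithmic Sobolev inequality: for the standard Gaussian measure on $\mathbb{R}^n$ and any sufficiently regular $h$,
\begin{equation}
\Ex[h^2 \log h^2] - \Ex[h^2]\log\Ex[h^2] \;\le\; 2\,\Ex\bigl[\,|\nabla h|^2\,\bigr].
\end{equation}
Apply this with $h = \exp(\lambda f/2)$, so that $|\nabla h|^2 = (\lambda^2/4)|\nabla f|^2 h^2 \le (\lambda^2 L^2/4)h^2$. Writing $F(\lambda) := \Ex[e^{\lambda f}]$, the inequality rearranges to
\begin{equation}
\lambda F'(\lambda) - F(\lambda)\log F(\lambda) \;\le\; \tfrac{1}{2}\lambda^2 L^2 F(\lambda).
\end{equation}
Dividing by $\lambda^2 F(\lambda)$ and setting $H(\lambda) := \log F(\lambda)$, this becomes the differential inequality $(H(\lambda)/\lambda)' \le L^2/2$. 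Since $H(\lambda)/\lambda \to \Ex f = 0$ as $\lambda \downarrow 0$, integration yields the sub-Gaussian moment generating function bound $H(\lambda) \le \lambda^2 L^2/2$. A Chernoff estimate with optimal $\lambda = t/L^2$ then produces the one-sided tail $\Pb(f \ge t) \le \exp(-t^2/(2L^2))$, and symmetry in $f \leftrightarrow -f$ together with a union bound completes the two-sided estimate; undoing the rescaling reinstates the $\epsilon^2$ factor.

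The hard part is establishing the log-Sobolev inequality itself, which is the genuinely nontrivial ingredient. I would prove it by tensorization, reducing the $n$-dimensional statement to the one-dimensional Gaussian case via subadditivity of the entropy functional over product measures, and then establishing the one-dimensional case through the Ornstein--Uhlenbeck semigroup following Gross's original argument: the entropy along the semigroup flow dissipates at a rate controlled by the Dirichlet form, and integrating this dissipation from $0$ to $\infty$ (where the semigroup converges to the mean) produces exactly the required bound. A secondary technical issue is that the log-Sobolev inequality formally requires $h$ to be smooth, while $f$ is only Lipschitz; this is resolved by a routine mollification (for instance, convolving with a narrow Gaussian kernel) and passing to the limit, using that Lipschitz functions are almost-everywhere differentiable with essentially bounded gradient, so that the right-hand side is stable under the regularization.
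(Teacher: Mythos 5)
The paper gives no proof of this statement---it is imported verbatim as \cite[Theorem 3.25]{APC550}---and your Herbst/log-Sobolev derivation is precisely the standard argument used there (and essentially everywhere) for the Tsirelson--Ibragimov--Sudakov Gaussian concentration inequality; your bookkeeping of the constant via $nD_f^2=\left\|\|\nabla f\|^2\right\|_\infty$ and the rescaling $X_i=\epsilon Z_i$ is correct. The only (routine) points a fully rigorous write-up would add are the a priori integrability of $e^{\lambda f}$ needed to apply the log-Sobolev inequality to $h=e^{\lambda f/2}$, and the implicit restriction that $f$ be Lipschitz, which the theorem statement itself omits.
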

For the bounded case, we further obtain a subgaussian coefficient that yields stronger probability suppression when the variation is small and the function to be estimated is Lipschitz continuous, which is crucial in the following analysis.

\begin{proposition}
\label{prop:rev-bounded-concentration}
Let $\{X_i\}_{i=1}^n$ be bounded independent random variables with $|X_i|\le 1$, $\Ex X^2_i \le \epsilon^2$ for all $i$. Let $f: \prod_{i=1}^n X_i \to \mathbb{R}$ be a Lipschitz function with constant $D_f$ for any variable $X_i$. Then, there exists $\nu:= \nu(\epsilon)>0$ such that $f(\{ X_i \})$ is subgaussian with variance $\nu n D_f^2/2$. In particular, for any $t > 0$:
\begin{equation}\label{eq:subgaussian-rev-bounded}
  \Pb \left[\left| f(\{ X_i \} ) - \Ex f(\{ X_i \} )\right| \ge \lambda \right] \le 2\exp\left( -\frac{ \lambda^2 }{2\nu n D_f^2 }\right), 
\end{equation}
where $\nu : = \nu(\epsilon)$ such that $\lim_{\epsilon \rightarrow 0} \nu  = 0$. 
\end{proposition}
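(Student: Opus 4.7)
The plan is to refine the martingale-based proof of McDiarmid's inequality (Theorem \ref{thm:bounded-concentration-origin}) by exploiting the additional variance bound $\Ex X_i^2 \le \epsilon^2$. I would begin by writing $f(X) - \Ex f(X) = \sum_{i=1}^n D_i$, where $D_i := \Ex[f \mid \mathcal{F}_i] - \Ex[f \mid \mathcal{F}_{i-1}]$ is the martingale difference with respect to the filtration $\mathcal{F}_i = \sigma(X_1,\ldots,X_i)$. By the independence of the $X_i$, each $D_i$ is conditionally a mean-zero function of $X_i$ alone given $\mathcal{F}_{i-1}$. Introducing an independent copy $X_i'$ of $X_i$ and applying the Lipschitz hypothesis, one obtains simultaneously $|D_i| \le 2 D_f$ almost surely and $\mathrm{Var}(D_i \mid \mathcal{F}_{i-1}) \le D_f^2 \Ex X_i^2 \le D_f^2 \epsilon^2$. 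The heart of the proof lies in combining these two ingredients into a single MGF estimate whose subgaussian constant vanishes with $\epsilon$.

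Next, writing $\sigma = D_f \epsilon$ and $K = 2 D_f$, I would establish the sharper conditional MGF bound. Expanding the exponential series and using $|D_i|^k \le D_i^2 K^{k-2}$ for $k\ge 2$ yields the Bennett-type bound
\begin{equation*}
\Ex[e^{s D_i} \mid \mathcal{F}_{i-1}] \le \exp\!\left( \frac{\sigma^2}{K^2}\, \phi(sK)\right), \quad \phi(u) = e^u - 1 - u,
\end{equation*}
which is subgaussian with parameter $\sim \sigma^2$ in the Gaussian regime $sK \ll 1$ but degrades super-exponentially for larger $s$. In parallel, the trivial pointwise inequality $e^{sD_i} \le e^{|s|K}$ gives $\log \Ex[e^{sD_i} \mid \mathcal{F}_{i-1}] \le |s|K$. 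Taking the minimum of the two log-MGFs, the crossover point $s_*$ where they coincide satisfies $\sigma^2 \phi(s_* K)/K^2 = s_* K$, and for small $\epsilon$ an asymptotic analysis gives $s_* K \approx \log(K^2/\sigma^2) = 2\log(2/\epsilon)$.

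To extract the subgaussian constant, I would require $\log \Ex[e^{s D_i} \mid \mathcal{F}_{i-1}] \le s^2 \nu D_f^2/2$ uniformly in $s$. A monotonicity check shows that the binding constraint occurs precisely at $s_*$: below $s_*$ the Bennett ratio $2\sigma^2 \phi(sK)/(K^2 s^2)$ is increasing in $s$, while above $s_*$ the trivial ratio $2|s|K/s^2 = 2K/|s|$ is decreasing in $s$. Both ratios equal $2K/s_*$ at $s_*$, so one may take
\begin{equation*}
\nu(\epsilon) = \frac{2K^2}{s_* K\, D_f^2} = \frac{C}{\log(2/\epsilon)}
\end{equation*}
for an explicit constant $C$, which manifestly satisfies $\nu(\epsilon) \to 0$ as $\epsilon \to 0$. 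Iterating the tower property over $i = 1,\ldots,n$ gives $\Ex e^{s(f - \Ex f)} \le \exp(s^2 \nu n D_f^2/2)$, and Markov's exponential inequality optimized in $s > 0$ (together with a symmetric argument for $-f$) yields the stated two-sided tail bound.

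The main obstacle is constructing a single subgaussian constant valid across all $s$. Bennett's bound alone blows up super-exponentially as $sK \to \infty$, while Hoeffding's uniform bound $\log \Ex[e^{sD_i}\mid\mathcal{F}_{i-1}] \le s^2 K^2/2$ gives a subgaussian constant of order $K^2 \sim D_f^2$ that is insensitive to $\epsilon$ and thus offers no improvement over McDiarmid's estimate. The two-regime combination with crossover at $s_* K \sim \log(1/\epsilon)$ is the crucial trick, and it reveals an inherent logarithmic rate $\nu(\epsilon) = O(1/\log(1/\epsilon))$. This slow decay cannot be improved in general: the two-point distribution $X_i = \pm 1$ with probability $\epsilon^2/2$ each (and $X_i = 0$ otherwise) already exhibits this logarithmic rate for its optimal subgaussian parameter, so the construction is essentially sharp.
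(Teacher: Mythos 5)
Your proof is correct, and while it shares the paper's overall skeleton (Doob martingale decomposition, a per-increment MGF bound with an $\epsilon$-dependent constant, then Azuma/Chernoff), the key technical step is obtained by a genuinely different route. The paper's Lemma \ref{lem:single-log-generation-bound} solves the extremal moment problem exactly: it invokes the theory of Chebyshev systems to reduce the worst case to a three-point distribution, runs a KKT analysis (Lemma \ref{lem:opt-3point-generation}) to identify the extremal law as the two-point distribution on $\{-\epsilon^2,1\}$ with weights $\{1/(1+\epsilon^2),\,\epsilon^2/(1+\epsilon^2)\}$, and then bounds $\tfrac{2}{\lambda^2}\ln \Ex e^{\lambda X}$ asymptotically to get $\nu = 1/|2\ln\epsilon|$; this exact MGF is then transferred to the martingale increments by rescaling. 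You instead bound the conditional increment MGF directly via the Bennett-type estimate $\exp\bigl(\sigma^2\phi(sK)/K^2\bigr)$ together with the trivial linear bound $|s|K$, and extract a uniform subgaussian constant from the crossover $s_*K \sim 2\log(2/\epsilon)$, using the monotonicity of $\phi(u)/u^2$ to locate the binding constraint. Your route is more elementary and self-contained (no extremal-measure or convex-optimization machinery), and it applies to the conditional increments without a separate rescaling step; the paper's route buys the sharp constant by pinning down the exact worst-case distribution, which your closing sharpness example essentially reproduces. Both yield $\nu(\epsilon)=O(1/\log(1/\epsilon))\to 0$, which is all the application requires.
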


Note that the same inequality \eqref{eq:subgaussian-rev-bounded} follows for Gaussian random variables with $\nu = \epsilon^2$ To verify this, we derive the following technical lemma.
\begin{lemma}\label{lem:single-log-generation-bound}
Given $\epsilon>0$. There exists $\nu:= \nu(\epsilon)>0$ such that for any random variable $X$ with $|X|\le 1 $, $\Ex X = 0$ and $\Ex |X|^2\le \epsilon^2$, 
\begin{equation}\label{eq:single-log-generation-bound}
    \Ex (e^{\lambda X}) \le e^{\frac{\nu \lambda^2 }{2}} 
\end{equation}
and $\lim_{\epsilon \rightarrow 0} \nu  = 0$.
\end{lemma}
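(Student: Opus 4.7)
The plan is to show $\log \Ex[e^{\lambda X}] \le \nu \lambda^2/2$ uniformly in $\lambda \in \mathbb{R}$ by combining a variance-sensitive bound that is sharp for moderate $|\lambda|$ with a boundedness-based bound that kicks in for large $|\lambda|$, and then optimizing the split point as a function of $\epsilon$.

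First I would Taylor-expand the exponential. Since $\Ex X = 0$ and $|X|^k \le X^2$ for every $k \ge 2$ (using $|X|\le 1$), one has
\begin{equation*}
\Ex[e^{\lambda X}] \;\le\; 1 + \epsilon^2 \sum_{k \ge 2} \frac{|\lambda|^k}{k!} \;=\; 1 + \epsilon^2 \bigl(e^{|\lambda|} - 1 - |\lambda|\bigr),
\end{equation*}
and hence $\log \Ex[e^{\lambda X}] \le \epsilon^2 (e^{|\lambda|} - 1 - |\lambda|)$ via $\log(1+u)\le u$. This bound has the desired behavior $\epsilon^2\lambda^2/2$ near the origin but blows up exponentially for large $|\lambda|$. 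To compensate, I would also record the trivial deterministic bound $\log \Ex[e^{\lambda X}] \le |\lambda|$ coming directly from $|X|\le 1$, which grows only linearly and is therefore the right tool beyond the breakdown of the Taylor bound.

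Next, introduce a threshold $\lambda_0 = \lambda_0(\epsilon) > 0$ and interpolate. Because $y \mapsto (e^y - 1 - y)/y^2$ is non-decreasing, the Taylor bound gives $\log \Ex[e^{\lambda X}] \le \epsilon^2 \lambda^2 (e^{\lambda_0} - 1 - \lambda_0)/\lambda_0^2$ whenever $|\lambda| \le \lambda_0$, while for $|\lambda| > \lambda_0$ the linear bound gives $\log \Ex[e^{\lambda X}] \le |\lambda| \le \lambda^2/\lambda_0$. Setting
\begin{equation*}
\nu(\epsilon) \;:=\; 2\max\!\Bigl( \epsilon^2 \tfrac{e^{\lambda_0}-1-\lambda_0}{\lambda_0^2},\; \tfrac{1}{\lambda_0} \Bigr)
\end{equation*}
yields $\Ex[e^{\lambda X}] \le e^{\nu \lambda^2/2}$ uniformly in $\lambda$.

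It only remains to choose $\lambda_0$ so that $\nu(\epsilon) \to 0$. A convenient choice (for $\epsilon$ small) is $\lambda_0 = \log(1/\epsilon)$: then $\epsilon^2 e^{\lambda_0} = \epsilon$, so the first entry of the max is $O(\epsilon/\log^2(1/\epsilon))$ and the second is $O(1/\log(1/\epsilon))$, and both vanish. The main obstacle is precisely the tension between the two bounds: the Taylor-type estimate captures the variance but is useless for large $\lambda$, while the $|\lambda|\le 1$ estimate is too crude near the origin to achieve $\nu \to 0$ on its own. The decisive observation is that the linear excess $|\lambda|$ can be absorbed into a quadratic envelope $\nu \lambda^2/2$ at the cost of a penalty $\nu \ge 2/\lambda_0$, which is negligible once $\lambda_0$ is sent to infinity, and the Taylor side can tolerate $\lambda_0$ growing logarithmically because $\epsilon^2 e^{\lambda_0}$ stays small. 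The resulting $\nu(\epsilon) = O(1/\log(1/\epsilon))$ decays only logarithmically, but this suffices for the lemma.
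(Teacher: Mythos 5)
Your proof is correct, and it takes a genuinely different and more elementary route than the paper. The paper invokes extremal-measure theory for Chebyshev systems to reduce the supremum of $\Ex(e^{\lambda X})$ over all admissible laws to a three-point variational problem (solved separately via KKT conditions in Lemma \ref{lem:opt-3point-generation}), obtaining the sharp Bennett-type bound $\Ex(e^{\lambda X}) \le (e^{-\lambda\epsilon^2}+\epsilon^2 e^{\lambda})/(1+\epsilon^2)$, and then extracts $\nu = 1/|2\ln\epsilon|$ by locating the maximizer of $\frac{2}{\lambda^2}\ln(\cdot)$. You instead combine the Taylor-expansion bound $\log\Ex(e^{\lambda X}) \le \epsilon^2(e^{|\lambda|}-1-|\lambda|)$ (valid because $\Ex X = 0$ and $|X|^k \le X^2$ for $k\ge 2$) with the crude bound $\log\Ex(e^{\lambda X}) \le |\lambda|$, splitting at a threshold $\lambda_0 = \log(1/\epsilon)$ and using the monotonicity of $y\mapsto(e^y-1-y)/y^2$ to fold both regimes into a single quadratic envelope. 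Each step is sound, and your choice of $\lambda_0$ delivers $\nu(\epsilon)=O(1/\log(1/\epsilon))$, the same asymptotic order as the paper's constant, which is all that downstream applications (Proposition \ref{prop:rev-bounded-concentration} and the probability bounds on $\FSC$, $\QISC$, $\FSIR$) require. What the paper's heavier machinery buys is the exactly optimal moment-generating-function bound before the final asymptotic step; what your argument buys is the elimination of the Chebyshev-system citation and the auxiliary variational lemma entirely. The only cosmetic gap is that your formula for $\lambda_0$ presumes $\epsilon$ small; for $\epsilon$ bounded away from $0$ you should simply default to Hoeffding's $\nu=1$, which is harmless since the lemma only demands $\nu>0$ with $\nu\to 0$ as $\epsilon\to 0$ (the paper's own proof likewise concludes only for sufficiently small $\epsilon$).
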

\begin{proof}
Since $f(x) =  e^{\lambda x}$ is a convex function, the maximal value of $\mu(e^{\lambda X})$ for $\mu$ a probability measure over a convex subspace of the probability measure space over $[-1,1]$ can only be reached in its extremal points. In addition, for the probability measure with fixed one and two moments, the support of any extremal measure contains only $3$ points, by results over Chebyshev system \cite[Chap IV, Sec. 6]{Tchebycheff-systems}.  As a result, the problem is reduced to solve a finite dimensional optimization problem shown in lemma \ref{lem:opt-3point-generation} and
\begin{equation}
\Ex (e^{\lambda X}) \le \frac{e^{-\lambda \epsilon^2} + \epsilon^2 e^{\lambda}}{1 + \epsilon^2}
\end{equation}For $\lambda \ge 0$, we are trying to find a function $\nu:= \nu (\epsilon)$ so that \eqref{eq:single-log-generation-bound} should be satisfied with the required properties. An equivalent problem is to calculate the upper bound of 
\begin{equation}\label{eq:opt-to-quartic-upperbound}
\begin{aligned}
& \max_{\lambda\ge 0} \left[ \frac{2}{\lambda^2}\ln\left( \frac{e^{-\lambda \epsilon^2} + \epsilon^2 e^{\lambda}}{1 + \epsilon^2} \right)  \right] \\
\le & \max_{\lambda\ge 0} \left[ \frac{2}{\lambda^2}\ln\left( 1+ \lambda \epsilon^2 \frac{e^{\lambda} - e^{-\lambda \epsilon^2}}{1 + \epsilon^2} \right)  \right] \\
\le &  \max_{\lambda\ge 0} \left[ \frac{2}{\lambda^2 } \ln \left(1 + e^\lambda\lambda^2 \epsilon^2\right) \right].
\end{aligned}
\end{equation}
We denote $w : = e^\lambda\lambda^2 \epsilon^2$ and $h(\lambda) =  \frac{2}{\lambda^2 } \ln \left(1 + e^\lambda\lambda^2 \epsilon^2\right) $. By calculating the derivative, it is obvious to find that $h(\lambda)$ reaches its maximal value for some $\lambda_0>0$ for only one point satisfying the equation \begin{equation}\label{eq:lambda-maximizer}
(2+\lambda)w = 2(1+w)\ln (1+w). 
\end{equation}
Since $\lambda = \frac{2(1+w)\ln(1+w)}{w}-2 > 2 w$ for $\lambda = \lambda_0$, we have $\lambda_0 \ge O(|\ln \epsilon|)$ for sufficiently small $\epsilon$. Then we have $\lambda_0 = 2 \ln w +o(1)$ and $\lambda_0 = -4 \ln  \epsilon + o(|\ln \epsilon|)$. Then we have $h(\lambda) \le 1/|4\ln\epsilon| + o(1/|\ln \epsilon|)$ and the conclusion follows for $\nu = 1/|2\ln\epsilon|$ for sufficiently small $\epsilon$.  
\end{proof}

\begin{proof}[Proof of Proposition \ref{prop:rev-bounded-concentration}]
The subgaussian inequality follows by Azuma's lemma as Theorem \ref{thm:bounded-concentration-origin}. Since $\Ex(f|\{X_i\}_{i=1}^{k+1}) - \Ex(f|\{X_i\}_{i=1}^k)$ has absolute value bounded by $D_f$ due to Lipschitz continuity and variance upper bounded by $D_f^2\epsilon^2$ due to Lemma \ref{lem:variation-boundedness}, we can derive the conditional subgaussian property with a uniform coefficient $D_f^2 \nu$ of Azuma's lemma by rescalling and Lemma \ref{lem:single-log-generation-bound}. Then the conclusion follows. 
\end{proof}

We first apply the two theorems to the quantity $D^k_\eta(\C) - D^k_0(\C)$ to obtain the lower bound of the probability of fluctuation-stabilized contours $\FSC$. 

\begin{proposition}\label{prop:prob-of-FSC}
Let $\delta > 0$ be given. Then there exists $\epsilon_1: = \epsilon_1(\delta) > 0$ such that if the maximal standard variation $\epsilon < \epsilon_1$, then for $d \ge 2$
\begin{equation}
\mathbb{P}(\eta \in \FSC) \geq 1 -  \delta.    
\end{equation}
\end{proposition}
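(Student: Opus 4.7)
The plan is to fix a contour $\C$, show that $D^k_\eta(\C) - D^k_0(\C)$ is strongly concentrated around $0$ via the subgaussian inequalities of Section \ref{sec:subgaussian}, and then union-bound over all contours with $0 \in \sC \cup \Int \C$, relying on a Peierls-type enumeration to dominate the combinatorial entropy.

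First I would rewrite, using \eqref{eq:difference-formula},
\begin{equation}
D^k_\eta(\C) - D^k_0(\C) = \sum_{s \in \sC} \sum_\alpha \eta^\alpha_s\bigl(g^\alpha_s(\C) - g^\alpha_s(b^k)\bigr),
\end{equation}
and view the right-hand side as a function $f_\C(\omega)$ of the underlying quenched parameters. Three properties are immediate from the assumptions of Section \ref{sec:notation}: (i) $\Ex f_\C = 0$, since $\Ex\eta^\alpha_s = 0$; (ii) $f_\C$ depends on at most $M|\sC|$ coordinates of $\omega$ for a geometric constant $M$, because each $\eta^\alpha_s$ reads $\omega$ only inside $B_{D_\alpha}(s)$; (iii) $f_\C$ has a per-coordinate Lipschitz constant bounded by some $L$, since $|g^\alpha_s(\C) - g^\alpha_s(b^k)| \le 2$, $|\nabla\eta^\alpha_s| \le (3D_\alpha)^{-d}$, and each $\omega^\beta_t$ enters only $O(1)$ of the terms.

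Applying Theorem \ref{thm:gaussian-concentration} in the Gaussian case and Proposition \ref{prop:rev-bounded-concentration} in the bounded case yields, uniformly,
\begin{equation}
\mathbb{P}\bigl(|D^k_\eta(\C) - D^k_0(\C)| \ge \rho|\sC|/4\bigr) \;\le\; 2\exp\!\left(-\frac{c_1 |\sC|}{\nu}\right),
\end{equation}
where $c_1 > 0$ depends only on $\rho$, $L$, $M$, and $\nu = \nu(\epsilon) \to 0$ as $\epsilon \to 0$ ($\nu = \epsilon^2$ in the Gaussian case, $\nu \sim 1/|\ln\epsilon|$ in the bounded case). Next I would bound the number of contours of fixed size $n$ with $0 \in \sC \cup \Int\C$: since $\sC$ is path-connected in $\Zd$, the standard Peierls enumeration provides at most $\kappa_0^n$ choices for the support once an anchor site is fixed; each support admits at most $\Ns^n$ spin labellings satisfying Definition \ref{def:r-config&contour}; and when $0 \in \Int\C$, anchoring on the first crossing of a ray from $0$ with $\sC$ contributes only a polynomial factor $(2n+1)^d$ that can be absorbed into a slightly larger base. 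Hence the total number of such contours of size $n$ is at most $\kappa^n$ for some $\kappa = \kappa(d,\Ns)$, and the union bound gives
\begin{equation}
\mathbb{P}(\eta \notin \FSC) \;\le\; \sum_{n \ge 1} \kappa^n \cdot 2\exp\!\left(-\frac{c_1 n}{\nu}\right),
\end{equation}
which is at most $\delta$ once $\nu$ is small enough that $c_1/\nu$ dominates $\ln\kappa$ by a margin determined by $\delta$. This defines the threshold $\epsilon_1(\delta)$ via the relation $\nu = \nu(\epsilon)$.

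The main obstacle is not the concentration step, which is a direct application of the tools of Section \ref{sec:subgaussian}, but the bookkeeping in the case $0 \in \Int\C$: the support of such a contour can lie arbitrarily far from the origin, so the enumeration must be organized by anchoring on the nearest intersection of a ray from $0$ with $\sC$. This is precisely where the hypothesis $d \ge 2$ enters, since in $d = 1$ no contour can enclose the origin at all. Once the polynomial prefactor from this anchoring is absorbed into the base $\kappa$, the geometric series converges and the proposition follows.
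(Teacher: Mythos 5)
Your proposal is correct and follows essentially the same route as the paper: fix a contour, write $D^k_\eta(\C)-D^k_0(\C)$ as a mean-zero, local, Lipschitz function of the quenched parameters $\omega$, apply Proposition \ref{prop:rev-bounded-concentration} (bounded case) or Theorem \ref{thm:gaussian-concentration} (Gaussian case) to get a tail bound of order $\exp(-c|\sC|/\nu)$, and union-bound over contours of each size $n$ using an exponential enumeration bound. Your anchoring device for contours with $0\in\Int\C$ (a ray from the origin meeting $\sC$, contributing a polynomial prefactor) is exactly the mechanism of the paper's Lemma \ref{lem:combinatorics-contour}, so no gap remains.
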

\begin{proof}
Decompose the event that $\eta$ fails to fluctuation-stabilize contours containing the origin:  
\begin{equation}
\{\eta \notin \FSC \} \subseteq \bigcup_{\C : 0 \in \sC \cup \Int \C}  \left\{ |D^k_\eta(\C) - D^k_0(\C)| > \frac{\rho}{4} |\C| \right\}.  
\end{equation}
Group contours by boundary length. Let $\CoC^k_0(n)$ is the collection of all possible contours $\C \in \CoC^k$ with $0\in \sC \cup \Int \C$ and $|\sC|=n$ and similarly define $\CoC_0(n)$. Then for fixed $\C \in \CoC^k_0(n)$, the energy difference deviation is:  
\begin{equation}
\Delta D^k_\eta(\C) : = D^k_\eta(\C ) - D^k_0(\C) = \sum_\alpha \sum_{s \in  \sC}  \eta_s^\alpha(\omega)(g^\alpha(\C)-g^\alpha(b^k)).  
\end{equation} Using the locality of $\eta_s^\alpha$ depending only on $\omega_t^\beta$ for $|s-t| \leq 1$, we already know that $\Delta D^k_\eta(\C)$ depends only on $\omega^\beta_s$ with $s \in \sC \cup \ExB \sC$. Since $E[\omega_0^\beta] = 0$, $\Delta H_\eta(\C)$ is Lipschitz-$2$ with $\omega^\beta_t$ by the boundedness and Lipschitz condition of $g^\alpha_s$. By the independence of $\omega $, the subgaussian concentration inequality follows from Proposition \ref{prop:rev-bounded-concentration} or Theorem \ref{thm:gaussian-concentration}:  
\begin{equation}
\Pb \left( |\Delta D^k_\eta(\C) - \Ex(\Delta D^k_\eta(\C)) | > \frac{\rho}{4} |\sC| \right) \leq \exp\left( - \frac{\rho^2|\sC|^2}{ 128 \nu|\sC \cup \ExB \sC|} \right),  
\end{equation} where $\nu: =\nu(\epsilon)$ as in the results of concentration. The expectation of $\Delta H_\eta$ is vanishing due to $\Ex \eta_s = 0$. Since $|\sC \cup \ExB \sC| \le 3^d |\sC |$, summing up the probability bound over the contours, we get  
\begin{equation}
\mathbb{P}(\eta \notin \FSC) \leq \sum_{n=1}^\infty | \CoC_0(n)| \exp\left( - \frac{\rho^2 n }{128 \nu3^d } \right) \leq  \sum_{n=1}^\infty  \exp \left[ - n \left(  
 \frac{\rho^2  }{128 \nu3^d   } -d\ln{9} -1- \ln\Ns \right) \right].      
\end{equation} The last inequality follows from Lemma \ref{lem:combinatorics-contour}. As a result, the right hand side is bounded by $\delta$ when $\epsilon$ is sufficiently small and then $\mathbb{P}(\eta \in \FSC) > 1 - \delta$.  
\end{proof}

Similar to the proof of Proposition \ref{prop:prob-of-FSC}, we can also derive the probability bound of $\QISC$ by estimating $\sum_\alpha \sum_{s \in \InB \Int \C} |\eta^\alpha_s |+|(\tau_\Reg\eta)^\alpha_s|$.
\begin{corollary}\label{cor:prob-of-QISC}
Let $\delta > 0$ be given. Then there exists $\epsilon_2: = \epsilon_2(\delta) > 0$ such that if the maximal standard variation $\epsilon < \epsilon_2$, then for $d \ge 2$
\begin{equation}
\mathbb{P}(\eta \in \QISC) \geq 1 - \delta.    
\end{equation}\end{corollary}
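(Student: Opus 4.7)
The plan is to mirror the proof of Proposition~\ref{prop:prob-of-FSC}, estimating the bad event for each contour and then summing via the combinatorial bound. Concretely, we decompose
\begin{equation}
\{\eta \notin \QISC\} \subseteq \bigcup_{\C \,:\, 0 \in \sC \cup \Int \C} \left\{ f_\C(\omega) > \tfrac{\rho}{4} |\sC| \right\},
\end{equation}
where $f_\C(\omega) := \sum_\alpha \sum_{s \in \InB \Int \C} \bigl(|\eta^\alpha_s(\omega)| + |\eta^\alpha_s(\tau_{\Int \C}\omega)|\bigr)$ and $\tau_{\Int \C}$ denotes the composition of the component-wise local-symmetry transformations $\tau^{k\to k_0}_{\Int_k \C}$ on the internal region.

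First I would control the mean. Unlike in Proposition~\ref{prop:prob-of-FSC}, the summands $|\eta^\alpha_s|$ are non-negative and do not have zero expectation. By Cauchy--Schwarz and $\Ex(\eta^\alpha_s)^2 \le \epsilon_\alpha^2$, one has $\Ex|\eta^\alpha_s| \le \epsilon_\alpha$. Measure invariance of $\tau_{\Int \C}$ (item~5 of Definition~\ref{def:local-symmetry}) gives that $\eta^\alpha_s \circ \tau_{\Int \C}$ has the same distribution as $\eta^\alpha_s$, so the same bound applies to the pulled-back term. Since sites in $\InB \Int \C$ are adjacent to $\sC$, $|\InB \Int \C| \le 3^d |\sC|$, and Cauchy--Schwarz in $\alpha$ yields
\begin{equation}
\Ex f_\C \le 2 \cdot 3^d |\sC| \sum_\alpha \epsilon_\alpha \le 2 \cdot 3^d \sqrt{|\Na|} \, \epsilon \, |\sC|.
\end{equation}
For $\epsilon$ small enough, this is at most $\rho|\sC|/8$, reducing the problem to estimating $\Pb(|f_\C - \Ex f_\C| > \rho|\sC|/8)$.

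Next I would verify that $f_\C$ is Lipschitz in each $\omega^\beta_t$ with an $O(1)$ constant. Using $|\nabla \eta^\alpha_s| \le (3D_\alpha)^{-d}$ together with the fact that $\eta^\alpha_s$ depends on $\omega^\beta_\cdot$ only in a $D_\alpha$-ball around $s$, the number of terms in $\sum_s |\eta^\alpha_s(\omega)|$ depending on a single $\omega^\beta_t$ is at most $(3D_\alpha)^d$, yielding per-variable Lipschitz constant $1$. For the pulled-back term, the chain rule introduces the factor $C_\tau$ from item~4 of Definition~\ref{def:local-symmetry}. Thus $f_\C$ is $O(1 + C_\tau)$-Lipschitz in each individual variable and depends on at most $O(|\sC|)$ variables in total. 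Proposition~\ref{prop:rev-bounded-concentration} (bounded case) or Theorem~\ref{thm:gaussian-concentration} (Gaussian case) then gives
\begin{equation}
\Pb\bigl( |f_\C - \Ex f_\C| > \tfrac{\rho}{8} |\sC| \bigr) \le 2 \exp\!\left(- \frac{c\, \rho^2 |\sC|}{\nu(\epsilon)}\right)
\end{equation}
for a dimensional constant $c > 0$ with $\nu(\epsilon) \to 0$ as $\epsilon \to 0$. Summing over contours by Lemma~\ref{lem:combinatorics-contour} produces
\begin{equation}
\Pb(\eta \notin \QISC) \le \sum_{n=1}^\infty \exp\!\left[-n\left(\tfrac{c\rho^2}{\nu(\epsilon)} - d\ln 9 - 1 - \ln \Ns\right)\right],
\end{equation}
which drops below $\delta$ once $\epsilon < \epsilon_2(\delta)$.

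The main obstacle is the non-vanishing mean $\Ex f_\C$: in Proposition~\ref{prop:prob-of-FSC} the variable $\Delta D^k_\eta(\C)$ is centered, so only a concentration estimate is needed, whereas here the absolute values force a separate \emph{a priori} bound on $\Ex f_\C$ that absorbs one factor of $\epsilon$ before the subgaussian tail contributes another. A subordinate subtlety is that $\tau_{\Int \C}$ is the composition of the transformations $\tau^{k\to k_0}_{\Int_k \C}$ acting on the disjoint components of $\Int \C$, so their Lipschitz and measure-invariance properties must be tracked through the composition; each individual factor satisfies Definition~\ref{def:local-symmetry}, so this is routine.
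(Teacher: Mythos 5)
Your proposal is correct and follows essentially the same route as the paper's proof, which itself is a terse adaptation of Proposition~\ref{prop:prob-of-FSC}: you handle the two differences the paper flags, namely the nontrivial mean (which the paper bounds by $O(\epsilon)$ per term via Lemma~\ref{lem:variation-boundedness} and the regularity condition, and you bound via Cauchy--Schwarz plus measure invariance — an equivalent estimate) and the slightly enlarged but still $O(|\sC|)$ set of quenched parameters, before applying the same subgaussian concentration with an adjusted threshold and summing with Lemma~\ref{lem:combinatorics-contour}. Your write-up is in fact more detailed than the paper's, but there is no substantive difference in method.
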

\begin{proof} 
The proof is reproduced almost verbatim by that of Proposition \ref{prop:prob-of-FSC}. One of the differences is a larger dependence of the estimated quantities over $\omega^\beta_s$ for $s \in \ExB_2 \Int \C \cup \InB_3 \Int \C$, the area of which is of order $O(\sC)$. The other difference is that the expectation $ \sum_\alpha \sum_{s \in \InB \Int \C} \Ex|\eta^\alpha_s |+\Ex|(\tau_\Reg\eta)^\alpha_s|$ is nontrivial. However, each term is at most proportional to its standard deviation, which is of order $O(\epsilon)$ due to Lemma \ref{lem:variation-boundedness} with constants given in the regularity condition. As a result, we can use the subgaussian concentration inequality with a smaller $\lambda$ and the conclusion follows.  
\end{proof}
For the probability bound of the fluctuation-stabilized internal regions $\FSIR$, The dependence of $\tau_\Reg$ over the boundary condition of the targeted Hamiltonian is omitted since the following argument is satisfied in all such cases. The main quantity focused on is the difference of the free energy $
\Delta_{\Reg^\prime} \tilde{F}^{k}_\Reg$ given in \eqref{eq:diff-of-free-energy-fml}. At first sight, the fluctuation depends on a larger number of quenched parameters than the case of fluctuation-stabilized contours and quasi-invariance-stabilized contours for every single local symmetric transformation. Then the same strategy cannot be applied here due to the relatively fat-tail distribution. For this case, to measure the maximal deviations of a collection of quantities, the chaining method plays its role and the following lemma is necessary to measure the relation of the difference between different quantities $\Delta \tilde{F}^k_\Reg$ and the difference between different defining domains $\Reg^\prime$.   
\begin{lemma}\label{lem:subgaussian-area-variation}
 For any finite region $\Reg_1$, $\Reg_2 \subseteq \Reg$, the quantity $\Delta_{\Reg_1} \tilde{F}^{k}_\Reg- \Delta_{\Reg_2} \tilde{F}^{k}_\Reg$ obeys the following subgaussian concentration inequality for some constants:
\begin{equation}\label{eq:subgaussian-area-variation}
\Pb(|\Delta_{\Reg_1} \tilde{F}^{k}_\Reg- \Delta_{\Reg_2} \tilde{F}^{k}_\Reg| \ge  \lambda )\le \exp\left( -\frac{\lambda^2}{\nu (|\Reg_1 \Delta \Reg_2|+ |\ExB_2 \Reg_1 \cup \ExB_2 \Reg_2|)} \right),  
\end{equation}
where $\Reg_1 \Delta \Reg_2$ denotes the symmetric difference between $\Reg_1$ and $\Reg_2$. 
\end{lemma}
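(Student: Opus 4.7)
Let $G(\omega) := T\bigl(\ln \PF^{k}_{\eta(\tau_{\Reg_1}\omega), \Reg} - \ln \PF^{k}_{\eta(\tau_{\Reg_2}\omega), \Reg}\bigr)$ denote the random variable whose tail we want to control. The plan is to apply the subgaussian concentration inequality of Proposition~\ref{prop:rev-bounded-concentration} (bounded case) or Theorem~\ref{thm:gaussian-concentration} (Gaussian case) to $G$, viewed as a function of the independent quenched parameters $\omega=\{\omega^\beta_s\}$. The measure quasi-invariance in Definition~\ref{def:local-symmetry}(5) applied in turn to each $\tau_{\Reg_i}$ gives $\Ex[\ln\PF^{k}_{\eta(\tau_{\Reg_i}\omega),\Reg}]=\Ex[\ln\PF^{k}_{\eta(\omega),\Reg}]$, hence $\Ex G=0$; the required bound then follows from deviation-from-mean concentration.

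The central task is to estimate, for each parameter $\omega^\beta_s$, the coordinate Lipschitz constant $D_G(\beta,s)$. Chaining three elementary Lipschitz facts delivers a uniform pointwise bound: (i) $|\partial_{\eta^\alpha_t}\ln\PF^{k}_{\eta,\Reg}|=|\langle g^\alpha_t\rangle|/T\le 1/T$ because $|g^\alpha_t|\le 1$; (ii) $\eta^\alpha_t$ is $(3D_\alpha)^{-d}$-Lipschitz in each $\omega^\beta_s$ by hypothesis, with dependence supported on $\{t:|t-s|\le D_\alpha\}$; and (iii) each $\tau_{\Reg_i}$ is $C_\tau$-Lipschitz by Definition~\ref{def:local-symmetry}(4). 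Composing the maps $\omega\mapsto\tau_{\Reg_i}\omega\mapsto\eta(\tau_{\Reg_i}\omega)\mapsto T\ln\PF^k$ produces a uniform bound $D_G(\beta,s)\le C_0$ with $C_0=C_0(T,C_\tau,\{D_\alpha,R_\alpha\})$.

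The decisive step is localization: to show that $D_G(\beta,s)$ vanishes (or is negligible) whenever $s$ lies outside a bounded neighborhood of the affected region
\[
A := (\Reg_1\Delta\Reg_2)\cup\ExB_2\Reg_1\cup\ExB_2\Reg_2.
\]
By locality of $\tau_{\Reg_i}$ (Definition~\ref{def:local-symmetry}(1)) and the uniformity of local symmetries in the interior, $\tau_{\Reg_1}\omega$ and $\tau_{\Reg_2}\omega$ coincide pointwise outside $A$. Using the interpolation identity
\[
G(\omega)=-\!\!\int_0^1\!\sum_{\alpha,t}\bigl\langle g^\alpha_t\bigr\rangle_{\eta^\lambda}\bigl[\eta(\tau_{\Reg_1}\omega)-\eta(\tau_{\Reg_2}\omega)\bigr]^\alpha_t\,d\lambda,\qquad \eta^\lambda:=(1-\lambda)\eta(\tau_{\Reg_2}\omega)+\lambda\eta(\tau_{\Reg_1}\omega),
\]
the prefactor $[\eta(\tau_{\Reg_1}\omega)-\eta(\tau_{\Reg_2}\omega)]^\alpha_t$ vanishes for $t$ outside the enlargement $A':=\{t:d(t,A)\le\max_\alpha D_\alpha\}$, restricting the effective support of the integrand to $A'$ and yielding $\sum_{\beta,s}D_G(\beta,s)^2\lesssim C_0^2\,|A'|\lesssim |\Reg_1\Delta\Reg_2|+|\ExB_2\Reg_1\cup\ExB_2\Reg_2|$.

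Substituting this sum into Proposition~\ref{prop:rev-bounded-concentration} or Theorem~\ref{thm:gaussian-concentration} then produces \eqref{eq:subgaussian-area-variation}, with $\nu$ absorbing $C_0^2$ together with the $\epsilon$-dependent subgaussian coefficient furnished by the single-variable bound (Lemma~\ref{lem:single-log-generation-bound}). The main anticipated obstacle is making the localization argument fully rigorous: although the prefactor in the interpolation is supported in $A'$, the Gibbs-average factor $\langle g^\alpha_t\rangle_{\eta^\lambda}$ depends on $\omega$ globally on $\Reg$, so changing $\omega^\beta_s$ with $s$ far from $A'$ can in principle alter the integrand through this factor. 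Controlling this implicit dependence — for instance via integration by parts in $\lambda$, a conditional expectation estimate on $\omega$ outside $A'$, or by exploiting the trivial bound $|\langle g^\alpha_t\rangle|\le 1$ together with the smallness of the prefactor so that its contribution to $\sum D_G^2$ scales with $|A'|$ rather than $|\Reg|$ — is the technical heart of the argument.
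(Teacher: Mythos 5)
Your overall setup --- reducing to $G(\omega)=T\bigl(\ln \PF^{k}_{\eta(\tau_{\Reg_1}\omega),\Reg}-\ln \PF^{k}_{\eta(\tau_{\Reg_2}\omega),\Reg}\bigr)$, using the $O(1)$ coordinate Lipschitz bound on the log-partition functions, and invoking Proposition~\ref{prop:rev-bounded-concentration} or Theorem~\ref{thm:gaussian-concentration} --- matches the paper, but the decisive step, localizing the coordinate Lipschitz constants $D_G(\beta,s)$ to a neighborhood of $A$, does not go through, and you have correctly flagged the reason yourself. For $s$ far from $A$ the two transformations agree near $s$, so $\partial_{\omega^\beta_s}G$ reduces to $\sum_{\alpha,t}\bigl(\langle g^\alpha_t\rangle_{\eta(\tau_{\Reg_1}\omega)}-\langle g^\alpha_t\rangle_{\eta(\tau_{\Reg_2}\omega)}\bigr)\partial_{\omega^\beta_s}\eta^\alpha_t$; the two Gibbs averages are taken with respect to fields that differ on $A'$, and at low temperature their difference need not be small --- no correlation-decay hypothesis is available anywhere in the paper to make it small. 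Consequently $\sum_{\beta,s}D_G(\beta,s)^2$ scales like $|\Reg|$ rather than $|A'|$, and the concentration inequality then yields a variance proxy proportional to $|\Reg|$, which is useless for the chaining argument. The interpolation identity only bounds $G$ itself (the sum over $t$ is supported in $A'$); it says nothing about how $G$ varies with $\omega^\beta_s$ for $s\notin A'$.

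The paper avoids this entirely by conditioning rather than localizing. It sets $\Reg'$ to be the region on which $(\tau_{\Reg_1}\eta)_s=(\tau_{\Reg_2}\eta)_s$ by locality, whose complement in $\Reg$ has cardinality controlled by $|\Reg_1\Delta\Reg_2|+|\ExB_2\Reg_1\cup\ExB_2\Reg_2|$. One freezes $\omega$ on $N_\alpha\times\Reg'$, observes that the \emph{conditional} mean of $G$ vanishes by the measure quasi-invariance condition, and applies the concentration inequality to $G$ as a function of only the $|\Reg\setminus\Reg'|$ remaining variables, each with Lipschitz constant at most $2$; the unconditional bound follows by integrating out the conditioning. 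The point is that $G$ is allowed to depend on the variables in $\Reg'$ --- they are simply fixed, contribute nothing to the variance proxy, and the conditional centering replaces the localization you were attempting. This is precisely the ``conditional expectation estimate'' you list as one possible fix; it is not an optional refinement but the essential mechanism, and without it your argument has a genuine gap.
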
\begin{proof}
Since $|\partial_{\omega_s }  \Delta_{\Reg_i} \tilde{F}^{k}_{\Reg} |\le \max_x \{|\partial_{\omega_s}H_{\eta, \Reg}(x)|+|\partial_{\omega_s}H_{\tau_{\Reg_i} \eta, \Reg}(x)|\} \le 2$ by the direct calculation, we can obtain the concentration inequality with a proper conditional expectation of the random variable. Set 
\begin{equation}\label{eq:area-fixed}
    \Reg^\prime = (\Reg \setminus (\Reg_1 \cup \ExB_2 \Reg_1 ) \cap \Reg \setminus (\Reg_2 \cup \ExB_2 \Reg_2)) \cup (\Reg_1  \cap \Reg_2).
\end{equation} Then due to the locality condition of the local symmetry transformation $\tau$, $(\tau_{\Reg_1 } \eta)_s = (\tau_{\Reg_2} \eta)_s$ for all $s \in \Reg^\prime$.  Since $\Ex(\Delta_{\Reg_1} \tilde{F}^{k}_\Reg- \Delta_{\Reg_2} \tilde{F}^{k}_\Reg| \QP^{N_\alpha \times \Reg^\prime})= 0$ also for any region $\Reg$ by the measure quasi-invariance condition and $N_\alpha \times \Reg' \subseteq P^{\Reg_1} \cap P^{\Reg_2}$, we have the following concentration inequality by applying Proposition \ref{prop:rev-bounded-concentration} or Theorem \ref{thm:gaussian-concentration}, 
\begin{equation}
    \Pb\left(|\Delta_{\Reg_1} \tilde{F}^{k}_\Reg- \Delta_{\Reg_2} \tilde{F}^{k}_\Reg | \ge  \lambda \left|  \QP^{N_\alpha \times \Reg^\prime}\right. \right) \le \exp\left(-\frac{\lambda^2}{\nu |\Reg \setminus \Reg^\prime|}\right).
\end{equation}
Then the conclusion follows by taking the unconditional expectation.
\end{proof}

\subsection{The chaining method}\label{sec:chaining}
To initiate the multiscale analysis of $\FSIR$, we employ the chaining method, which has become a popular tool in maximum estimation. To apply these methods, recall that $\CoC^k_0(n)$ is the collection of all contours $\C \in \CoC^k$ satisfying $0\in \sC \cup \Int \C$ and $|\sC|=n$. It is easy to find that for any connected subregions $\Reg$, we can define the internal region $\Int \Reg $ as for contours in Section \ref{sec:contour-concept}. Therefore, the supports $\sC$ of contours in $\CoC^k_0(n)$ are in the finite collection of all connected subregions $\Reg$ such that $0 \in \Reg \cup \Int \Reg$ and $|\Reg| = n$. Denote this collected by $\bar{\CoC}_0(n)$. For convenience, we also denote by the elements and their internal regions of $\bar{\CoC}_0(n)$ as $\sC$ and $\Int \C$. We then endow $\bar{\CoC}_0(n)$ with a metric space structure $(\bar{\CoC}_0(n),\bar{d})$, where the distance $\bar{d}(\sC_1, \sC_2)$ is defined by 
\begin{equation}\label{eq:distance_CoC}
\bar{d}(\sC_1, \sC_2)= \left\{
\begin{array}{cc}
    \nu \left[\sum_{k=1}^\Ng \left(|\Int_k \C_1 \Delta \Int_k \C_2|+ |\ExB_2  \Int_k \C_1 \cup \ExB_2 \Int_k \C_2 |\right) \right]^{1/2} & \sC_1 \neq \sC_2  \\
    0  & \sC_1 = \sC_2 
\end{array} \right..   
\end{equation}

The following definition of subgaussian process takes the central part in formalizing the chaining method:
\begin{definition}[subgaussian process]
\label{def:subgaussian-process}
A family of random variables $(X_u)_{u\in T}$, labelled by elements of a metric space $(T, d)$, is called subgaussian if there exists a constant $\sigma>0$ such that for all $\lambda>0$ and $u, v \in T$,
\begin{equation}\label{eq:def-subgaussian}
       \Pb \left( |X_u - X_v| \geq \lambda \right) \leq  2\exp{\left( \frac{-\lambda^2}{\sigma^2 \mathrm{d}(u,v)^2} \right)}.
\end{equation}
\end{definition}

It is easily seen from Lemma \ref{lem:subgaussian-area-variation} that $\Delta_{\Int \C} \tilde{F}^{k}_\Reg$ are sub-gaussian random variables with respect to $\sC \in \bar{\CoC}_0(n)$ with the metric $\bar{d}$. To apply the chaining method, we introduce a cover of the whole metric labeled by a subset called $r$-net: 
\begin{definition}[$r$-net and covering number]A set $N$ is called an $r$-net for $(T,d)$ if for every $t\in T$, there exists $\pi(t)\in N$ such that $d(t,\pi(t))\leq r$. The smallest cardinality of an $r$-net for $(T,d)$ is called the covering number, defined as 
\begin{equation}
    N(T,d,r):=\inf\{|N|:N\mbox{ is an } r \mbox{-net for }(T,d)\}.
\end{equation}
\end{definition}
A famous estimation then follows from the following theorem. It gives another subgaussian estimation useful in the maximum control. 
\begin{theorem}[Chaining tail inequality, \textup{ \cite[Theorem 5.29]{APC550}}]\label{thm:chaining-Tail-inequality}
Let $\{X_{t}\}_{t\in T}$ be a separable sub-gaussian process on the metric space $(T,d)$. 
Then for all $t_{0}\in T$ and $x\geq 0$
\begin{equation}\label{eq:chaining-tail-ineq}
\Pb\biggl{(}\sup_{t\in T}\{X_{t}-X_{t_{0}}\}\geq C\int_{0}^{\infty} \sqrt{\log N(T,d, r)}\, dr+x\biggr{)}\leq C\exp \left(-\frac{x^{2}}{C\, \mathrm{diam}(T)^{2}} \right),    
\end{equation}
where $C<\infty$ is a universal constant and $\mathrm{diam}(T)$ represents the diameter of $T$ over $d$.
\end{theorem}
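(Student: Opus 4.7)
The plan is to apply the classical dyadic chaining construction, since the stated inequality is exactly Dudley's chaining bound in tail form. I would fix a reference point $t_0 \in T$, let $r_n := 2^{-n} \mathrm{diam}(T)$ for $n \ge 0$, and choose a minimal $r_n$-net $N_n \subset T$ of size $N(T,d,r_n)$, setting $N_0 = \{t_0\}$. For each $t \in T$, let $\pi_n(t) \in N_n$ be a closest point (so $d(t,\pi_n(t)) \le r_n$). By separability, for each $t$ the sequence $\pi_n(t)$ can be chosen measurably and $X_{\pi_n(t)} \to X_t$ almost surely, which justifies the telescoping identity
\begin{equation}
X_t - X_{t_0} = \sum_{n \ge 1} \bigl(X_{\pi_n(t)} - X_{\pi_{n-1}(t)}\bigr).
\end{equation}

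Next I would control each chaining link uniformly. The pair $(\pi_{n-1}(t),\pi_n(t))$ takes at most $|N_{n-1}|\cdot|N_n| \le |N_n|^2$ distinct values, and by the triangle inequality $d(\pi_{n-1}(t),\pi_n(t)) \le r_{n-1}+r_n \le 3 r_n$. The subgaussian assumption \eqref{eq:def-subgaussian} together with a union bound then yields, for any $u_n \ge 0$,
\begin{equation}
\Pb\Bigl(\sup_{t \in T} |X_{\pi_n(t)} - X_{\pi_{n-1}(t)}| \ge 3\sigma r_n\bigl(\sqrt{2\log N(T,d,r_n)} + u_n\bigr)\Bigr) \le 2\exp(-u_n^2/2).
\end{equation}
Summing these events on the good set, the supremum of $|X_t - X_{t_0}|$ is bounded by $3\sigma \sum_n r_n \sqrt{2\log N(T,d,r_n)} + 3\sigma \sum_n r_n u_n$. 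The first sum is a dyadic Riemann sum for Dudley's entropy integral, so it is dominated by $C \int_0^\infty \sqrt{\log N(T,d,r)}\, dr$ after absorbing the geometric factor $r_{n-1}-r_n = r_n$ into the universal constant.

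The final and most delicate step is choosing $\{u_n\}$ to convert the residual sum $\sum_n r_n u_n$ into the single deviation $x$ while keeping $\sum_n \exp(-u_n^2/2)$ summable. The standard trick is to set $u_n^2 = a(n+1) + b$ for suitable $a>0$ with $b$ proportional to $x^2/\mathrm{diam}(T)^2$; this makes $\sum_n r_n u_n \lesssim \mathrm{diam}(T)(x/\mathrm{diam}(T)) = x$ (after adjusting constants, using $\sum_n r_n\sqrt{n} < \infty$), while $\sum_n \exp(-u_n^2/2) \le C \exp(-x^2/(C\,\mathrm{diam}(T)^2))$. Putting the two contributions together yields exactly \eqref{eq:chaining-tail-ineq}.

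The main obstacle is the bookkeeping in this last step: one must balance the per-level increments so that the geometric decay of $r_n$ offsets the growth $\sqrt{n}$ of $u_n$ and so that the tail exponent is governed by $\mathrm{diam}(T)^2$ rather than a sum of $r_n^2$. The rest of the argument (net construction, telescoping, subgaussian union bound) is essentially formal. Since this is a textbook result, I would simply cite \cite[Theorem 5.29]{APC550} rather than reprove it in the main text.
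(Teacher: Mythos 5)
The paper does not prove this statement; it is imported verbatim as \cite[Theorem 5.29]{APC550}, so there is no in-paper argument to compare against, and your decision to cite rather than reprove matches what the authors do. Your sketch is the standard dyadic chaining proof of Dudley's bound in tail form and is correct in outline; the one bookkeeping point to watch is that the contribution $\sqrt{a}\sum_n r_n\sqrt{n+1}\sim\sigma\,\mathrm{diam}(T)$ from the $\sqrt{a(n+1)}$ part of $u_n$ is not $\lesssim x$ (take $x=0$), but must instead be absorbed into the entropy-integral term via the elementary bound $\int_0^\infty\sqrt{\log N(T,d,r)}\,dr\geq c\,\mathrm{diam}(T)$ for $|T|\geq 2$ (the case $|T|=1$ being trivial).
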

To complete the argument, note that we can easily find a proper $\sC_0$ such that $\Delta_{\Int \C} \tilde{F}^{k}_\Reg = 0$ if $\Int \C_0 =\emptyset$. Then the only term necessary to be focused on is the covering number $N(T,d,r)$. To generate a $r$-net, we will conduct the following coarse-graining process to partition the whole space in a gradually coarser way. The final result will provide a proper upper bound of the covering number.    

\subsection{The coarse-grained process}\label{sec:coarse-graining}
To construct the covering with balls, as employed in the classical chaining method, we apply the coarse-graining argument originally introduced in \cite{FFS84}. For each $\ell>0$ and each contour ${\C}$, we associate a region $B_\ell(\C)$ that approximates the interior $\Int \C$ in a scaled lattice, where the scale factor increases with $\ell$. The covering then is achieved by ensuring that two contours with identical representations reside within a ball of fixed radius, which depends on $\ell$. For any $s \in \Zd$ and $\ell \geq 0$, we construct a tilting of the lattice by the $\ell$-cubes 
\begin{equation}
    B_{\ell}(s) := \left(\prod_{i=1}^d{\left[2^{\ell} s  , \ 2^{\ell}(s+1) \right)}\right)\cap \mathbb{Z}^d,
\end{equation}
centered at $2^{\ell}s + 2^{\ell-1} - \frac{1}{2}$ with side length $2^{\ell} -1$. When there is no ambiguity, we denote $B_\ell(s)$ simply by $B_{\ell}$. An arbitrary collection of $\ell$-cubes will be denoted $\CoB_\ell$ and $B_{\CoB_\ell}:= \cup_{B_\ell \in  \CoB_\ell} B_\ell$ is the region covered by $\CoB_\ell$. Fix $n, \ell \in \mathbb{Z}^+$ and $\sC \in \bar{\CoC}_0(n)$. An $\ell$-cube $C_{\ell}$ is admissible with respect to the region $\Reg$ if more than a half of its points are inside $\Reg$. In other words, we denote the set of admissible cubes with respect to $\Reg$ is
\begin{equation}
    \CoB_\ell(\Reg) := \{B_{\ell} : |B_{\ell}\cap \Reg| \geq \frac{1}{2}|\Reg|\}.
\end{equation}
We set $B_\ell(\Reg) := B_{\CoB_{\ell}(\Reg)}$, the region covered by the admissible cubes as a coarse-grained replica of $\Reg$.

To calculate the effects caused by the coarse-graining process, we need the following results to evaluate the difference in the area of the region and the boundary before and after the coarse-graining process. These arguments were first introduced without detailed proofs.
% These arguments are first researched without detailed proofs. 
To obtain complete proofs, see \cite{maia-2307.14150}.  

\begin{lemma}\label{Lem:Coarse-grained-1}
    Given $\Reg \subset \mathbb{Z}^d$, $\ell\geq 0$ and $U= B_{\ell}\cup B_{\ell}^\prime$ with $B_{\ell}$ and $B_{\ell}^\prime$ being two $\ell$-cubes sharing a face, there exists a constant $b_0:=b_0(d)$, dependent only on the dimension $d$, such that, if 
\begin{align}\label{Eq. U.condition}
    \frac{1}{2}|B_{\ell}| \leq |B_{\ell}\cap \Reg| \qquad \text{and} \qquad |B_{\ell}^\prime\cap \Reg|< \frac{1}{2}|B_{\ell}^\prime|
\end{align}
then $2^{\ell(d-1)}\leq b_0|\ExB \Reg\cap U|$.
\end{lemma}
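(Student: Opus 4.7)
My plan is to reinterpret the lemma as a discrete edge-isoperimetric inequality on the slab $U$. Since every site of $\ExB \Reg \cap U$ has at most $3^d - 1$ $L^\infty$-neighbors (the notion of nearest neighbor inherited from the paper's convention on distances), the external-boundary count satisfies
$$|\ExB \Reg \cap U| \geq \frac{E_U(\Reg)}{3^d - 1}, \qquad E_U(\Reg) := |\{(s,s') : s \in \Reg \cap U,\ s' \in U \setminus \Reg,\ |s-s'| = 1\}|.$$
It therefore suffices to prove $E_U(\Reg) \geq c(d) \cdot 2^{\ell(d-1)}$ for some constant $c(d) > 0$, since the lemma will then follow with $b_0(d) := (3^d - 1)/c(d)$.

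To lower bound $E_U(\Reg)$, I would slice $U$ perpendicular to the shared face into $2^{\ell+1}$ hyperplanes $U_j$ ($j = 0, \ldots, 2^{\ell+1} - 1$), each a $(d-1)$-cube of side $2^\ell$, with $U_0, \ldots, U_{2^\ell - 1} \subset B_\ell$ and $U_{2^\ell}, \ldots, U_{2^{\ell+1} - 1} \subset B_\ell'$. Writing $n_j := |\Reg \cap U_j| \in [0, 2^{\ell(d-1)}]$, the admissibility conditions translate into $\sum_{j < 2^\ell} n_j \geq 2^{\ell d - 1}$ and $\sum_{j \geq 2^\ell} n_j < 2^{\ell d - 1}$. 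Decomposing $E_U(\Reg) \geq \sum_j E_j^\perp + \sum_j B_j^\parallel$, where $E_j^\perp$ is the within-slice edge boundary of $\Reg \cap U_j$ in $U_j$ and $B_j^\parallel \geq |n_{j+1} - n_j|$ counts axial bichromatic crossings between consecutive slices, I split into two cases. Case 1: if there exist $j_0, j_1$ with $n_{j_0} > \tfrac{3}{4} \cdot 2^{\ell(d-1)}$ and $n_{j_1} \leq \tfrac{1}{4} \cdot 2^{\ell(d-1)}$, then by telescoping $\sum_j B_j^\parallel \geq \sum_j |n_{j+1} - n_j| \geq |n_{j_0} - n_{j_1}| > \tfrac{1}{2} \cdot 2^{\ell(d-1)}$. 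Case 2: otherwise all $n_j$ lie in $[0, \tfrac{3}{4} \cdot 2^{\ell(d-1)}]$ (or symmetrically in $(\tfrac{1}{4} \cdot 2^{\ell(d-1)}, 2^{\ell(d-1)}]$); a short calculation using the first-half (resp.\ second-half) sum constraint then forces at least $2^{\ell - 1}$ slices in the first (resp.\ second) half to have $n_j \in (\tfrac{1}{4} \cdot 2^{\ell(d-1)}, \tfrac{3}{4} \cdot 2^{\ell(d-1)}]$, and each such balanced slice contributes $E_j^\perp \geq c'(d-1) \cdot 2^{\ell(d-2)}$ by the $(d-1)$-dimensional edge-isoperimetric inequality in the cube, summing to $\tfrac{c'(d-1)}{2} \cdot 2^{\ell(d-1)}$.

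The main obstacle is the within-slice isoperimetric input required in Case 2: every subset of a $(d-1)$-cube of side $2^\ell$ with density in $[\tfrac{1}{4}, \tfrac{3}{4}]$ has edge boundary at least $c'(d-1) \cdot 2^{\ell(d-2)}$. This is a standard discrete isoperimetric fact, derivable either by citing Harper's theorem or Bollob\'as--Leader's compression theorem for boxes, or by an internal induction on $d$ running in parallel with the main slicing argument (the base case $d=1$ being immediate from the forced existence of a $\Reg/\Reg^c$ transition in the interval $U$). Unwinding the resulting recursion yields an effective constant $c(d)$ with $c(d) \geq \tfrac{1}{2} c'(d-1)$, and hence the finite $b_0(d)$ claimed in the lemma, depending only on the ambient dimension.
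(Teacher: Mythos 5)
Your argument is correct, and it is worth noting that the paper itself does not prove this lemma at all --- it states it and defers to the cited reference \cite{maia-2307.14150} (the statement descends from the coarse-graining appendix of \cite{FFS84}), so you have supplied a self-contained proof where the paper supplies none. Your structure is sound: the reduction $|\ExB\Reg\cap U|\ge E_U(\Reg)/(3^d-1)$ is valid because every bichromatic pair has its $\Reg^c$-endpoint in $\ExB\Reg\cap U$ and each such site absorbs at most $3^d-1$ pairs; the slicing bookkeeping is right ($U$ is a $2^{\ell+1}\times 2^\ell\times\cdots\times 2^\ell$ box, the admissibility hypotheses become the two half-sum constraints on the $n_j$, and $B_j^\parallel\ge|A_j\,\Delta\,A_{j+1}|\ge|n_{j+1}-n_j|$); the dichotomy is exactly the negation you state; and the counting in Case 2 does force at least $2^{\ell-1}$ balanced slices (e.g.\ in the sub-case $n_j\le\tfrac34 M$ for all $j$, if only $k$ first-half slices exceed $\tfrac14 M$ then $\tfrac12 2^\ell M\le \tfrac14 2^\ell M+\tfrac12 kM$, so $k\ge 2^{\ell-1}$). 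Note that the naive column argument --- counting lines of $U$ parallel to the shared-face normal that meet both $\Reg$ and $\Reg^c$ --- only yields $m\ge N/2-a>0$ with $a<N/2$, hence $m\ge 1$, not $m\gtrsim 2^{\ell(d-1)}$; so the within-slice isoperimetric input in Case 2 is genuinely needed and you were right not to skip it. That input (a subset of $[2^\ell]^{d-1}$ of density in $[\tfrac14,\tfrac34]$ has at least $c'(d-1)2^{\ell(d-2)}$ boundary edges) is standard; besides Harper or Bollob\'as--Leader, a two-line Loomis--Whitney argument suffices: some coordinate projection of $S$ has size at least $|S|^{(d-2)/(d-1)}$, at most $|S|/2^\ell$ lines in that direction lie wholly in $S$, and the difference counts mixed lines, each carrying a boundary edge; this gives an explicit $c'$ and hence an explicit $b_0(d)$. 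The only polish I would ask for is to write out that isoperimetric step (or the induction you sketch) rather than leaving it as a named obstacle, and to note that the degenerate case $\ell=0$ is covered by your Case 1.
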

Similarly, we can also derive the following proposition as in \cite{maia-2307.14150} with some minor changes.
\begin{proposition}\label{prop:Coarse-grained-2} For any finite region $\Reg$ and the corresponding coarse-grained regions $B_0(\Reg),\dots,B_k(\Reg)$ defined above, there exists constants $b_1:=b_1(d),b_2:=b_2(d)$ such that 
\begin{equation}\label{eq:coarse-grained-boundary}
    |\ExB B_\ell(\Reg)| \leq b_1|\ExB \Reg |
\end{equation}
and 
\begin{equation}\label{eq:coarse-grained-internal-region}
    |B_\ell(\Reg)\Delta B_{\ell+1}(\Reg)| \leq b_2 2^{\ell} |\ExB \Reg|
\end{equation}
for every $\ell\in\{0,\dots,k\}$.
\end{proposition}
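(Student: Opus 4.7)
The plan is to prove both estimates directly from Lemma \ref{Lem:Coarse-grained-1}, which converts a transition face --- a face shared between an admissible $\ell$-cube and a non-admissible one --- into at least $2^{\ell(d-1)}/b_0$ sites of $\ExB \Reg$ located in their union. Both bounds then reduce to counting transition faces and charging them against disjoint portions of $\ExB \Reg$.

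For (\ref{eq:coarse-grained-boundary}), I would observe that every site of $\ExB B_\ell(\Reg)$ lies in some non-admissible $\ell$-cube $B_\ell'$ that shares a face with an admissible $\ell$-cube $B_\ell$. Calling $(B_\ell, B_\ell')$ a transition pair, this shared face contributes at most $2^{\ell(d-1)}$ points to $\ExB B_\ell(\Reg)$. Applying Lemma \ref{Lem:Coarse-grained-1} to each $U = B_\ell \cup B_\ell'$ and summing gives
\begin{equation*}
|\ExB B_\ell(\Reg)| \;\leq\; \sum_{\text{transitions}} 2^{\ell(d-1)} \;\leq\; b_0 \sum_{\text{transitions}} |\ExB \Reg \cap U|.
\end{equation*}
Each point of $\ExB \Reg$ lies in exactly one $\ell$-cube and hence in at most $2d$ pairs $U$, so the right side is bounded by $2 d b_0 |\ExB \Reg|$, yielding (\ref{eq:coarse-grained-boundary}) with $b_1 = 2 d b_0$.

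For (\ref{eq:coarse-grained-internal-region}), I would partition $\Zd$ into $(\ell+1)$-cubes and call such a cube $C$ \emph{mixed} if it contributes nontrivially to $B_\ell(\Reg) \Delta B_{\ell+1}(\Reg)$. A short case analysis on whether $C$ is admissible at scale $\ell+1$, combined with the averaging identity $|C \cap \Reg| = \sum_{B_\ell \subset C} |B_\ell \cap \Reg|$ against the threshold $|C|/2$, shows that a mixed $C$ necessarily contains both at least one admissible and at least one non-admissible $\ell$-subcube. Since the $2^d$ children of $C$ form a connected face-graph, there is then a face-adjacent transition pair $U \subset C$, so Lemma \ref{Lem:Coarse-grained-1} gives $|\ExB \Reg \cap U| \geq 2^{\ell(d-1)}/b_0$. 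The $\ell$-subcubes of different $(\ell+1)$-cubes are disjoint, so the number $M_{\ell+1}$ of mixed cubes satisfies $M_{\ell+1}\, 2^{\ell(d-1)} \leq b_0 |\ExB \Reg|$. Since each mixed $C$ contributes at most $|C| = 2^{(\ell+1)d}$ points to the symmetric difference, we obtain
\begin{equation*}
|B_\ell(\Reg) \Delta B_{\ell+1}(\Reg)| \;\leq\; M_{\ell+1}\, 2^{(\ell+1)d} \;\leq\; b_0\, 2^{d}\, 2^{\ell}\, |\ExB \Reg|,
\end{equation*}
giving (\ref{eq:coarse-grained-internal-region}) with $b_2 = 2^{d} b_0$.

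The main obstacle, and the one place requiring genuine care, is the ``mixed cube $\Rightarrow$ transition pair'' step in the second part. In the admissible-$(\ell+1)$ case one must rule out the possibility that every $\ell$-subcube of $C$ is admissible (else $C$ is not mixed at all), and in the non-admissible-$(\ell+1)$ case one must rule out that every $\ell$-subcube is non-admissible; both rulings follow from the averaging against $|C|/2$, but only if one is careful that the admissibility criterion is stated in terms of $|B_\ell|/2$, as in Lemma \ref{Lem:Coarse-grained-1}, rather than $|\Reg|/2$. Once this is in place the remaining bookkeeping --- the $2d$-fold overcount in the boundary estimate and the disjointness of $\ell$-subcubes across parent cubes in the symmetric-difference estimate --- is routine.
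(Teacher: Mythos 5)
Your overall strategy is the right one and is essentially the argument the paper defers to in the cited reference: Lemma \ref{Lem:Coarse-grained-1} converts each face-adjacent admissible/non-admissible pair into at least $2^{\ell(d-1)}/b_0$ points of $\ExB\Reg$ inside that pair, and both estimates follow by counting such pairs. Your proof of \eqref{eq:coarse-grained-internal-region} is correct: averaging $|C\cap\Reg|$ over the $2^d$ children of an $(\ell+1)$-cube against the threshold $|C|/2$ does force a mixed cube to contain children of both types, the hypercube face-graph of the children then yields a face-adjacent transition pair inside $C$, the disjointness of distinct parent cubes makes the charging injective, and the arithmetic $M_{\ell+1}2^{(\ell+1)d}\le b_0 2^{d}2^{\ell}|\ExB\Reg|$ is right.

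The gap is in \eqref{eq:coarse-grained-boundary}. Because $\ExB$ is defined through the $L^\infty$ distance, a site $s\in\ExB B_\ell(\Reg)$ need not lie in a non-admissible cube sharing a $(d-1)$-face with an admissible one: it can sit in a cube that is only diagonally adjacent to the admissible region (take $d=2$, a single admissible cube $Q$ all four of whose face-neighbours are non-admissible, and $s$ the corner site of a diagonal neighbour of $Q$). Consequently your first display, $|\ExB B_\ell(\Reg)|\le\sum_{\text{transitions}}2^{\ell(d-1)}$, is not justified as written: the width-one slabs attached to the transition faces miss the edge and corner sites of the frame around the admissible region, and already for a single admissible cube one has $|\ExB B_\ell(\Reg)|=(2^\ell+2)^d-2^{\ell d}>2d\cdot 2^{\ell(d-1)}$. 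The repair is routine: cover $\ExB B_\ell(\Reg)$ by the $L^\infty$ frames of those admissible cubes having at least one non-admissible $L^\infty$-neighbour, bound each frame by $3^d 2^{\ell(d-1)}$, and assign to each such cube a face-adjacent transition pair found along a face-path inside its $3^d$ block of neighbouring cubes (such a pair exists because the path starts admissible and ends non-admissible); each transition pair is then used at most a $d$-dependent number of times and each point of $\ExB\Reg$ lies in at most $2d$ pairs, so \eqref{eq:coarse-grained-boundary} holds with $b_1=C(d)b_0$ for a combinatorial constant $C(d)$ rather than your $2db_0$. Since the proposition only asserts the existence of $b_1(d)$ this costs nothing, but the covering step does need to be stated correctly.
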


Then we can derive the following results considering the covering number:
\begin{corollary}\label{cor:covering-number}
For any $n \in \mathbb{Z}^+$, there exists $\ell_0(n)>0$ such that $B_\ell(\Reg) = \emptyset$ for any region $\Reg$ with $|\Reg|=n$. For any $n \in \mathbb{Z}^+$, any $\ell>0$ and any two regions $\Reg_1,\Reg_2$ such that $\max\{|\ExB\Reg_1|,|\ExB\Reg_2|\} \le n $, $B_\ell(\Reg_1)=B_{\ell}(\Reg_2)$, there exists a constant $b_3:=b_3(d)>0$ such that 
\begin{equation}\label{eq:region-degeneration}
        d(\Reg_1,\Reg_2)\leq \nu b_3 2^{\frac{\ell}{2}} n^{\frac{1}{2}}. 
\end{equation}
where $d(\Reg_1,\Reg_2):= \nu|\Reg_1 \Delta \Reg_2 |^{1/2}$ represents the metric structure over the subset of $\Zd$. Subsequently, given $B_\ell(\bar{\CoC}_0(n)): = \{ \{B_\ell(\Int_k \C )\}^{\Ng}_{k=1}| \sC \in \bar{\CoC}_0(n) \}$ , the covering number of \((\bar{\CoC}_0(n), \bar{d})\) can be bounded as
\begin{equation}\label{eq:covering-number}
N(\bar{\CoC}_0(n), \bar{d},  \nu b_3 2^{\frac{\ell}{2}} n^{\frac{1}{2}}) \le |B_\ell(\bar{\CoC}_0(n))|
\end{equation}
\end{corollary}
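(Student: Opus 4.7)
I will establish the three assertions in order, and the substantive work will rest on telescoping Proposition~\ref{prop:Coarse-grained-2}.

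For the first assertion, observe that each $\ell$-cube has cardinality $|B_\ell|=2^{\ell d}$. If $|\Reg|=n$, the admissibility requirement $|B_\ell\cap\Reg|\ge 2^{\ell d-1}$ becomes impossible as soon as $2^{\ell d-1}>n$, so any $\ell_0(n)>\lceil\log_2(2n)/d\rceil$ suffices: then $\CoB_\ell(\Reg)=\emptyset$ for all $\ell\ge\ell_0(n)$, and consequently $B_\ell(\Reg)=\emptyset$.

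For the second assertion, the key observation is that a $0$-cube is a singleton, so $B_0(\Reg)=\Reg$. Telescoping the bound \eqref{eq:coarse-grained-internal-region} of Proposition~\ref{prop:Coarse-grained-2} gives
$$|\Reg\,\Delta\,B_\ell(\Reg)|\le\sum_{j=0}^{\ell-1}|B_j(\Reg)\,\Delta\,B_{j+1}(\Reg)|\le b_2|\ExB\Reg|\sum_{j=0}^{\ell-1}2^j\le b_2\,2^\ell|\ExB\Reg|.$$
Under the hypotheses $B_\ell(\Reg_1)=B_\ell(\Reg_2)$ and $\max\{|\ExB\Reg_1|,|\ExB\Reg_2|\}\le n$, the triangle inequality for symmetric differences yields
$$|\Reg_1\,\Delta\,\Reg_2|\le|\Reg_1\,\Delta\,B_\ell(\Reg_1)|+|B_\ell(\Reg_2)\,\Delta\,\Reg_2|\le 2b_2\,2^\ell n,$$
so $d(\Reg_1,\Reg_2)\le\nu\sqrt{2b_2}\,2^{\ell/2}n^{1/2}$, proving the assertion with $b_3:=\sqrt{2b_2}$ up to a cosmetic enlargement needed below.

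For the third assertion, for each element $\{B_\ell^{(k)}\}_{k=1}^{\Ng}$ of $B_\ell(\bar{\CoC}_0(n))$ choose a single representative $\sC^\pi\in\bar{\CoC}_0(n)$ with $B_\ell(\Int_k\C^\pi)=B_\ell^{(k)}$ for every $k$. Given an arbitrary $\sC\in\bar{\CoC}_0(n)$, take the representative $\sC^\pi$ sharing its coarse-grained image. Since $\ExB\Int_k\C\subseteq\sC$ (and similarly for $\C^\pi$), we have $|\ExB\Int_k\C|\le|\sC|=n$, so part~2 applies to each pair and bounds $|\Int_k\C\,\Delta\,\Int_k\C^\pi|\le 2b_2\,2^\ell n$ for every $k$. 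The second-layer contribution is absorbed via the elementary inequality $|\ExB_2\Reg|\le c_d|\ExB\Reg|$, which yields $|\ExB_2\Int_k\C\cup\ExB_2\Int_k\C^\pi|\le 2c_d n$. Summing both terms over the $\Ng$ internal components and taking the square root produces $\bar d(\sC,\sC^\pi)\le\nu\sqrt{2\Ng(b_2+c_d)}\,2^{\ell/2}n^{1/2}$, so after enlarging $b_3$ by the dimensional factor $\sqrt{\Ng(1+c_d/b_2)}$ the chosen representatives form the desired $\nu b_3\,2^{\ell/2}n^{1/2}$-net; the cardinality bound \eqref{eq:covering-number} is immediate.

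\textbf{Main obstacle.} The substance of the estimate is contained in Proposition~\ref{prop:Coarse-grained-2}, which is invoked as a black box. The remaining work is bookkeeping: identifying $B_0(\Reg)$ with $\Reg$, telescoping correctly, and threading the $\Ng$ internal components and the thickened boundary $\ExB_2$ through the composite metric $\bar d$ without inflating the $2^{\ell/2}n^{1/2}$ scaling. No further delicate estimate is required.
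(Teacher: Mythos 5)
Your proof is correct and follows essentially the same route as the paper: telescoping Proposition~\ref{prop:Coarse-grained-2} across scales starting from $B_0(\Reg)=\Reg$, a triangle inequality, and selecting one representative contour per coarse-grained image $\{B_\ell(\Int_k\C)\}_{k=1}^{\Ng}$. The only (harmless) deviations are that you prove the first claim directly from the cube cardinality rather than from \eqref{eq:coarse-grained-boundary}, and that bounding each $|\ExB\Int_k\C|\le n$ separately gives your $b_3$ a spurious $\sqrt{\Ng}$ factor, which the paper avoids by using $\sum_k|\ExB\Int_k\C|\le|\InB\sC|\le n$ so that $b_3$ depends only on $d$ as stated.
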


\begin{proof}
The first statement follows easily from \eqref{eq:coarse-grained-boundary} since if $B_\ell(\Reg) \neq \emptyset$ then $|\ExB B_\ell(\Reg)|\ge (3^d-1) 2^{\ell(d-1)} $ which should be greater than $(3^d-1)b_1n = (3^d-1)b_1 |\Reg|\ge b_1|\ExB \Reg| $ for $\ell > \ell_0:= \ln(b_1 n)/(d-1)\ln (2)$.

The second statement follows a simple application of the triangular inequality, since 
\begin{equation}
    d(\Reg_1,\Reg_2) \leq d(\Reg_1 ,B_\ell(\Reg_1)) + d(\Reg_2,B_\ell(\Reg_2)) + d(B_\ell(\Reg_1), B_\ell(\Reg_2))
\end{equation} and 
\begin{equation}
  \begin{aligned}
        d(\Reg_i,B_\ell( \Reg_i)) &\leq \sum_{i=1}^\ell d(B_i(\Reg_i),B_{i-1}(\Reg_i)) = \sum_{i=1}^\ell \nu\sqrt{B_i(\Reg_i)\Delta B_{i-1}(\Reg_i)} \\
        & \leq \sum_{i=1}^\ell \nu\sqrt{b_2} 2^{\frac{i}{2}} \sqrt{|\ExB\Reg_i|}  \leq \nu\sqrt{b_2}\sqrt{2}(\sqrt{2}+1)2^{\frac{\ell}{2}} \sqrt{|\ExB\Reg_i|} ,
\end{aligned}
\end{equation}where in the second to last equation \eqref{eq:coarse-grained-internal-region} is used. Since $B_\ell(\Reg_1)=B_\ell(\Reg_2)$, $d(\Reg_1, \Reg_2) \le 2(\sqrt{2}+1)\sqrt{2b_2}\nu2^{\frac{\ell}{2}} \sqrt{n}$. Then the second statement follows by taking ${b_3 := 2(\sqrt{2}+1)\sqrt{2b_2}}$. 

For the last statement, since 
\begin{equation}
\begin{aligned}
    \bar{d}(\sC_1, \sC_2) \le & \nu b_3 2^{\frac{\ell}{2}} \max \left\{ \left(\sum_k |\ExB\Int_k \C_1| \right)^{\frac{1}{2}}, \left(\sum_k|\ExB\Int_k \C_2| \right)^{\frac{1}{2}}\right\}+ \nu (2n5^{d} )^{\frac{1}{2}} \\
    \le & \nu (b_3 + \sqrt{2} 5^{\frac{d}{2}}) 2^{\frac{\ell}{2}} n^{\frac{1}{2}}
\end{aligned}
\end{equation} if $B_\ell (\Int_k \C_1 ) = B_\ell(\Int_k \C_2)$ for all $k$ by applying the same argument for the second statement. Then pick up an element $\sC \in \bar{\CoC}_0(n)$ corresponding to every distinct $k$-tuple $\left(B_\ell(\Int_k \C )\right)^{\Ng}_{k=1}$ and the covering can be constructed with the radius $ \nu b_3 2^{\frac{\ell}{2}} n^{\frac{1}{2}}$ by enlarging $b_3$ by $\sqrt{2} 5^{\frac{d}{2}}$. 
\end{proof}

In the next proposition we bound $|B_\ell(\bar{\CoC})|$.
\begin{proposition}\label{prop:entropy-control}
    There exists a constant $b_4:= b_4(d)$ such that, for any $n\in\mathbb{N}$, \begin{equation}\label{eq:entropy-control}
        |B_\ell(\bar{\CoC}_0(n))|\leq \exp{\left\{\frac{b_4\ell n}{2^{\ell(d-1)}}\right\}},
    \end{equation}
\end{proposition}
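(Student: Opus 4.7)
The plan is to translate the enumeration of tuples $\{B_\ell(\Int_k \C)\}_{k=1}^{\Ng}$ into a combinatorial counting problem on the rescaled lattice whose vertices are $\ell$-cubes, combining the geometric control of Proposition \ref{prop:Coarse-grained-2} with a Peierls-type enumeration of sets of bounded perimeter.

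First, I would bound the total coarse-grained perimeter of a tuple. Since the external boundaries $\ExB \Int_k \C$ are pairwise disjoint subsets of the connected support $\sC$, one has $\sum_{k=1}^{\Ng} |\ExB \Int_k \C| \le |\sC| = n$. Combined with the estimate $|\ExB B_\ell(\Int_k \C)| \le b_1 |\ExB \Int_k \C|$ from Proposition \ref{prop:Coarse-grained-2}, this yields $\sum_{k=1}^{\Ng} |\ExB B_\ell(\Int_k \C)| \le b_1 n$. Passing to the rescaled lattice in which each vertex represents an $\ell$-cube, each boundary cube of $B_\ell(\Int_k \C)$ contributes at least one face of area $2^{\ell(d-1)}$ to $\ExB B_\ell(\Int_k \C)$; therefore the aggregate perimeter on the rescaled lattice is bounded by $c_0 b_1 n / 2^{\ell(d-1)}$ for a dimensional constant $c_0$.

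Next, I would apply a standard Peierls-type enumeration on the rescaled lattice, in analogy with Lemma \ref{lem:combinatorics-contour}: the number of connected subsets of $\Zd$ containing a fixed anchor and with perimeter at most $p$ is bounded by $c_1^p$ for a dimensional constant $c_1$. Since $\sC$ is connected with $0 \in \sC \cup \Int \C$, the $\ell$-cube containing the origin supplies a natural anchor for $B_\ell(\sC \cup \Int \C)$. Each connected component of each $B_\ell(\Int_k \C)$ can then be anchored relative to the main region, and labeled by one of the $\Ng$ ground-state indices. The product of the perimeter bound, the anchor choices, and the label choices produces the overall count.

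The main obstacle I anticipate is extracting the correct linear $\ell$ factor in the exponent. A direct perimeter enumeration gives $\exp(O(n / 2^{\ell(d-1)}))$ without any $\ell$; the additional factor of $\ell$ emerges from the positional entropy of locating the $O(n / 2^{\ell(d-1)})$ connected components within an enveloping region whose linear extent scales polynomially with $2^\ell$ and $n$, requiring $O(\ell)$ bits per component. I would carefully combine the perimeter count, the anchoring enumeration, and the $\Ng$-labeling, absorbing the logarithm of the positional volume into the constant $b_4 = b_4(d)$, to yield the stated bound.
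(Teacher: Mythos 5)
Your overall strategy coincides with the paper's: bound the number of coarse-grained boundary cubes by $O(n/2^{\ell(d-1)})$ via Lemma \ref{Lem:Coarse-grained-1} and Proposition \ref{prop:Coarse-grained-2}, run a Peierls-type enumeration of connected coarse-grained boundary components with fixed anchors and sizes, and attribute the extra factor of $\ell$ to the positional entropy of the anchors. The first two ingredients are fine. The gap is in the third, which you yourself flag as the crux but then resolve incorrectly: you propose to locate the $O(n/2^{\ell(d-1)})$ components inside an enveloping region whose linear extent is polynomial in $2^{\ell}$ \emph{and} $n$, and to ``absorb the logarithm of the positional volume into the constant $b_4(d)$.'' That logarithm is $O(d(\ell+\ln n))$ per component, so this route yields a bound of order $\exp(C n(\ell+\ln n)/2^{\ell(d-1)})$. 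The $\ln n$ term cannot be absorbed into a dimensional constant, and for small $\ell$ it dominates $\ell$, so the stated estimate \eqref{eq:entropy-control} does not follow.

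The missing idea is that the anchors must be localized \emph{relative to each other}, not relative to a global box. The paper's Step 3 picks one anchor $s_i$ per component, associates to each a nearby point $t_i\in\InB\Int\C$, and uses a DFS traversal of the connected support $\sC$ (as in Lemma \ref{lem:combinatorics-contour}) to order the anchors so that the total consecutive displacement satisfies $\sum_i d_i \le C_0 n$ with $d_i=|s_i-s_{i-1}|$. For a \emph{fixed} displacement sequence the number of placements is $\prod_i(2d_i)^d$, and by AM--GM this is maximized when all $d_i$ equal $d^*\le C' 2^{\ell(d-1)}$, giving $\exp(O(\ell)\cdot n/2^{\ell(d-1)})$; a separate stars-and-bars count over the displacement sequences $\{d_i\}$ summing to at most $C_0 n$ contributes another factor of the same form. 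It is exactly this chaining-plus-convexity step that replaces $\ln n$ by $\ell$, and without it your argument does not close.
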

\begin{proof}
The idea of the proof is similar to that of the appendix in \cite{FFS84}. See also \cite{maia-thesis}. For each $B_\ell(\Int_k \C)$, we denote $E^k_\ell(\C) := \ExB B_\ell(\Int_k \C )$. We might write $E^k_\ell$ instead of $E^k_\ell(\C)$ for simplicity. Also, denote the connected components of $E^k_\ell$ by $E^{k,(i)}_\ell$ for $1\le i\le m_k$, where $m_k$ represents the number of these connected components. Also denote $I^k_\ell:= \InB B_\ell(\Int_k \C )$. By Lemma \ref{Lem:Coarse-grained-1}, there are at least $b_0^{-1}2^{\ell (d-1)}$ points of $\ExB\Int_k \C$ in $B_\ell\cup B_\ell^\prime$ for any pair $(B_\ell, B'_\ell)$ such that $B_\ell \subset I^k_\ell$ and $B'_\ell\subset E^k_\ell$. Hence $m_k \le \frac{b_0 |\ExB \Int_k \C |}{2^{\ell(d-1)}}$ and the number of connected components for $\{E^k_\ell\}^{\Ng}_{k=1}$ is bounded by $\frac{b_0 n}{2^{\ell(d-1)}}$. Also, $|E^k_\ell| \le b_1 |\ExB \Int_k \C| \le b_1 n$ by Proposition \ref{prop:Coarse-grained-2}. 

We will estimate $|B_\ell(\bar{\CoC}_0(n))|$ by the following steps:

\textit{Step 1: Bound the contour number estimation with fixed inner points and length.}
we take $M_n:=\left\lfloor\frac{b_0n }{2^{\ell(d-1)}}\right\rfloor$. It is obvious that for $M_n = 0$, $B_\ell(\bar{\CoC}_0(n)) = \{\emptyset\}$ and the conclusion follows trivially. Thus we suppose $M_n \ge 1$. Fixed a set of points $\{s_i\}_{i=1}^{M_n}$ with $s_i \in 2^\ell\mathbb{Z}^d$, and $\{l_i\}_{i=1}^{M_n} $with $l_i \in 2^{\ell (d-1)}\mathbb{Z^+}\cup \{0\}$, set $N_\C(\{s_i\}_{i=1}^{M_n},\{l_i\}_{i=1}^{M_n})$ as the number of coarse-grained connected sets $E^k_\ell$ with $s_i \in  E^{k,(i)}_\ell$ and $|E^{k,(i)}_\ell| = l_i$ for every $i$, where we demand $s_i \in \emptyset$ always satisfied for $l_i=0$. Then its number bounds the number of all the $E^k_\ell(\C)$ corresponding to some $\C$ with the requirements given by $\{s_i\}, \{l_i\}$. Since $l_i$ has the form $2^{\ell(d-1)}z_i$ (where $z_i \in \mathbb{N}$)  and the upper bound for the number of Peierls contours of size $z_i$ is $3^{2z_id}$, we have:
\begin{equation}
         N_\C(\{s_i\}_{i=1}^{M^k_n},\{l_i\}_{i=1}^{M_n})  = \exp\left( 2d\ln(9) \sum_{i=1}^{M_n} \frac{l_i}{2^{\ell(d-1)}} \right)
         \leq \exp\left( 2b_1d \ln(9) \frac{n}{2^{\ell(d-1)}} \right)
\end{equation}
by Lemma \ref{lem:combinatorics-contour} and $\sum_{i=1}^{M_n} l_i \le n b_1 $. 

\textit{Step 2: bound the number of  possible collections of $\{l_i\}$ and the label of $k$.}
Since $\sum_{i=1}^{M_n} l_i \leq b_1 n$, the number of possible collections of $\{l_i\}$ is bounded by $2^{\frac{b_1n }{2^{l(d-1)}}}$. Similarly, since $M_n \le \frac{b_0 n}{2^{\ell(d-1)}}$, the number of ways that label the $E^{k,(i)}_\ell$ is bounded by $\Ng^{\frac{b_0 n }{2^{l(d-1)}}} $. Then the product of the two numbers are bounded by $\exp\left(\frac{C_1 n}{2^{\ell (d-1)}}\right)$. 

\textit{Step 3: bound the number of possible collections of $\{s_i\}$.}
Set $d_i := |s_i - s_{i-1}|$ for $i\ge 2$ while $d_1 = |s_1|$. we also choose $\{t_i \}_{i=1}^{M_n}$ with $t_i \in \Int \C $ such that $|s_i - t_i| = d(s_i, \Int \C )$, and $t_0= 0$. It is easily obtained that $d(s_i,t_i) \leq 2^{\ell}$. 
Since all the cubes are not connected, $d(t_i,t_{i-1})>2^\ell$, hence
\begin{equation}
    d(s_i,t_i) \leq |t_i -t_{i-1}|.
\end{equation}
As all $t_i \in \InB \Int \C$ with $i\ge 1$, we can reorder the terms to minimize the sum of distances and find a proper path connecting $\bar{t}_i \in \sC$ such that $|t_i-\bar{t}_1|=1$ ($i\ge 1$). Since the path can be obtained by a DFS algorithm in $\sC$ as Lemma \ref{lem:combinatorics-contour}, each point can be referred to at most $2(3^d-1)$ times. Thus from $d(t_0, t_1)\le n$, we get  
\begin{equation}
    \sum_{i=1}^{M_n} d(t_i, t_{i-1}) \leq   2(3^d-1)n + M_n +n < 2(3^d+b_0)n.
\end{equation}
This yields 
\begin{equation}\label{eq:bound-of-di}
    \sum_{i=1}^{M_n} d_i \leq 2\sum_{i=1}^{M_n} d(s_i,t_i) + \sum_{i=1}^{M_n} d(t_i,t_{i-1}) \leq 3\sum_{i=1}^{M_n} d(t_i, t_{i-1}) \leq 6(3^d +b_0) n.
\end{equation}
Denote $C_0 := 6(3^d+b_0)$. For fixed $\{d_i\}^{M_n}_{i=1}$, the number of ways of choosing $\{s_i\}$ is bounded by $\prod_{i=1}^{M_n} 2d(d_i)^{d-1} \le \prod_{i=1}^{M_n} (2d_i)^{d}$. 
By the AM–GM inequality, the maximum of this quantity is reached when all the distances are the same. Assuming $d_i\equiv d^*$, satisfying the following bound\begin{equation}
     d^* \leq \frac{C_0 n}{M_n} = \frac{2C_0 2^{\ell(d-1)}}{b_0},
 \end{equation}
the number is bounded by \begin{equation}
    \begin{aligned}
\prod _{i=1}^{M_{n}} (2d^*)^{d} & \leq \left(\frac{2C_02^{\ell(d-1)}}{b_0}\right)^{\frac{db_0 n}{2^{\ell (d-1)}}}\\
 & =\exp\left\{\frac{db_0n}{2^{\ell (d-1)}}\ln\frac{2C_0}{b_0} +\frac{db_0n\ell (d-1)}{2^{\ell (d-1)}}\ln 2\right\}\\
 & =\exp\left\{\frac{n (C_2+C_3l) }{2^{\ell (d-1)}}\right\},
\end{aligned}
\end{equation}
where $C_2 := C_2(d)$, $C_3 := C_3(d)$.

\textit{Step 4: bound the number of  possible collections of $\{d_i\}$.} The number of solutions $\{d_i\}^{M}_{i=1}$ for $d_i\in \mathbb{Z}^+ \cup \{0\}$ to $\sum_{i=1}^{M_n} d_i =N $ is $\binom{N+M_n-1}{M_n-1}$. As $\binom{N+M_n-1}{M_n-1} <\frac{(N+M_n)^{M_n}}{(M_n-1)!} < \frac{[e(N+M_n)]^{M_n}}{M_n^{M_n-1}} $, the number of solutions of \eqref{eq:bound-of-di} is bounded by 
\begin{equation}
      \begin{aligned} \label{Eq: Proposition_2_FFS_Aux_3}
\frac{1}{M_n}\sum _{N=1}^{C_0n}\left(\frac{e(N+M_n)}{M_n} \right)^{M_n} & \leq \frac{1}{M_n}\int _{M_n}^{C_0n+1+M_n} \left(\frac{ex}{M_n}\right)^{M_n} dx \le\left(\frac{e(C_0 n +M_n +1)}{M_n}\right)^{M_n+1}\\
 & \leq  \exp\left(\ln\left(\frac{2e (C_0+b_0+1)2^{l(d-1)}}{b_0}\right)\frac{2b_0 n}{2^{\ell(d-1)}}\right) \\
 &\le \exp\left\{\frac{n(C_4 +C_5l)}{2^{\ell (d-1)}}\right\},
\end{aligned}
 \end{equation}
where the third inequality follows from $M_n \ge 1$. For $l \in \mathbb{Z}^+ \cup \{0\}$, take $b_4=2d \ln(9)b_1+  \sum_{i=1}^5 C_i$ and the proposition follows. 
\end{proof}
Eventually, we derive the following proposition indicating the probability bound of $\FSIR$:
\begin{proposition}\label{prop:prob-of-FSIR}
Let $\delta > 0$ be given. Then there exists $\epsilon_2: = \epsilon_2(\delta) > 0$ such that if the maximal standard variation $\epsilon < \epsilon_2$, then for $d\ge 3$
\begin{equation}
\mathbb{P}(\eta \in \FSIR) \geq 1 -  \delta.    
\end{equation} \end{proposition}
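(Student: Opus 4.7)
The plan is to apply the chaining tail inequality (Theorem \ref{thm:chaining-Tail-inequality}) separately on each stratum of contours of fixed boundary length $n$, and then union-bound over $n \ge 1$. Concretely, I would start from the inclusion
\begin{equation*}
\{\eta \notin \FSIR\} \;\subseteq\; \bigcup_{n \ge 1} \Bigl\{ \sup_{\sC \in \bar{\CoC}_0(n)} \Delta_{\Int \C} F^{k}_\Reg \;>\; \tfrac{\rho n}{4} \Bigr\}.
\end{equation*}
By Lemma \ref{lem:subgaussian-area-variation}, the process $X_\sC := \Delta_{\Int \C} F^{k}_\Reg$ is subgaussian on $(\bar{\CoC}_0(n), \bar{d})$ in the sense of Definition \ref{def:subgaussian-process}, and anchoring at a reference contour $\sC_0$ with $\Int \C_0 = \emptyset$ (so that $X_{\sC_0} = 0$) prepares the process for Theorem \ref{thm:chaining-Tail-inequality}.

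Next I would plug in the covering-number estimates. Corollary \ref{cor:covering-number} together with Proposition \ref{prop:entropy-control} gives, at the dyadic radii $r_\ell := \nu b_3\, 2^{\ell/2} n^{1/2}$,
\begin{equation*}
\log N\bigl(\bar{\CoC}_0(n), \bar{d}, r_\ell\bigr) \;\le\; \frac{b_4\, \ell\, n}{2^{\ell(d-1)}}.
\end{equation*}
A standard dyadic bound for the chaining integral then yields
\begin{equation*}
\int_0^\infty \sqrt{\log N(\bar{\CoC}_0(n), \bar{d}, r)}\, dr \;\le\; C_1\, \nu\, n\, \sum_{\ell \ge 0} \sqrt{\ell}\; 2^{-\ell(d-2)/2}.
\end{equation*}
For $d \ge 3$ the exponent $(d-2)/2 \ge 1/2$ is strictly positive, so the series converges to a constant $M_d$ that depends only on dimension; hence the integral is $O(\nu n)$. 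The diameter of $(\bar{\CoC}_0(n),\bar{d})$ is controlled by the isoperimetric bound $|\Int \C| \le C_d\, n^{d/(d-1)}$ for contours of surface area $n$, giving $\mathrm{diam}(\bar{\CoC}_0(n), \bar{d}) \le C_d\, \nu\, n^{d/(2(d-1))}$.

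Choosing $x = \rho n/8$ and taking $\epsilon$ small enough that the chaining-integral contribution $C_d M_d\, \nu\, n$ is dominated by $\rho n/8$, Theorem \ref{thm:chaining-Tail-inequality} yields
\begin{equation*}
\Pb\Bigl( \sup_{\sC \in \bar{\CoC}_0(n)} X_\sC \;>\; \tfrac{\rho n}{4} \Bigr) \;\le\; C \exp\!\Bigl( -\frac{c\,\rho^2\, n^{(d-2)/(d-1)}}{\nu^2} \Bigr).
\end{equation*}
Summing over $n \ge 1$, the positivity of the exponent $(d-2)/(d-1)$ in $d \ge 3$ makes the series convergent; moreover, since $\nu = \nu(\epsilon) \to 0$ as $\epsilon \to 0$ (by Lemma \ref{lem:single-log-generation-bound} in the bounded case and by $\nu = \epsilon^2$ in the Gaussian case), the entire sum can be driven below any prescribed $\delta$ by choosing $\epsilon$ sufficiently small, which yields the claim.

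The main obstacle is the entropy-integral step: the dyadic series $\sum_\ell \sqrt{\ell}\, 2^{-\ell(d-2)/2}$ must converge uniformly in $n$, and this is precisely what forces $d \ge 3$. In $d = 2$ the exponent vanishes, the series diverges, and the argument breaks down, in complete consistency with the Aizenman-Wehr no-go result and the Imry-Ma heuristic. A secondary technical point worth checking carefully is that the reference contour $\sC_0$ with $\Int \C_0 = \emptyset$ genuinely satisfies $\Delta_{\Int \C_0} F^{k}_\Reg = 0$, because the local symmetry transformation $\tau_{\Int \C_0}$ reduces to the identity on the quenched parameters when its defining region is empty; this is the place where the locality clause of Definition \ref{def:local-symmetry} is essential.
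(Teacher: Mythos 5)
Your proposal is correct and follows essentially the same route as the paper's proof: anchoring the subgaussian process $\Delta_{\Int\C}\tilde{F}^k_\Reg$ at a contour with empty interior, bounding the entropy integral dyadically via Corollary \ref{cor:covering-number} and Proposition \ref{prop:entropy-control} (with convergence of $\sum_\ell \sqrt{\ell}\,2^{-\ell(d-2)/2}$ forcing $d\ge 3$), controlling the diameter by the isoperimetric inequality, and then union-bounding over $n$ with $\epsilon$ (hence $\nu$) small. The only cosmetic discrepancy is a power of $\nu$ in the final exponent, which stems from the normalization of the metric $\bar{d}$ and does not affect the conclusion.
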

\begin{proof}
Since we have shown from Lemma \ref{lem:subgaussian-area-variation} and the definition \ref{def:subgaussian-process} that for a fixed finite subregion $\Reg$, $\Delta_{\Reg^\prime} \tilde{F}^{k}_\Reg$ is a subgaussian process with respect to $\Reg^\prime$. Since $N(\bar{\CoC}_0(n), \bar{d}, r )$ is decreasing over $r$, we derive the following bound by Lemma \ref{lem:combinatorics-contour} and Proposition \ref{prop:entropy-control} that
\begin{equation}
\begin{aligned}
\int_0^\infty  \sqrt{\ln N(\bar{\CoC}_0(n), \bar{d}, r )}dr & \le  \nu b_3 n^{\frac{1}{2}}\sqrt{2n\ln(3^d-1)} \\
& + \sum_{\ell=1}^\infty \nu b_3n^{\frac{1}{2}}\left(2^{\frac{\ell}{2}}-2^{\frac{\ell-1}{2}}\right) \sqrt{\ln N(\bar{\CoC}_0(n), \bar{d},\nu b_3 2^{\frac{\ell-1}{2}} n^{\frac{1}{2}} )}   \\
& \le \nu b_3 n^{\frac{1}{2}}\left[ \sqrt{2n\ln(3^d-1)}  + \sum_{\ell =1 }^\infty (\sqrt{2}-1)2^{\frac{\ell-1}{2}}\sqrt{\frac{b_4\ell n}{2^{\ell(d-1)}}} \right] \\
& \le \nu b_3 b_5 n,
\end{aligned}
\end{equation}
where 
\begin{equation}
    b_5 = \sqrt{2\ln(3^d-1)}+ (1-1/\sqrt{2})\sqrt{b_4 } \sum_{\ell=1}^\infty \sqrt{\ell} 2^{-\frac{\ell(d-2)}{2}}  < \infty
\end{equation} for $d \ge 3$.
Then we can apply Theorem \ref{thm:chaining-Tail-inequality} to show that
\begin{equation}
\begin{aligned}
 & \Pb\left[\sup_{\C \in \CoC_0(n)}\left\{\sum_k \Delta_{\Int_k \C}\tilde{F}^{k}_{\Reg}  \right\}\geq \frac{\rho n}{4} \right]  \\
\leq  & \Pb\left[\sup_{\C \in \CoC_0(n)}\left\{\sum_k \Delta_{\Int_k \C}\tilde{F}^{k}_{\Reg} - \sum_k \Delta_{\emptyset}\tilde{F}^{k}_{\Reg} \right\}\geq  C \nu b_3 b_5 n + \frac{\rho n}{8}\right] \\
\\
\leq &  \Pb\left[\sup_{\C \in \CoC_0(n)}\left\{\sum_k \Delta_{\Int_k \C}\tilde{F}^{k}_{\Reg} - \sum_k \Delta_{\emptyset}\tilde{F}^{k}_{\Reg} \right\}\geq C\int_{0}^{\infty} \sqrt{\ln N(\bar{\CoC}_0(n), \bar{d},  r)}  dr + \frac{\rho n }{8}\right] \\
\leq & \exp{\left(\frac{-\rho^2n^2}{64C \mathrm{diam}(\CoC_0(n))^2 }\right)} \leq \exp{\left(-\frac{\rho^2}{128C \nu} n^{2-\frac{d}{d-1}}\right)},
\end{aligned} 
\end{equation}
The first inequality is satisfied for sufficiently small $\epsilon $ and the last inequality follows from the isoperimetric inequality \cite[ Corollary B.80]{friedli_velenik_2017}. Then the conclusion follows by summing up the inequalities for $n\ge 1$ and note that the upper bound series are convergent and the sum can be smaller than any given $\rho>0$ for sufficiently small $\epsilon$. 
\end{proof}

%\section{Results}\label{sec:results}
\section{Long-range properties of the statistical models}\label{sec:theorems}

\subsection{Proofs of the main results} 

% It is ready to show the main results 
We are now ready to present the main results of the general properties of statistical models over the lattice model. To formulate our main result rigorously, we first address the foundational question of the existence of well-defined Gibbs states in disordered systems. Building on the framework established by \cite{aizenmanRoundingEffectsQuenched1990} and \cite{newmanspatial1996}, we note that for any given realization of the disorder $\eta$, there exists a convex set of $\eta$-covariant Gibbs measures. This structure guarantees that the infinite-volume Gibbs measures transform covariantly under both lattice translations and local modifications of the disorder field, enabling the application of ergodic theory.

We now state the main result on the stability of long-range order:

\begin{theorem} \label{thm:Gibbs-measure}
Consider a statistical mechanical system on $\Zd$ ($d \geq 3$) with Hamiltonian $H_\eta$ satisfying the Peierls condition with $\Ng$ ground states and the existence of local symmetric operation. Then there exist constants $T_0 > 0$ and $\epsilon_0 > 0$ such that for all $T < T_0$ and $\epsilon < \epsilon_0$, the system admits at least $\Ng$ distinct $\eta$-covariant Gibbs measures $\{\mu^k_\eta\}_{k=1}^{\Ng}$, such that 
\begin{equation}
   \lim_{\Reg \rightarrow \Zd } \frac{1}{|\Reg|} \sum_{s \in \Reg} \mu ^k_{\eta }(\Id_{x_s = b^k})  > \frac{1}{2}.
\end{equation}
for almost all $\omega$.
\end{theorem}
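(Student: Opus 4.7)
The plan is to establish, for each ground-state label $k$ and each finite region $\Reg$, an expectation bound
\begin{equation*}
\Ex\bigl[\mu^k_{\eta,\Reg}(x_s \neq b^k)\bigr] < \tfrac{1}{2}
\end{equation*}
uniform in $s$ and $\Reg$, and then pass to subsequential Gibbs-state limits, using translation ergodicity of the disorder to convert the expectation bound into the claimed density statement almost surely. The first step applies the polymer formula \eqref{eq:contour-prob-formula}: if $x_s \neq b^k$ then $s$ must lie in $\sC \cup \Int \C$ for some external contour $\C$, so by a union bound together with the standard polymer comparison (the constrained sum with $\C$ fixed as an external contour differs from the full polymer sum by at most $w^k_\eta(\C)$, since a configuration in the constrained sum restricts compatible contours only inside $\sC \cup \Int \C$),
\begin{equation*}
\mu^k_{\eta,\Reg}(x_s \neq b^k) \leq \sum_{\C : s \in \sC \cup \Int \C} w^k_\eta(\C).
\end{equation*}

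The core of the argument is the effective Peierls bound $w^k_\eta(\C) \leq \exp\bigl(-\rho|\sC|/(4T)\bigr)$ valid on the event $\FSC \cap \QISC \cap \FSIR$, whose $\Pb$-probability exceeds $1-3\delta$ for $\epsilon$ small by Propositions \ref{prop:prob-of-FSC}, \ref{prop:prob-of-FSIR} and Corollary \ref{cor:prob-of-QISC}. I decompose $w^k_\eta(\C)$ via \eqref{eq:weight-function} into three factors. First, the Peierls condition together with fluctuation-stabilization gives $D^k_\eta(\C) \geq D^k_0(\C) - \rho|\sC|/4 \geq 3\rho|\sC|/4$. Second, for each interior label $k_1$, applying the local symmetry $\tau^{k_1\to k}_{\Int_{k_1}\C}$ of Definition \ref{def:local-symmetry} together with the energy quasi-invariance and quasi-invariance-stabilization controls the ratio $\tilde{\PF}^{k_1}_{\eta,\Int_{k_1}\C}/\PF^k_{\tau\eta,\Int_{k_1}\C}$ by $\exp(\rho|\sC|/(4T))$. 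Third, the residual ratio $\PF^k_{\tau\eta,\Int_{k_1}\C}/\PF^k_{\eta,\Int_{k_1}\C}$ equals $\exp(\Delta_{\Int_{k_1}\C}F^k/T)$, which on $\FSIR$ is bounded by $\exp(\rho|\sC|/(4T))$. Multiplying the three factors and absorbing the excitation gain gives the stated bound on $w^k_\eta(\C)$.

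Combining with the exponential contour count of Lemma \ref{lem:combinatorics-contour}, namely that the number of contours with $s \in \sC \cup \Int \C$ and $|\sC|=n$ is at most $(C_d \Ns)^n$ for a dimension-dependent constant $C_d$, the sum becomes geometric and can be forced below $\tfrac{1}{2}-\delta$ by choosing $T$ small. Taking expectation and adding the probability of the bad event yields
\begin{equation*}
\Ex\bigl[\mu^k_{\eta,\Reg}(x_s \neq b^k)\bigr] \leq 3\delta + \sum_{n\geq 1} (C_d \Ns)^n \exp\bigl(-\rho n/(4T)\bigr) < \tfrac{1}{2},
\end{equation*}
uniformly in $s$ and $\Reg$, for $T<T_0$ and $\epsilon<\epsilon_0$ chosen suitably.

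For the infinite-volume conclusion, compactness of $\SV^{\Zd}$ under the product topology yields weak subsequential limits $\mu^k_\eta$, which are $\eta$-covariant Gibbs measures in the sense of \cite{aizenmanRoundingEffectsQuenched1990, newmanspatial1996}. Fubini together with the pointwise ergodic theorem applied to the translation-invariant disorder law converts the uniform expectation bound into
\begin{equation*}
\lim_{\Reg\to\Zd} \frac{1}{|\Reg|}\sum_{s\in\Reg} \mu^k_\eta(\Id_{x_s=b^k}) = \Ex\bigl[\mu^k_\eta(\Id_{x_0=b^k})\bigr] > \tfrac{1}{2}
\end{equation*}
for almost every $\omega$, and distinctness of the $\Ng$ states is immediate since no two such densities can simultaneously exceed $\tfrac{1}{2}$. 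The main obstacle throughout is synthesizing the three stability estimates into the single uniform bound on $w^k_\eta(\C)$: in particular, the free-energy comparison controlled by $\FSIR$ requires the chaining and coarse-graining construction of Sections \ref{sec:chaining}–\ref{sec:coarse-graining}, which only closes the summation $\sum_n \exp(-c n^{2-d/(d-1)})$ when $d \geq 3$, pinpointing the dimensional threshold in the hypothesis.
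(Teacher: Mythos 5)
Your proposal is correct and follows essentially the same route as the paper: the bound $\mu^k_{\eta,\Reg}(\C_0\in\TB_\Ext)\le w^k_\eta(\C_0)$, the three-factor decomposition of the weight controlled on the good event $\FSC\cap\QISC\cap\FSIR$ via Propositions \ref{prop:prob-of-FSC}, \ref{prop:prob-of-FSIR} and Corollary \ref{cor:prob-of-QISC}, the contour-entropy summation of Lemma \ref{lem:combinatorics-contour}, and the passage to $\eta$-covariant infinite-volume limits combined with the ergodic theorem. The only (harmless) bookkeeping difference is that you apply the Step-3 free-energy comparison to the ratio $\PF^{k_0}_{\tau\eta,\Int_k\C}/\PF^{k_0}_{\eta,\Int_k\C}$ over $\Int_k\C$ itself, whereas the paper reassembles the numerator so that $\FSIR$ is invoked for $\PF^{k_0}_{\tau_{\Int\C_0}\eta,\Reg}/\PF^{k_0}_{\eta,\Reg}$ over the full ambient region $\Reg$.
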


\begin{proof}
The proof starts from the upper bound calculation of the single sites expectation of the statistical system over the finite region with fixed boundary condition. Consider the model of the fixed boundary condition with the boundary condition $H^{k}_{\eta, \Reg}$. Then by simple geometry we have 
\begin{equation}
     \mu^k_{\eta, \Reg }(x_0 \neq b^k ) \le \sum_{n=1}^{\infty} \sum_{\C \in \CoC_0(n), \sC \subseteq \Reg} \mu^k_{\eta, \Reg}(\C \in \TB_\Ext^{k}),
\end{equation}
where the Gibbs measure is defined as \eqref{eq:Gibbs-measure}. By the argument in Proposition \ref{prop:polymer-formula}, we have 
\begin{equation}
       \mu^{k_0}_{\eta, \Reg}(\C_0 \in \TB_\Ext^{k_0})  \le \frac{ \Xi^{k_0}_\eta(\{\C_0\})\sum_{\TB_\Ext^{k_0}, \C_0 \in \TB_\Ext^{k_0}}    \Xi^{k_0}_\eta(\TB_\Ext^{k_0} \setminus \{\C_0\})  
}  {\sum_{\TB_\Ext^{k_0}}  \Xi^{k_0}_\eta(\TB_\Ext^{k_0}) }
,
\end{equation}
where
\begin{equation}
    \Xi^{k_0}_\eta(\TB_\Ext^{k_0}):= \prod_{\C \in \TB_\Ext^{k_0}}  \exp{\left[-\frac{1}{T}D^{k_0}_\eta (\C)\right]}  \prod_k \exp\left[\frac{\Ge|\Int_k \C|+ S^{k}_{\Int_k \C}}{T}  \right]\tilde{\PF}^k_{\eta, \Int_k \C}. 
\end{equation}
Then for every contour $\C$
\begin{equation}
    \Xi^{k_0}_\eta(\{\C\}) \le \exp\left(-\frac{\rho|\sC|}{2T}\right) \prod_k \left[\frac{\Ge|\Int_k \C|+ S^{k}_{\Int_k \C}}{T}  \right]\PF^{k_0}_{\tau_{\Int_k \C}\eta,  \Int_k \C},  
\end{equation}
For $\eta \in \FSC \cap \QISC$. As a result, we have
\begin{equation}
\begin{aligned}
    \mu^{k_0}_{\eta, \Reg}(\C_0 \in \TB_\Ext^{k_0})&  \le \exp\left(-\frac{\rho|\sC|}{2T}\right)   \frac{ \Xi^{k_0}_{\tau_{\Int \C_0}\eta}(\{ b^k|_{\sC_0}\})\sum_{\TB_\Ext^{k_0}, \C_0 \in \TB_\Ext^{k_0}}    \Xi^{k_0}_\eta(\TB_\Ext^{k_0}\setminus\{\C_0\}) 
}  {\sum_{\TB_\Ext^{k_0}}  \Xi^{k_0}_\eta(\TB_\Ext^{k_0}) } \\
& \le \exp\left(-\frac{\rho|\sC|}{2T}\right) \frac{ \Xi^{ k_0}_{\tau_{\Int \C_0}\eta,\Reg}}{\Xi^{ k_0}_{\eta,\Reg}},
\end{aligned}
\end{equation}
where $\Xi^{ k_0}_{\eta,\Reg}$ is defined as in Proposition \ref{prop:polymer-formula}. Since $ \Xi^{ k_0}_{\tau_{\Int \C_0}\eta,\Reg}/\Xi^{ k_0}_{\eta,\Reg} =  \PF^{ k_0}_{\tau_{\Int \C_0}\eta,\Reg}/\PF^{ k_0}_{\eta,\Reg}$, we derive that $\mu^k_{\eta, \Reg}(\C \in \TB_\Ext^{k}) \le \exp(-\rho|\sC|/4T)$ for every $\C \in \CoC_0$ and $\eta \in \FSC \cap \QISC \cap \FSIR$. Then
\begin{equation}
     \mu^k_{\eta, \Reg }(x_0 \neq b^k ) \le \sum_{n=1}^{\infty} \exp \left[ - n \left(  
 \frac{\rho }{ 4T} -d\ln{9}-1 - \ln\Ng\right) \right],
\end{equation}
by Lemma \ref{lem:combinatorics-contour}. It is obvious that $\mu^k_{\eta, \Reg }(x_0 \neq b^k ) < 1/4$ for sufficiently small $T$ and $\eta \in \FSC \cap \QISC \cap \FSIR$, 
while by the Proposition \ref{prop:prob-of-FSC}, Corollary \ref{cor:prob-of-QISC} and Proposition \ref{prop:prob-of-FSIR}, $\Pb (\eta \in \FSC \cap \QISC \cap \FSIR) \ge 1- 3 \rho > 3/4 $ for sufficiently small $\epsilon$. Since for any fixed random field configuration $\eta$, every Gibbs measure over the whole lattice $\Zd$ is the limit point of $\mu ^k_{\eta, \Reg' }$ in the measure space, we can find $\Ng$ Gibbs measures $\mu^k_{\eta}$ such that $\Ex_\omega (\mu^k_\eta (x_0 =  b^k) ) > 1/2$. Then the conclusion follows from the construction of the $\omega$-covariant measures $\mu^k_\eta$ and the ergodicity theorem for the ergodic measure of $\omega$.
\end{proof}

\subsection{An extension over continuous models}
 
The results can also be generalized to systems with a continuous spin variable (and thus an infinite state space at each site) under certain conditions.
Suppose the spin value space $\SV$ is now a compact measurable space with the measure $\MoSV$. An extended version of the Peierls condition requires an equivalent description of the finiteness of the ground states, which inspires the definition of partition over the spin value space. A partition of the spin value space is a collection of subsets $\{\SV_i \}_{i=1}^\Ns$ such that $\SV_i \cap \SV_j = \emptyset$ for any $i \neq j$ and $\cup_i \SV_i = \SV$. Accordingly, we denote the measures restricted in every partition at site $s$ by $(\MoSV^i)_s$ such that $d(\MoSV^i)_s :=\Id_{\SV_i} d(\MoSV)_s$. We also define the coarse-grained spin configurations $\tilde{x}$ corresponding to $x$ as $\tilde{x}_s:= \SV_i$ if $ x_s \in \SV_i$ and the corresponding restricted measure as $\MoSV^{\tilde{x}_s}:= \MoSV^{i}$. Based on the setting, we then reintroduce the Hamiltonian within a finite subregion, with the restricted spin configuration $y$ acting as the boundary condition such that $\ExB \Reg  \subseteq \bar{y}$, as 
\begin{equation}\label{eq:modified-energy-function}
    H^y_{\eta, \Reg}(x) :=  \sum_{\alpha} \sum_{s \in \Reg } (h^\alpha + \eta_s^\alpha )g^\alpha_s(x^{y , \Reg}),  
\end{equation}
where 
\begin{equation}
    x^{y, \Reg}_{s} = \left\{ \begin{array}{cc}
        y_s  &  s \in  \bar{y}\\
        x_s  &  s  \in \bar{y}^c
    \end{array} \right.. 
\end{equation}
We also denote the effective Hamiltonian of a coarse-grained configuration as 
\begin{equation}\label{eq:modified-coarse-energy-function}
    H^y_{\eta, \Reg}(\tilde{x}) := - T\ln \left\{ \int \prod_{s \in \Reg} d(\MoSV^{\tilde{x}_s} )_s(x) \exp \left[ -\frac{1}{T} H^y_{\eta, \Reg }(x)\right] \right\}.
\end{equation}
Note this effective Hamiltonian depends on the temperature $T$ and we omit the dependence in the notation without ambiguity. By \eqref{eq:modified-coarse-energy-function}, we denote the effective Hamiltonian with coarse-grained configuration $\SV_i$ over $\Reg^c$ as 
\begin{equation}\label{eq:doubly-coarse-energy-function}
H^{\tilde{y}}_{\eta, \Reg}(\tilde{x}):= -T\ln \left\{  \int \prod_{s \in \ExB \Reg}d(\MoSV^{\tilde{y}_s}) (y)     \exp \left[ -\frac{1}{T} H^y_{\eta, \Reg }(\tilde{x})\right] \right\}
\end{equation}
In the following, we divide the elements of a partition into two parts: $ \{ b^k  \}_{k=1}^\Ng$ and $\{ m^k \}_{k=1}^{\Ns-\Ng}$. The first part plays the role of the ground state as in the case of finite spin values, while the second part represents the metastable states. In the following, we also write $\MoSV^k:= \MoSV^{b^k}$ for short without ambiguity. This classification is confirmed by the requirement that the only local ground states of $H_0(\tilde{x})$ for the coarse-grained spin configuration is the constant configuration $\tilde{x} \equiv b^k$. For the case of periodic coarse-grained ground states, we can also reduces it into the constant ones as shown in Section \ref{sec:notation}.    

The corresponding partition function and the corresponding Gibbs measure are 
\begin{equation}
    \PF^k_{\eta,\Reg} := \int \prod_{s \in \ExB \Reg} d(\MoSV^k )_s(y) \prod_{s \in \InB_2 \Reg }  d(\MoSV^k)_s(x) \prod_{s \in \Reg \setminus \InB_2 \Reg } d(\MoSV)_s(x) \exp \left[-\frac{1}{T} H^{y}_{\eta, \Reg }(x)\right]
\end{equation} and
\begin{equation}
    \mu^k_{\eta, \Reg} (\tilde{x}): = \frac{1}{\PF^k_{\eta, \Reg}}  \int \prod_{s \in \ExB \Reg} d(\MoSV^k)(y) \prod_{s \in \Reg}  d(\MoSV^{\tilde{x}_s} )(x) \exp\left[-\frac{1}{T}H^{y}_{\eta, \Reg}(x)\right] 
\end{equation} for any $\tilde{x}$ with $\tilde{x}_s = b^k$ for $s \in \InB_2 \Reg$. $\tilde{\PF}^k_{\eta, \Reg}$ is defined similarly. To derive the polymer formula, contours $\C$ are defined as usual except replacing the spin configuration $x$ with the corresponding coarse-grained one $\tilde{x}$ and specify the corresponding boundary set.  We also specify the exact spin configuration on $\ExB \Int \C \cup \InB \Int \C$ for the relevant quantities. We can derive the recursive formula of $\PF^k_{\eta, \Reg}$ as \eqref{eq:PF-decomposition} or \eqref{eq:recursion-pre}
\begin{equation}
\begin{aligned}
\PF^{k_0}_{\eta, \Reg}  = & \sum_{\TB^{k_0}_\Ext}  \int \prod_{s \in  \ExB \Ext \cup \InB \Ext} d(\MoSV^{k_0})_s(z)\exp\left[-\frac{1}{T} H^{z}_{\eta, \Ext}(b^{k_0}_{\Ext})\right]  \prod_{\C \in \TB^{k_0}_\Ext}  \\
 &  \prod_k \left\{ \prod_{s \in \ExB \Int_k \C \cup \InB \Int_k \C} d(\MoSV^k )_s(y)   \exp \left[-\frac{1}{T} H^{y \cup z}_{\eta , \Reg}(\C) \right]  \tilde{\PF}^y _{\eta, \Int_k \C} \right\},   
\end{aligned}
\end{equation}
where $\Ext:= \cup_{\C \in \TB^{k_0}_\Ext} \Ext \C$, and $y$ and $z$ are the restricted spin configurations over $\cup_{\C \in \TB^{k_0}_\Ext} (\ExB \Int\C \cup \InB \Int\C)$ and $\Ext$, respectively. 
Therefore, it is natural to consider the following energy difference with respect to a prescribed restricted configuration $r$ with $ r_s \in \C_s$ for $s \in \bar{r} = \InB\sC \cup \ExB \sC $ and
\begin{equation}\label{eq:modified-difference-formula}
\begin{aligned}
 D^{r,\tau}_\eta(\C) = & - T\ln\int \prod_{s \in \sC } d(\MoSV^{\C_s })_s(x) \exp\left[-\frac{1}{T}\sum_{\alpha} \sum_{s \in \sC } (h^\alpha + \eta_s^\alpha ) g^\alpha_s(x^{r, \sC}) \right] \\
& +T\ln \int \prod_{s \in \sC} d(\MoSV^k)_s(y) \exp\left[  -\frac{1}{T}\sum_{\alpha} \sum_{s \in \sC } (h^\alpha + \eta_s^\alpha )  g^\alpha_s (y^{\tau(r),\sC })\right], 
\end{aligned}
\end{equation}
where $k$ satisfies that $\C \in \CoC^k$. In this case, we also need the existence of a measurable transformation $\tau$ such that $\tau(r)_s \in b^k$ for any $s \in \bar{r}$ and the induced transformation of measures satisfies $\tau[(\mu^k)_s]=(\mu^{k'})_s$ and $r_s \in b^{k'}$. 
\begin{definition}[The extended Peierls condition]\label{def:extended-peierls-condition}
The unperturbed statistical system with the Hamiltonian $H^y_{0, \Reg}(x) $ is said to satisfy the Peierls condition if there exists a finite partition and temperature threshold $T_0>0$ such that for any $T < T_0$ and for any coarse-grained contour with respect to the partition, there exists a positive constant $\rho$ such that $  D^{r, \tau}_0(\C) \ge \rho |\sC|$ for any restricted configuration $r$ with $ r_s \in \C_s$ for $s \in \bar{r} = \InB\sC \cup \ExB \sC $ and for any measurable transformation $\tau$ such that $\tau(r)_s \in b^k$ for any $s \in \bar{r}$ and the induced transformation of measures satisfies  $\tau[(\mu^k)_s]=(\mu^{k'})_s$ and $r_s \in b^{k'} $.
\end{definition}
As a result, we can derive Proposition \ref{prop:prob-of-FSC} as usual for $D^{r, \tau}_\eta$ uniformly over the rescricted configuration $r$ given $\tau$ as in Definition \ref{def:extended-peierls-condition}. To derive the extended editions of Corollary \ref{cor:prob-of-QISC} and Proposition \ref{prop:prob-of-FSIR}, we also need an extended definition of local symmetry transformation. 
\begin{definition}[The extended local symmetry]\label{def:extended-local-symmetry}
For a statistical physical model with random fields, there exists a local symmetry with respect to the partition $ \{ b^k  \}_{k=1}^\Ng$ and $\{ m^k \}_{k=1}^{\Ns-\Ng}$ for any pair $H^{k_1}_{\eta, \Reg}$ and $H^{k_2}_{\eta, \Reg}$ if there exists a continuous configuration transformation $\bar{\tau}_{\Reg}: \SV^{\Reg\cup \ExB \Reg} \rightarrow \SV^\Reg$ and a continuous quenched parameter configuration transformation $\tau_\Reg: \QP^{\Nb \times \Reg \cup \ExB_2 \Reg} \rightarrow \QP^{\Nb \times \Reg \cup \ExB \Reg}$ for any finite region $\Reg$ such that:
\begin{enumerate}
    \item The locality, regularity and measure quasi-invariance conditions work for $\bar{\tau}_\Reg x$ and $\tau_\Reg \omega$.  
    \item  For any $\SV_i$ in the partition, $\bar{\tau}_\Reg$ always maps it into some $\SV_j$, and the induced transformation of measures satisfies $\bar{\tau}_\Reg(\MoSV^{i}) = \MoSV^{j}$ for $\MoSV^i$ and $\MoSV^j$ the corresponding restricted measures to $\SV_i$ and $\SV_j$.    
    \item The injectivity and energy quasi-invariance conditions work for the coarse-grained spin configuration $\tilde{x}$ and the corresponding effective Hamiltonian $H^{k_1}_{ \eta, \Reg}(\tilde{x}) $ and $ H^{k_2}_{\tau_{\Reg}\eta, \Reg}(\bar{\tau}_\Reg \tilde{x})$. 
\end{enumerate}
\end{definition}
It is as usual to derive the similar properties about the extended local symmetry, including Corollary \ref{cor:prob-of-QISC} and Proposition \ref{prop:prob-of-FSIR}. We also note that $\bar{\tau}_\Reg$ defined in the extended local symmetry just matches the required transformation in the definition of the extended Peierls condition. Thus we can do the same arguments as in this section to acquire the same results in Theorem \ref{thm:Gibbs-measure}. 

\section{Models}\label{sec-model-examples}

We now implement the extended Ding-Zhuang approach for characteristic cases in disordered lattice physics.

\subsection{RFIM and RFPM}

We reconsider the RFIM and RFPM on the $d$-dimensional lattice $\Zd$ in our setting, which are just the target that the original Ding-Zhuang argument focused on. The spin value space for the Ising model is $\SV = \{-1, +1\}$.
The unperturbed Hamiltonian $H_0(x)$ for the ferromagnetic Ising model is given by:
\begin{equation}
H_0(x) = -J \sum_{\langle s,t \rangle} x_s x_t,
\end{equation}
where $J > 0$ is the coupling constant, and the sum is over all unordered pairs of nearest-neighbor sites $\langle s,t \rangle$ such that $|s-t|_{L^1}=1$. 
% The nearest sites   
This can be written in the general form by identifying one type of interaction $g^{0}(x) =\frac{1}{2d} \sum_{i=1}^{2d} x_0 x_{e_i}$ for a nearest neighbor $e_i$ to the origin, with interaction strength $h^{0} = -dJ$. 
The interaction range $R_1=1$.
The random field part of the Hamiltonian couples a random field $\eta_s$ to the spin $x_s$ at each site $s$:
\begin{equation}
H'(x) = -\sum_s \eta_s x_s.
\end{equation}
This corresponds to a self-interaction term $g^{1}(x) = x_0$ (and its translates $g^{1}_s(x) = x_s$) with strength $\eta_s$. In classical setting, the random field configuration $\{\eta_s\}$ is itself a set of independent and identically distributed Gaussian random variables, so $\omega = \eta$. We can also suppose these variables to be bounded, independent,  and identically and symmetrically distributed with respect to $0$. The total Hamiltonian in a finite volume $\Reg$ with boundary condition $b$ is $H^b_{\eta,\Reg}(x)$ as given by
\begin{equation}
    H^{b}_{\eta,\Reg}(x) = -J \sum_{\langle s,t \rangle: \{s,t\}\cap\Reg  \neq \emptyset} x_s^{b,\Reg} x_t^{b,\Reg}-\sum_{s \in \Reg} \eta_s x_s^{b,\Reg}
\end{equation}

For the unperturbed system $H_0$, there are two constant local ground states: $b^{+}: =1$ and $b^{-}:= -1$. Thus, $\Ng = \Ns = 2$. To check the Peierls condition, we see that for any contour $\C$ with energy above the ground state, $\C$ comprises spins with at least one bond crossing the interface between opposite spin regions. Let $L_{\C}$ denotes interface length, the number of the pairs of opposite spins. Then energy gap is $2 J L_{\C}$. Since by the contour condition, $L_{\C} \geq |\sC\setminus \InB \sC|/2 \ge |\sC|/(2 \cdot 3^d)$.
This yields the Peierls condition with $\rho_0 = J/3^d$. To verify the local symmetry of RFIM, let $b^1 = 1$ and $b^2 = -1$.
We define the spin configuration transformation $\bar{\tau}_\Reg$ as 
\begin{equation}\label{eq:trans-RFIM1}
(\bar{\tau}_{\Reg}x)_s = 
    -x_s,
\end{equation}while the random field configuration transformation (same as the quenched configuration transformation) $\tau_\Reg$ is
\begin{equation}\label{eq:trans-RFIM2}
(\tau_{\Reg}
\eta)_s = -\eta_s. 
\end{equation}
Note that the same transformations can also be applied for $b^1 = -1$ and $b^2 = 1$. One can easily verify all the conditions of local symmetry for the above transformations. Specifically, we can obtain the strict energy and measure invariance: The transformation of $x$ does not change the interaction term given by $g^0(x)$, while the $\eta_s g^1_s(x)$ remains unchanged under the combined action of the two transformations. The random field transformation also preserves the distribution of $\{\eta^\alpha_s\}_{s\in \Reg}$.

The same argument is also for the $Q$-state Random Field Potts Model (RFPM) on the $d$-dimensional lattice $\Zd$.
The spin space is $\SV = \{1, 2, \ldots, Q\}$. The unperturbed Hamiltonian $H_0(x)$ for the ferromagnetic Potts model is given by:
\begin{equation}
H_0(x) = -J \sum_{\langle s,t \rangle} \delta(x_s, x_t),
\end{equation}
where $J > 0$ is the coupling constant, $\delta(a,b)$ is the Kronecker delta ($1$ if $a=b$, $0$ otherwise), and the sum is over all unordered pairs of nearest-neighbor sites. Suppose the random field part of the Hamiltonian is:
\begin{equation}
H'(x) = -\sum_s \sum_{q=1}^Q \eta_s^q \Id_{\{x_s=q\}},
\end{equation}
where $\Id_{\{x_s=q\}}$ is an indicator function and $\{\eta^q_s\}$ is a set of independent and identically distributed Gaussian variables (or bounded variable symmetric to $0$). Then $\Na = Q$ for the random fields and the total Hamiltonian in a finite region $\Reg$ with boundary condition $b$ is $H^b_{\eta,\Reg}(x)$ as given by
\begin{equation}
    H^b_{\eta,\Reg}(x) = -J \sum_{\substack{\langle s,t \rangle \\ \{s,t\}\cap\Reg  \neq \emptyset}} \delta(x_s^{b,\Reg}, x_t^{b,\Reg}) - \sum_{s\in \Reg} \sum_{q=1}^Q \eta_s^q \Id_{\{x_s^{b,\Reg}=q\}}.
\end{equation}

For the RFPM with $J>0$, the unperturbed system $H_0$ is the standard ferromagnetic Potts model. There are $Q$ constant local ground states sampled by $Q$ labels such that $b^q = q$ and $\Ng = \Ns = Q$. The Peierls condition can be verified as the RFIM. To verify the existence of local symmetry, we define the the spin configuration transformation $\bar{\tau}_\Reg$ between $b^i$ and $b^j$ as  
\begin{equation}
(\bar{\tau}_{\Reg}x)_s = \pi^{j-i}(x_s) , 
\end{equation}
where $\pi: \SV \to \SV$ is given by $\pi(q) = (q \pmod Q) + 1$. This is a cyclic permutation of the spin states. The corresponding random field configuration transformation can be defined as 
\begin{equation}
(\tau_{\Reg}
\eta)^q_s = \eta^{\pi^{-(j-i)}(q)}_s. 
\end{equation}
It is easy to verify the local symmetry condition as the RFIM. Here the the strict energy and measure invariance still hold. 

\subsection{Edwards-Anderson model}

The Edwards-Anderson model, or random bond Ising model, is another classical type of disordered system. A typical setting of the random bond Ising model with asymmetric distributions over bonds due to Nishimori \cite{Nishimori_1980,nishimoriinternal1981} can be represented by an Ising model (ferromagnetic or antiferromagnetic) with bond strength $\bar{J}$ coupled to a 2-valued random variables $J_{\langle s,t\rangle}$ with values $(-2\bar{J}, 0)$. In this section, we examine an extended version of the Edwards-Anderson model in which the random fields act over not only bonds but also  interacting with each spin is incorporated. Then this model can be viewed also as an extension of the RFIM. 

The Hamiltonian of the unperturbed extended model, which is unaffected by the randomness of the coupling strengths and external field, is defined as a standard Ising model:
\begin{equation}
    H_{0,\Reg}^b(x)=-\sum_{s\in\Reg}h^{J}g^J_s(x^{b, \Reg}_{s}),
\end{equation}
where $g^J_s(x)$ is given by $g^J_s(x) = \sum_{ t, |s-t|_{L^1} =1  }x_s x_t$, and the strength of the interaction $h$ is defined as  $h=\overline{J}/2$. The random fields interacting with the extended system are represented by $\eta^J_{\langle s, t \rangle}$ and $\eta^h_s$. Consequently, the Hamiltonian of the system is expressed as:
\begin{equation}
    H_{\eta,\Reg}^k(x)=-\sum_{s\in\Reg}h^{J}g^J_s(x^{b, \Reg}_{s})-\sum_{\alpha\in\{J,h\}}\sum_{s\in\Reg}\eta^{\alpha}_s g^{\alpha}_s(x^{b, \Reg}_{s})
\end{equation}
with $g^{h}_s(x)=x_s$. The random fields configuration $\{\eta^J_s\}$ and $\{\eta^h_s\}$ are supposed independent and
identically distributed. For every single $\eta^J_s$, the random variable can be a Gaussian random variable or a bounded random variable, without necessarily being symmetric to $0$. $\eta^h_s$ is designed as in the case of RFIM, where being symmetric to $0$ is necessary.    

When $\overline{J}>0$, the unperturbed system is the standard ferromagnetic Ising model without any external field. Therefore, we can acquire its ground states and derive the Peierls condition same as the case of RFIM. For the local symmetry, we can also reuse the transformations \eqref{eq:trans-RFIM1} and \eqref{eq:trans-RFIM2} for $x $ and $\eta^h$, while remain $\eta^J$ unchanged. Not only all the conditions of local symmetry are satisfied, the strict energy and measure equivalence is still maintained since the interactions given by random bonds $\eta^J$ also remain unchanged under the transformation of $x$.

The case of $\overline{J}<0$ is different, however. When $\overline{J}<0$, the unperturbed model corresponds to the antiferromagnetic Ising model without any external field. Its ground states exhibit chessboard configurations with two sublattices, where different occupations of positive and negative spins result in distinct ground states. Therefore, the period of the ground states in every coordination is $2$. Therefore, after modifying the configuration spaces, we have $\Ng = 2$ while $\Ns  = 2^d$. We can denote the two ground states by $b^e$ (even) or $b^o$ (odd) if $x_0 = +1$ or $x_0 = -1$. To show the Peierls condition, similar interfaces can be constructed by separating spins with the same orientation. Then it is sufficient to show its validity with the similar procedure.

To show that the model satisfies the local symmetry, we design a different set of transformations $\bar{\tau}_{u,\Reg}$, $\tau_{u,\Reg}^J$ and $\tau_{u,\Reg}^h$ between $b^{e}$ and $b^{o}$ such that
\begin{equation}\label{eq:trans-EA}
    \begin{aligned}
\bar{\tau}_{u,\Reg}:& (\bar{\tau}_{u,\Reg}x)_s = x_{s-u}, \\
\tau_{u,\Reg}^J:& (\tau_{u,\Reg} \eta)^J_{\langle s,t\rangle}=\eta^J_{\langle s-u,t-u \rangle }, \\
       \tau_{u,\Reg}^h:&  (\tau_{u,\Reg} \eta)^h_s=\eta^h_{s-u},\end{aligned}
\end{equation}
where $u$ represents one of the $2d$ unit vectors in the coordination directions. The transformations in the modified configuration spaces by period reduction are then induced by \eqref{eq:trans-EA}.

It is evident that locality condition and the injectivity condition for $\bar{\tau}_{u,\Reg}$, and regularity condition for $\tau_{u,\Reg}^J$ and $\tau_{u,\Reg}^h$ holds. To show the measure quasi-invariance condition, we can set $P_\Reg$ the collection of the random fields over $\tau^{-1}_u (\Reg):= \{ s| \tau_u s \in \Reg\}$ in the original lattice before modifications and the condition follows obviously. 

To show the left conditions, notice that random bonds and random fields adjacent to the boundary in the direction of translation move out of the region $\Reg$, while new random bonds and fields enter from the opposite side of the boundary due to the translation. The energy change of the transformation can be written as
\begin{equation}\label{eq:EA-inequality}
\begin{aligned}
    &H_{\eta,\Reg}^k(x)-H_{(\tau_\Reg\eta),\Reg}^{k'}(\bar{\tau}_\Reg x)\\ 
    = & \sum_{\substack{ \langle s,t \rangle \\s\in \InB \Reg\\(s+u)\notin \InB \Reg}}[\eta^J_{\langle s,t \rangle } x^{b, \Reg}_s x^{b, \Reg}_t+\eta^h_s x^{b, \Reg}_s]\\
     & -\sum_{\substack{\langle s',t' \rangle \\s'\in\InB\Reg\\(s'-u )\notin\InB\Reg}}[(\tau_{u,\Reg} \eta)^J_{\langle s',t'\rangle}( \bar{\tau}_{u,\Reg} x)^{b, \Reg}_{s'} (\bar{\tau}_{u,\Reg} x)^{b, \Reg}_{t'}+(\tau_{u,\Reg} \eta)^h_{s'}(\bar{\tau}_\Reg x)^{b^k, \Reg}_{s'}]\\
    \le & \sum_{\alpha\in\{J,h\}}\sum_{s\in\InB\Reg}|\eta^\alpha_{s}|+|(\tau_\Reg\eta)^\alpha_s|
\end{aligned}
\end{equation}
for the required spin configuration $x$ in the original configuration space. By Transforming the inequality into the modified configuration spaces, the energy quasi-invariance condition is also satisfied. The vanishing of terms in the unperturbed Hamiltonian in \eqref{eq:EA-inequality} is due to that the ground states of the unperturbed model $b^e$ and $b^o$ can be transformed to each other by one unit translation and the boundary condition $x$. 

\begin{remark}
For the classical EA model, there is no random field $\eta^h$ but only $\eta^J$ when $\overline{J}>0$. Consequently, no nontrivial random field transformation is required. This eliminates the need for the probability bound of $\FSIR$, where $d\ge 3$ is required if Proposition \ref{prop:prob-of-FSIR} is applied. Thus, long-range order results extend to $d\ge 2$ as in \cite{horiguchiexistence1982}. Note also that the $\overline{J}<0$ case is equivalent to $\overline{J}>0$ via gauge transformation over the chessboard sublattice. However, configuration space transformations induced by translation have broader applications, such as in EA models with constant self-energy $h^h x_s$, and can be applied to the hard-core model discussed below.
\end{remark}

\subsection{Quenched Fredrickson-Andersen $1$-blocked model} 

The Fredrickson-Andersen (FA) model is a classical kinetically constrained model that exhibits rich dynamical behavior. This model can be viewed as a free spin model with dynamic constraints permitting spin flips only when spins have at least the required number of positive neighbors. We study a variant named the Fredrickson-Andersen $1$-blocked model (FA-1B), a free spin model where dynamical constraints are lifted for spins with no positive neighbors. This model is equivalent to a hard-core model excluding any pair of occupied neighboring sites. The spin value space is $\SV = \{0,1\}$, where $1$ represents site occupation and $0$ vacancy. 

The FA-1B model admits a natural quenched interpretation: without dynamical constraints, the free spin model reaches an equilibrium distribution with each site occupied independently and identically distributed. When constraints are imposed, certain $\Zd$ sites become blocked: 
\begin{itemize}
    \item Occupied sites with occupied neighbors cannot change state
    \item Unoccupied sites adjacent to such occupied sites cannot change state
\end{itemize}
These form the blocked sites in the quenched FA-1B model. Consequently, the stochastic dynamics of the hard-core model are restricted to unblocked sites.

% The behavior of the free sites in this model has been investigated through computer simulations, revealing that a phase transition occurs in systems with dimension $d\geq 3$ (\textcolor{blue}{Add reference}). Additionally, analytical methods have demonstrated the nonexistence of long-range order in quenched FA-1B systems with $d=2$ (see \textcolor{blue}{Add reference here}).

To define the initial distribution of the quenched FA-1B model, we set the quenched parameter $\omega = \{0,1\}$ as the initial occupation states with the identically . Consequently, the Hamiltonian can be expressed as: 
\begin{equation}\label{eq:hard-core}
    H^b_{\eta,\Reg}(x)= \sum_{s\in\Reg}h^cg^c_s(x^{b, \Reg})-\sum_{s\in\Reg}\eta^{\Gamma,b}_s(\omega_s) g^b_s(x^{b, \Reg}),
\end{equation}
where $g^c_s(x)$ is defined as $g^c_s(x)=\mu x + \sum_{t,|s-t|_{L^1}=1}\Gamma x_s x_t$ and $h^c=\frac{1}{2}$ is the constant strength, while $g^b_s(x)$ is defined as $g^b_s(x)=\frac{1}{2}\mu x + \sum_{t,|s-t|_{L^1}=1}\Gamma x_s x_t$. The hard-core model then is the limit of \eqref{eq:hard-core} for $\Gamma \rightarrow \infty$. The random variable
$\eta^c$ is defined as a function of $\omega$ as:
\begin{equation}
\begin{aligned}
\eta^{b, \Gamma}_s = & \omega_s \left[1-\prod_{t, |t-s|_{L^1} =1}(1-\omega_t)\right] 
\\
& + (1-\omega_s)\left\{1-\prod_{t, |t-s|_{L^1} =1}\left[1-\omega_t\left(1-\prod_{r, |r-t|_{L^1} =1}(1-\omega_r)\right)\right]\right\}.
\end{aligned}
\end{equation}
It is evident that $\eta^c_s = 1$ only at sites forbidden to flip due to local constraints. As a variant of site-diluted hard-core or antiferromagnetic Ising models, the quenched FA-1B model exhibits random field configurations with weak inter-site dependencies, which is quantified by $D_c = 2$ following the notations in Section \ref{sec:notation}. These dependencies emerge from specific geometric constraints despite independently distributed quenched parameters, necessitating the full form of our arguments.

\begin{remark}
In our setup of interactions, we assume the boundedness of $g^\alpha(x)$, which is not suitable in the case of hard-core model. It suffices to suppose the interaction functions without finite upper bound if they are lower bounded, though. It is left to the readers that modification can be made so that  neither the verification of the Peierls condition nor the subgaussian estimates in Section \ref{sec:subgaussian} are affected by loss of the upper bound assumption.
\end{remark}

To verify the Peierls condition in the quenched FA-1B model. We consider the unperturbed model, the hard-core model that prevents occupied sites from being adjacent. The hard-core model, which can be seen as a limit of antiferromagnetic Ising model with a critical external field, has two ground states with chessboard configuration same as the antiferromagnetic Ising model, indicating that we are able to construct the contours $\C$ following the same procedure in the antiferromagnetic Ising model. In the hard-core model, occupied sites cannot be adjacent. Consequently, $\C$ have greater energy compared to pure ground states for at least $h^c|\sC\setminus\InB \sC|/3^d$ and the Peierls condition follows. 

To establish the local symmetry in the quenched FA-1B model, we consider the transformations $\bar{\tau}_{u,\Reg}$ and $\tau_{u,\Reg}$ as those of EA model such that
\begin{equation}
    \begin{aligned}
        \bar{\tau}_{u,\Reg}:& (\bar{\tau}_{u,\Reg}x)_s=x_{s-u}\\
        \tau_{u,\Reg}:&(\tau_{u,\Reg}\omega)_s=\omega_{s-u}
    \end{aligned}
\end{equation}
where $u$ represents one of the $2d$ unit vectors in the coordination directions and $\tau_{-u}$ is the translational operator. We explain the verification of energy quasi-invariance and measure quasi-invariance conditions in detail.
For the measure quasi-invariance, we can set $P_\Reg$ the collection of the quenched parameters over $\tau^{-1}_u (\Reg \cup \ExB_{D_c} \Reg)$ in the original lattice, larger than the case of EA model because of the weak inter-site dependencies. Transformed into the modified configuration spaces, $P_\Reg$ would satisfy the requirement and the measure quasi-invariance follows. 
For the energy quasi-invariance, we can show that the energy changes under the transformations can be bounded by 
\begin{equation}\label{eq:FA-1B}
\begin{aligned}
    \sum_{\substack{s\in\InB\Reg \\(s +  u)\notin\InB\Reg }}(1-\eta^c_s)-\sum_{\substack{t\notin\InB\Reg \\(t - u) \in\InB\Reg }}(1-\eta^c_{t-u})<\sum_{s \in \partial\Reg}|\eta_s|+|(\bar{\tau}_{u,\Reg }\eta)_s|
\end{aligned}
\end{equation}
for the required spin configurations $x$ of finite energy in the original lattice. Then the energy quasi-invariance follows by transforming \eqref{eq:FA-1B} into the modified configuration spaces. 

\subsection{Models over other lattice structures}

The generalized Ding-Zhuang argument can also be applied to other periodic lattices beyond the simple cubic lattices. To show its generality, we take the hard-core model on the following prominent three-dimensional lattice structures as examples: Body-centered cubic lattice, face-centered cubic lattice and the hexagonal close-packed lattice. 

Recall the conventions of the lattice structures before further discussion. For the body-centered cubic lattice, the conventional unit cell has lattice vectors defined as $a_1 = (1,0,0)$, $a_2 = (0,1,0)$, and $a_3 = (0,0,1)$. The atoms in the unit cell are located at positions $(0,0,0)$ and $(\frac{1}{2}, \frac{1}{2}, \frac{1}{2})$, representing the corner and body-centered atoms respectively. This structure comprises two atoms per unit cell, with the body-centered atom positioned at the geometric center of the cubic cell. The unit cell of face-centered cubic lattice has lattice vectors given also by $(1, 0, 0)$, $(0,1,0)$ and $(0, 0, 1)$. The atoms occupy positions at $(0,0,0)$, $(\frac{1}{2}, \frac{1}{2}, 0)$, $(\frac{1}{2}, 0, \frac{1}{2})$, and $(0, \frac{1}{2}, \frac{1}{2})$. This configuration includes four atoms per unit cell, with atoms located at each corner and at the centers of each face of the cubic unit cell. For the hexagonal close-packed lattice, the lattice vectors of the unit cell are defined as $ (\frac{\sqrt{3}}{2}, \frac{1}{2},0)$, $(
\frac{\sqrt{3}}{2}, -\frac{1}{2},0)$, and $(0, 0 ,\frac{2\sqrt{6}}{3})
$. The atoms are located at positions $(0,0,0)$ and $(\frac{\sqrt{3}}{3}, 0, \frac{\sqrt{6}}{3})$. This results in two atoms per unit cell, with one atom at the base of the hexagonal prism and the other at the midpoint of the prism height.

The corresponding Hamiltonian $H^b_{\eta, \Reg}$ of the hard-core model is \eqref{eq:hard-core} with the graph $G(V,E)$, where $V$ is the set of the locations of the atom while $E$ is the bonds connecting the nearest pairs of the atoms. For the unperturbed models, we can distinguish the periodic ground states as the sublattice occupations for each lattice. In the body-centered lattice, there are two sublattices, generated from $(0,0,0)$ and $(\frac{1}{2}, \frac{1}{2}, \frac{1}{2})$ by the translation of the three lattice vectors. Similarly, there are four sublattices in the face-centered lattice, generated from $(0,0,0)$, $(\frac{1}{2}, \frac{1}{2}, 0)$, $(\frac{1}{2}, 0, \frac{1}{2})$, and $(0, \frac{1}{2}, \frac{1}{2})$ by the translation of the three lattice vectors. For the hexagonal close-packed lattice, we can also generated the four sublattice, originated from $(0,0,0)$, $ (\frac{\sqrt{3}}{2}, \frac{1}{2},0)$, $(
\frac{\sqrt{3}}{2}, -\frac{1}{2},0)$ and $(1,0,0)$, translated by $(\sqrt{3}, 1 ,0)$, $(\sqrt{3}, -1,0)$ and $(\frac{-2\sqrt{3}}{3}, 0, \frac{\sqrt{6}}{3})$ respectively. 

Note also that all pairs of these sublattices can be transformed into each other by translation which shift atoms to one of its nearest neighbors. This inspires the verification of the Peierls condition of these models as in \cite{heilmannexistence1972}, which is originated from \cite{dobrushin1969_2}. The key step is to define the interfaces given $\C$ in the original lattice as in previous examples. For the systems with $K$ sublattice ground states labeled by $(1, \cdots, K)$, suppose the ground state of the outer boundary of $\C$ is of type $K$ and the interfaces are defined as the boundaries partitioning the subregions of $K+1$-type: the subregions are constructed inductively such that the subregion of type $i$ ($1 \le i \le K $) is the regions occupied by atoms in sublattice $i$ and the unoccupied sites near them and not included in the subregion of type $j$ ($j< i$). 
%The subregion of the last ($K+1$) type is all the left unoccupied sites.
The subregion of the last ($K+1$) type consists of all remaining unoccupied sites
Suppose for each site $s$, there are $P$ neighboring sites for each sublattice different from that of $s$. Then the difference $D$ of the number $N_K$ of the sites of sublattice $K$ in all subregions of type $i$ ($i \neq K$) and their occupied sites has the relation:  
\begin{equation}\label{eq:ineq-of-sublattice}
    P(N_K-D) \le P N_K - N_{K, \InB},    
\end{equation}
where $N_{K, \InB}\le N_K$ is the number of the sites of sublattice $K$ in these regions near the interfaces, which is of order $O(L_{\C})$. By replacing the configuration in these regions with the ground state configuration corresponding to sublattice $K$, the energy difference is proportional to $D$ and then also $O(L_{\C}) \sim O(|\C|)$. 

The local symmetry is designed similarly as last subsection by translations that transform one sublattice into the other. As a result, we can construct two distinct Gibbs distributions for the body-centered cubic lattice, and four distinct Gibbs distributions for the face-centered cubic lattice and the hexagonal close-packed lattice. 

\begin{remark}
The same methods can be applied to the triangular lattice in two dimension without random fields, which has three sublattice ground states. However, there exists no long-range phase by applying the Aizenman-Wehr argument for the occurrence of certain random fields. 
\end{remark}

\subsection{The continuous Ising model and the anisotropic Heisenberg model}

In Section \ref{sec:theorems}, we demonstrated that for systems coupled with random fields, long-range order occurs not only in lattice systems satisfying the Peierls condition (Definition \ref{def:Peierls-Condition}) and local symmetry (Definition \ref{def:local-symmetry}), but also in systems with continuous spin values of infinite degrees of freedom that meet an extended Peierls condition (Definition \ref{def:extended-peierls-condition}) and extended local symmetry (Definition \ref{def:extended-local-symmetry}), given a proper partition of the spin measure space. We now analyze two such continuous systems coupling with gaussian random field as examples: Ising model with continuous spin and anisotropic classical Heisenberg model.

In the continuous Ising model, the spin value space is $\SV = [-1,1]$ with $\MoSV$ the uniform distribution over the interval, and the Hamiltonian of the unperturbed model can be expressed as 
\begin{equation}
    H^y_{0,\Reg}=-\sum_{s\in\Reg}h^Jg^J_s(x^{y,\Reg})
\end{equation}
with the strength of the interaction $h$ defined as $h^J=J/2$ and $g^J_s(x)=\sum_{t,|s-t|_{L^1}=1}x_ss_t$. The random field interacting with the extended system is represented by $\eta^h$ coupling with single spin, then the Hamiltonian of the system can be denoted as
\begin{equation}
    H^y_{\eta,\Reg}=-\sum_{s\in\Reg}h^Jg^J_s(x^{y,\Reg})-\sum_{s\in\Reg}\eta^hg^h_s(x^{y,\Reg})
\end{equation}
with $g^h_s(x)=x_s$.

In order to construct contours, a partition of the spin configuration space is required. In the continuous Ising model, we adopt the following partition scheme:
\begin{itemize}
    \item Ground state $b^+$ corresponds to the spins in $(\delta,1]$.
    \item Ground state $b^-$ corresponds to the spins in $[-1,-\delta)$.
    \item Metastable state $m$ identified with spin values in $[-\delta,\delta]$.
\end{itemize}
with $0<\delta<1$ .
With the partition above, we are able to write \eqref{eq:modified-difference-formula} as
\begin{equation}
\begin{aligned}
    D^{r,\tau}_0(\C)=-T\ln\int \prod_{s \in \sC } d(\MoSV^{\C_s })_s(x)\exp\left[\frac{1}{T}\sum_{s\in\sC}h^Jg^J_s(x^{r,\sC})\right]\\+T\ln\int \prod_{s \in \sC} d(\MoSV^+)_s(y)\exp\left[\frac{1}{T}\sum_{s\in\sC}h^Jg^{J}_s(y^{\tau(r),\sC})\right],
\end{aligned}
\end{equation}
with $b^+$ as boundary condition and a fixed $r$ compatible with $b^+$. For simplicity, the temperature factor has been absorbed into the interaction strength parameter $h^J$.

To verify the extended Peierls condition (Definition \ref{def:extended-peierls-condition}), we need to prove that there exists a positive
constant $\rho$ such that $D^{r,\tau}_0(\C)\geq \rho|\sC|$ uniformly for sufficiently small $T$.
The first term of $D^{r,\tau}_0$ has a trivial bound that 
\begin{equation}
\begin{aligned}
    &-T\ln\int \prod_{s \in \sC } d(\MoSV^{\C_s })_s(x) \exp\left[\frac{1}{T}\sum_{s\in\sC}h^Jg^J_s(x^{r,\sC}) \right] \\ \geq &-\max_{x, \tilde{x} = \C}\left\{\sum_{s\in\sC}h^Jg^J_s(x^{r,\sC}) \right\} ,
\end{aligned}
\end{equation}
where $x$ ranges over the restricted configurations over $\sC$. Meanwhile, the second term of $D^{r,\tau}_0$ has a lower bound by further restricting the integration domain of the spin values:
\begin{equation}
\begin{aligned}
     & T\ln\int \prod_{s \in \sC} d(\MoSV^+)_s(y)\exp\left[\frac{1}{T}\sum_{s\in\sC}h^Jg^J_s(y^{\tau(r),\sC}) \right] \\
     \geq &  \min_{y, \tilde{y} = \bar{b}^+ } \left\{\sum_{s\in\sC}h^Jg^J_s(x^{\tau(r),\sC}) \right\} + T|\sC| \ln \MoSV(\bar{b}^+) ,
\end{aligned}
\end{equation}
where $\bar{b}^+ \subseteq b^+$ stands for a smaller area in $b^+$. We set it as $\bar{b}^+:= (1-\xi,1]$. 

We can then calculate the difference of these two terms as 
\begin{equation}
\begin{aligned}
D^{r,\tau}_0\geq & -J(2d-1+\delta)|\sC \setminus \InB \sC| \\ & +  2d J(1-\xi)^2|\sC \setminus \InB \sC| - d J\xi |\InB \sC| + T|\sC| \ln|1-\xi|.
\end{aligned}
\end{equation}
For $\xi =1-\delta /(4d 3^d)$, $0<\delta<1$ and $T$ sufficiently small, the extended Peierls condition is satisfied for the continuous Ising model.

To verify the extended local symmetry, we define the continuous configuration transformation $\bar{\tau}_{\Reg}$ as 
\begin{equation}
    (\bar{\tau}_{\Reg}x)_s=-x_s,
\end{equation}
while the random field configuration transformation (same as the quenched configuration transformation)
$\tau_\Reg$ is 
\begin{equation}
    (\tau_\Reg\eta)_s=-\eta_s.
\end{equation}
Noticing that the same strategy have been used in RFIM, one can easily verify all the conditions of local symmetry for the above transformations.

The same argument is also valid for the anisotropic classical Heisenberg model. With the notations in \cite{bortzphase1972}, the spins in the classical Heisenberg model can be described using polar coordinates with azimuthal angle $\phi$ and polar angle $\theta$. This allows us to define
\begin{equation*}
    \zeta_i=\sigma_i^z=\cos\theta_i
\end{equation*}
as the $z$ component of the spin at site $i$.
Within this framework, the Hamiltonian of the anisotropic classical Heisenberg model can be expressed as:
\begin{equation}
    H_{\eta,\Reg}^y=-\sum_{\alpha\in\{\phi,\zeta\}}\sum_{s\in\Reg}h^\alpha g^\alpha_s(x^{y,\Reg})-\sum_{s\in\Reg}\eta^h g^h_s(x^{y,\Reg}),
\end{equation}
where $h^\phi=J\gamma/2$ with $\gamma\in(0,1/2)$ and $h^\zeta=J/2$. The functions $g^\phi_s$ and $g^\zeta_s$ are defined as 
\begin{align*}
    g^\phi_s(x)=\sum_{t,|s-t|_{L^1}=1}&[(1-\zeta^2_s)(1-\zeta^2_t)]^{\frac{1}{2}}\cos(\phi_s-\phi_t).\\
    g^\zeta_s(x)&=\sum_{t,|s-t|_{L^1}=1}\zeta_s\zeta_t.
\end{align*}
The random field $\eta$ follows a normal distribution $N(0,\sigma^2)$, and the coupling function is given by $g^h_s(x)=\zeta_s$, indicating that the random field couples only with the $z$-component of the spin.

Now, we apply the partition of the spin value space as:
\begin{itemize}
    \item Ground state $b^+$ corresponds to $\zeta_s$  taking value in $(\delta,1]$.
    \item Ground state $b^-$ corresponds to $\zeta_s$  taking value in $[-1,-\delta)$.
    \item Metastable state $m$ identified with $\zeta_s$ taking value in $[-\delta,\delta]$. 
\end{itemize}
with $0<\delta<1$. 

Since our partition has no change to the value space of $\phi$, the energy variance between spins in the same coarse-grained state with different azimuthal angle would not affect the calculation $ D^{r,\tau}_0$, thus the extended Peierls condition can be verified by the same method in continuous Ising model. 

In order to verify the local symmetry, we consider similar transformations as:  
The continuous spin configuration transformation $\bar{\tau}_\Reg$:
\begin{equation}
    (\bar{\tau}_\Reg\zeta)_s=-\zeta_s\\
\end{equation}
with the polar coordinations unchanged, and the random field configuration transformation $\tau_\Reg$:
\begin{equation}
    (\tau_\Reg\eta)_s = -\eta_s.
\end{equation}
It is easy to estimate the extended local symmetry with transformations above since $\eta_s$ only coupled with $\zeta_s$.
\section{Declarations}

\bmhead{Acknowledgements}

We thank Jian Ding, Fenglin Huang, Miao Huang, Jo\~ao Maia and Aoteng Xia for beneficial discussion over this manuscript. 
The following funding supports are acknowledged: National Natural Science Foundation of China Grants No. 12247104 and No. 12047503. 
\begin{appendices}

\section{Geometric results}\label{secA1}

\begin{lemma}\label{lem:combinatorics-contour}
Denote the connected finite region in $\Zd$ ($d \ge 2$) as $\Reg \subset \Zd$ and let $\Gamma(n) = \{\Reg | |\Reg| =n, 0 \in \Reg \cup \Int \Reg \}$ be the collection of connected subregions with a fixed number $n$ of the points including the origin. Then we have a upper bound of $|\Gamma(n)|$ as 
\begin{equation}
    |\Gamma(n)| \le e^n (3^d-1)^{2n}.
\end{equation}
\end{lemma}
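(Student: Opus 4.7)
The plan is to split $\Gamma(n)$ into the two disjoint subcollections $\Gamma_1(n) := \{\Reg \in \Gamma(n) : 0 \in \Reg\}$ and $\Gamma_2(n) := \{\Reg \in \Gamma(n) : 0 \in \Int\Reg,\ 0 \notin \Reg\}$, bound each separately, and combine. The first piece is handled by a standard contour-counting argument and the second is reduced to the first by a translation trick.

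For $\Gamma_1(n)$ I would use the classical depth-first-search encoding for rooted connected subgraphs. Every $\Reg \in \Gamma_1(n)$ carries a spanning tree rooted at $0$, and a DFS traversal of this tree traverses each tree edge twice, producing a walk of length $2(n-1)$. At each step the walker sits at some vertex with at most $3^d - 1$ king-move neighbors in $\Zd$, and its next move (advance to an unvisited neighbor, or backtrack along the tree) is determined by at most $3^d - 1$ choices. Since the walk uniquely recovers $\Reg$, this gives $|\Gamma_1(n)| \le (3^d - 1)^{2(n-1)}$.

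For $\Gamma_2(n)$ the idea is to reduce to $\Gamma_1(n)$ via translation. Since $0 \in \Int \Reg$ places $0$ inside a bounded component of $\Zd \setminus \Reg$, both the positive and the negative $x_1$-half-lines, being lattice paths escaping to infinity, must meet $\Reg$. Let $k_\Reg \ge 1$ denote the smallest integer with $(k_\Reg, 0, \ldots, 0) \in \Reg$. The same reasoning applied to the negative half-line forces $\Reg$ to contain some point with $x_1 \le 0$; combined with the elementary fact that any $L^\infty$-connected set of $n$ lattice points has king-move diameter at most $n-1$, this yields the crucial a priori bound $k_\Reg \le n-1$. The shift $\Phi(\Reg) := \Reg - (k_\Reg, 0, \ldots, 0)$ then maps $\Gamma_2(n)$ into $\Gamma_1(n)$, and a fiber $\Phi^{-1}(\Reg')$ is indexed by the admissible shift $k \in \{1, \ldots, n-1\}$, so $|\Phi^{-1}(\Reg')| \le n-1$ and hence $|\Gamma_2(n)| \le (n-1)\,|\Gamma_1(n)|$.

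Combining these two estimates gives $|\Gamma(n)| \le n(3^d - 1)^{2(n-1)}$, from which the stated bound $|\Gamma(n)| \le e^n(3^d - 1)^{2n}$ follows immediately from $n \le e^n$ and $(3^d - 1)^{-2} \le 1$. The main obstacle is the geometric lemma $k_\Reg \le n-1$: it hinges on the duality that $0 \in \Int \Reg$ forces $\Reg$ to surround $0$, hence to occupy points on both sides of the origin along every coordinate ray, which is what converts the topological condition $0 \in \Int\Reg$ into a usable metric bound on the first axis-crossing point.
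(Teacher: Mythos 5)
Your proposal is correct and follows essentially the same route as the paper's proof: a DFS/spanning-tree encoding of regions containing the origin into walks of length $\le 2n$ with at most $3^d-1$ choices per step, followed by a translation along the first coordinate axis to handle the case $0 \in \Int\Reg$, with at most $n$ admissible shifts. Your write-up is in fact slightly more careful than the paper's, which asserts the bound of $n$ possible intersection points on the positive $x_1$-axis without the diameter argument you supply.
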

\begin{proof}
For a connected region of $\Reg$ such that $0 \in \Reg$, we can construct a connected graph $(G,E)$, where $G$ all the points in $\Reg$ and $E$ the edge $(u,v)$ for every pair of $u,v \in \Reg$ with $|u-v|=1$. It is a basic fact that there is a spanning tree $T$ of the connected graph $G$, and an injective mapping from the set of connected $T$ with $n$ edges containing the origin to the set of paths of length $2n$ starting from the origin can also be established by an algorithmic construction of the spanning tree. To construct such a tree, we perform a depth-first search (DFS) traversal of $G$ starting from the origin.  For each edge encountered during the DFS with untouched points in $G$, we append the directed edge, label the point to be touched and search all the other edges of that point. This procedure generates a path $P$ of length exactly $2n$ that encodes the original subgraph $G$. The injectivity of this mapping follows from the fact that different subgraphs will produce different DFS traversals, and thus different paths, while different graph $G$ have also different spanning trees. Therefore, The number of such $\Reg$ is upper bounded by the total number of possible paths of length $2n$ starting from the origin, which is at most $(3^d-1)^{2n}$. 

For a connected region $\Reg$ such that $0 \in \Int \Reg$, it must intersect at some point in the positive direction of the first coordination, which has no more than $n$ possible choices. As a result, $|\Gamma(n)|\le (n+1)(3^d-1)^{2n} \le e^n (3^d-1)^{2n}$.  
\end{proof}

\section{Variational results}
\begin{lemma}\label{lem:opt-3point-generation}
Consider the following optimization problem:
\begin{equation}
\max_{x_i, p_i \in \mathbb{R}} \sum_{i=1}^3 p_i e^{\lambda x_i}  
\end{equation}
with the constraints of moments:
\begin{equation}\label{eq:constraint-moment}
    \sum_{i=1}^3 p_i = 1, \quad \sum_{i=1}^3 p_i x_i = 0, \quad \sum_{i=1}^3 p_i x_i^2 = \sigma^2
\end{equation}
and the constraints of bounds $
     p_i \ge 0, \quad  - 1 \le x_i \le 1 $. Then for $\sigma \le 1$, there exists the unique maximizer of the optimization problem up to a symmetric transformation such that 
\begin{equation}
    \begin{array}{cc}
        x_1= -\sigma^2, & x_2 =1,\\
        p_1 = \frac{1}{1+\sigma^2}, & p_2 = \frac{\sigma^2}{1+\sigma^2},
    \end{array}
\end{equation}
while $p_3=0$ and $x_3$ is any point of $[-1,1]$. The corresponding maximal value is $\frac{e^{-\lambda \sigma^2} + \sigma^2 e^{\lambda}}{1 + \sigma^2}.$
\end{lemma}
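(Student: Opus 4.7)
The plan is a reduction to a two-point support followed by a one-variable optimization, which delivers the extremizer directly. The feasible set---probability measures on $[-1,1]$ with mean $0$ and variance $\sigma^2$---is convex and weak-$*$ compact, and $\mu\mapsto\int e^{\lambda x}\,d\mu$ is convex, so the maximum is attained at an extreme point. By the Chebyshev-system theory cited in Lemma~\ref{lem:single-log-generation-bound}, every such extreme point is supported on at most three points. Since $e^{\lambda x}$ is strictly convex for $\lambda\ne 0$, any purely interior three-point candidate can be perturbed along the one-dimensional family of three-point distributions matching the three prescribed moments, and the resulting second-order variation of the objective is strictly positive in some direction, contradicting optimality. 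Hence the optimum is supported on two points, with at most one of them on the boundary $\{-1,1\}$.

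Writing this two-point distribution as $\{x_1,x_2\}$ with $x_1<0<x_2$, the moment constraints force
\[
p_1=\frac{x_2}{x_2-x_1},\qquad p_2=\frac{-x_1}{x_2-x_1},\qquad x_1 x_2=-\sigma^2,
\]
so substituting $x_1=-\sigma^2/x_2$ reduces the objective to
\[
f(x_2)=\frac{x_2^2\,e^{-\lambda\sigma^2/x_2}+\sigma^2\,e^{\lambda x_2}}{x_2^2+\sigma^2},\qquad x_2\in[\sigma^2,1],
\]
where the lower endpoint enforces $x_1\ge -1$. For $\lambda>0$, I would differentiate $f$ and show that $f'(x_2)>0$ on the open interval, so the maximizer is the right endpoint $x_2=1$. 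This yields $x_1=-\sigma^2$, $p_1=1/(1+\sigma^2)$, $p_2=\sigma^2/(1+\sigma^2)$, and the claimed value $(e^{-\lambda\sigma^2}+\sigma^2\,e^{\lambda})/(1+\sigma^2)$. The case $\lambda=0$ is trivial, and for $\lambda<0$ the symmetry $x\mapsto-x$ reduces to the case $\lambda>0$, producing the analogous extremizer with endpoints swapped.

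The main obstacle is the monotonicity check in the last step: a direct derivative calculation is algebraically tedious, and pinning down the sign requires tracking how $e^{\lambda x_2}$ and $e^{-\lambda\sigma^2/x_2}$ compete through the differentiated quotient. A cleaner alternative that I would pursue in parallel is an LP-duality (exponential-majorant) certificate. I would seek $a,b,c\in\mathbb{R}$ with $e^{\lambda x}\le a+bx+cx^2$ on $[-1,1]$, tangent to $e^{\lambda x}$ at $x=-\sigma^2$ (value and first derivative, two equations) and coincident at $x=1$ (one equation). These three conditions determine $a,b,c$ uniquely, and integrating against any feasible $\mu$ yields $\int e^{\lambda x}\,d\mu\le a+c\sigma^2$, which an explicit calculation identifies with the claimed value. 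The pointwise majorization on $[-1,1]$ is then reduced to analyzing the zeros of the analytic function $a+bx+cx^2-e^{\lambda x}$, which by construction has a double root at $-\sigma^2$ and a simple root at $1$; since a quadratic-minus-exponential has at most three real zeros counted with multiplicity, no further sign change is possible on $[-1,1]$, and a single check at one convenient point (e.g., $x=-1$) pins down the sign throughout.
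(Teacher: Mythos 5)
Your primary route is essentially the paper's: reduce to extreme points of the moment-constrained set (supported on at most three points via the Chebyshev-system result), eliminate the genuine three-point case, and then optimize over two-point distributions in one variable. Indeed your $f(x_2)$ is exactly the paper's $G(a)$ under the substitution $a=-\sigma^2/x_2$, and both you and the paper leave the final monotonicity step as an asserted ``derivative analysis,'' so you are not at a disadvantage there. The one place where your argument is weaker is the elimination of the interior three-point candidate: the family of three-point distributions matching the three moments is generically three-dimensional (move the three support points; the weights are then determined), not one-dimensional as you state, and showing that the second variation of the objective on this constraint manifold is positive in some direction is a genuine computation you have not done. The paper handles this step differently and more cleanly: the KKT stationarity conditions force every support point to lie on a quadratic $q(x)=\alpha+\beta x+\gamma x^2$ intersecting $e^{\lambda x}$, with tangency at interior points, and then $h'''(x)=\lambda^3 e^{\lambda x}\neq 0$ bounds the number of zeros of $h=e^{\lambda x}-q$ counted with multiplicity, ruling out both two interior support points and a nontrivial three-point law. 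Your proposed dual-certificate alternative (a quadratic majorant of $e^{\lambda x}$ on $[-1,1]$, tangent at $-\sigma^2$ and coincident at $1$) is a genuinely different and attractive route: the same zero-counting argument shows the majorization holds automatically (the sign near $x=1^-$ is forced since $h\to-\infty$ as $x\to+\infty$ and $1$ is the largest of the three zeros, so no separate check at $x=-1$ is even needed), and integrating the certificate against any feasible measure yields the claimed bound with equality at the claimed maximizer --- thereby bypassing both the support-reduction and the unproven monotonicity of $f$. If you complete either the majorization details or the derivative analysis, the proof closes.
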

\begin{proof}
Since this is a convex optimization problem, we can use the Karush-Kuhn-Tucker (KKT) method to solve it, and the corresponding Lagrangian is:
\begin{equation}
\begin{aligned}
\mathcal{L} & =\sum_{i=1}^3 p_i e^{\lambda x_i} - \alpha \left( \sum_{i=1}^3 p_i - 1 \right) - \beta \left( \sum_{i=1}^3 p_i x_i \right) - \gamma \left( \sum_{i=1}^3 p_i x_i^2 - \sigma^2 \right)  \\
& + \sum_{i=1}^3 \mu_i (x_i + 1) + \sum_{i=1}^3 \nu_i (1 - x_i) + \sum_{i=1}^3 \xi_i p_i,    
\end{aligned}
\end{equation}
where $\mu_i \ge 0$, $\nu_i \ge 0$, $\xi_i \ge 0$ and $\mu_i>0$ ($\nu_i > 0$, $\xi_i>0 $) enforce $x_i = -1$ ($x_i = 1$, $p_i = 0$). The conditions yield
\begin{equation} \label{eq:var-of-p}
   \frac{\partial \mathcal{L}}{\partial p_k} = e^{\lambda x_k} - \alpha - \beta x_k - \gamma x_k^2 = 0. 
\end{equation}
for any $p_k >0$. Thus, all support points lie on the intersection of quadratic $q(x) = \alpha + \beta x + \gamma x^2$ with the exponential function $e^{\lambda x}$ for a nontrivial three-point distribution. In addition, for $x_k \in (-1, 1)$ (internal points), 
\begin{equation}\label{eq:var-of-x}
     \frac{\partial \mathcal{L}}{\partial x_k} = p_k \lambda e^{\lambda x_k} - p_k \beta - 2p_k \gamma x_k = 0   \end{equation}
We prove the lemma by systematically eliminating asymmetric configurations.

\textit{Eliminate the cases of more than $2$ internal points.} Suppose  $x_1, x_2 \in (-1,1)$ with $p_1, p_2>0 $ and $x_1 < x_2$. Points $x_1, x_2$ satisfy \eqref{eq:var-of-p} and \eqref{eq:var-of-x}. Define $h(x) = e^{\lambda x} - q(x)$. Then $h(x_k) = 0$ and $h'(x_k) = 0$ for $k=1,2$. By the mean value theorem, there exists a point $\bar{x}\in (x_1,x_2)$ such that $h'(\bar{x})=0$. Thus, $h'(x)$ has at least three roots and accordingly at least one root for $h'''(x)$. However, $h'''(x) = \lambda^3 e^{\lambda x}>0$, which leads to a contradiction. 

\textit{Eliminate the case with the nontrivial three-point distribution.} Assume $x_1 = -1$, $x_3 = 1$, $x_2 = c \in (-1, 1)$ and $p_i > 0$ for all $i$. By \eqref{eq:var-of-p}:
\begin{equation}
e^{-\lambda} = \alpha - \beta + \gamma, \quad e^{\lambda} = \alpha + \beta + \gamma, \quad e^{\lambda c} = \alpha + \beta c + \gamma c^2.
\end{equation}
While by \eqref{eq:var-of-x} for $x_2 = c$:
\begin{equation}
\lambda e^{\lambda c} = \beta + 2\gamma c.
\end{equation}
As in the discussion above, it is easy to check that there exist $\bar{x}_1 \in (-1,c)$, and $\bar{x}_2 \in (c, 1)$ such that $h'(\bar{x}_k)=0$ for $k=1,2$. Then there exists at least three roots for $h'(\bar{x}_k)=0$ and the contradiction follows. 

\textit{Calculate the case of $p_3= 0$.} Assume $p_1 > 0$, $p_2 > 0$, $p_3 = 0$ (without loss of generality). The distribution reduces to two points, say $a, b \in [-1, 1]$, with probabilities $p$ and $1-p$. $p$ and $ab$ are solved by \eqref{eq:constraint-moment} as $p = \frac{b}{b-a}$ and $\sigma^2 = -a b$. Assume $b > a$. Then the optimization problem is reduced into the maximization of the following formula:
\begin{equation}
G(a) = \frac{\sigma^2 e^{\lambda a} + a^2 e^{-\lambda \sigma^2 / a}}{\sigma^2 + a^2}, \quad a \in [-1, -\sigma^2].
\end{equation}
Then the maximum occurs at $a = -\sigma^2$ verified via derivative analysis. Then the conclusion follows.
\end{proof}
\begin{lemma}\label{lem:variation-boundedness}
Let $X \in \mathbb{R}$ be a random variable such that $\mathrm{Var}(X):=\Ex( X - \Ex X)^2 = 1$. Let $F(x)$ be a Lipschitz function with $|F(x)-F(y)| \le |x-y|$ for all $x,y$. Then the variation has a sharp bound
\begin{equation}
    \mathrm{Var}[F(X)] =\Ex\left[( F(X) - \Ex F(X))^2\right] \le  1.
\end{equation}
\end{lemma}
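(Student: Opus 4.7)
The plan is to reduce the claim to a symmetrization identity and then apply the Lipschitz bound pointwise. Concretely, I will introduce an independent copy $X'$ of $X$ (with the same distribution, defined on a product space) and use the standard identity
\begin{equation}
    \mathrm{Var}[F(X)] \;=\; \tfrac{1}{2}\,\Ex\bigl[(F(X) - F(X'))^2\bigr],
\end{equation}
which follows by expanding the square and using independence together with $\Ex F(X) = \Ex F(X')$. The same identity applied to $F(x)=x$ gives $\mathrm{Var}(X) = \tfrac{1}{2}\Ex[(X-X')^2]$.

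Next, I would apply the Lipschitz hypothesis pointwise on the underlying product space: for every realization of the pair $(X,X')$ we have $(F(X) - F(X'))^2 \le (X - X')^2$. Taking expectations preserves the inequality, so
\begin{equation}
    \mathrm{Var}[F(X)] \;=\; \tfrac{1}{2}\,\Ex\bigl[(F(X)-F(X'))^2\bigr] \;\le\; \tfrac{1}{2}\,\Ex\bigl[(X-X')^2\bigr] \;=\; \mathrm{Var}(X) \;=\; 1.
\end{equation}
Sharpness is immediate by taking $F$ to be the identity map (Lipschitz constant $1$), which turns the inequality into an equality.

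There is no genuinely hard step here; the only care required is justifying the symmetrization identity (an elementary expansion using $\Ex[F(X)F(X')]=(\Ex F(X))^2$ by independence) and ensuring that the Lipschitz inequality is applied pointwise before integration, so that no measurability or integrability issues arise. An alternative, equally short route is to use the variational characterization $\mathrm{Var}[F(X)] = \min_{c\in\mathbb{R}} \Ex[(F(X)-c)^2]$, specialize to $c = F(\Ex X)$, and then bound $\Ex[(F(X)-F(\Ex X))^2] \le \Ex[(X-\Ex X)^2] = 1$ via the Lipschitz assumption; I would include this as a remark but prefer the symmetrization argument because it makes the sharpness transparent.
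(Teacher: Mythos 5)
Your proof is correct, but it takes a different (equally elementary) route from the paper. The paper centers $F(X)$ at the constant $F(\Ex X)$, observes that $\mathrm{Var}[F(X)] \le \Ex\bigl[(F(X)-F(\Ex X))^2\bigr]$ since subtracting any constant can only increase the second moment relative to the variance, and then applies the Lipschitz bound pointwise at the single pivot $\Ex X$ to get $\Ex\bigl[(F(X)-F(\Ex X))^2\bigr] \le \Ex\bigl[(X-\Ex X)^2\bigr] = 1$; this is precisely the ``alternative route'' you relegate to a remark. Your primary argument instead symmetrizes with an independent copy $X'$ and uses $\mathrm{Var}[F(X)] = \tfrac{1}{2}\Ex\bigl[(F(X)-F(X'))^2\bigr]$ before applying the Lipschitz bound to the pair $(X,X')$. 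Both are valid and of comparable length; the symmetrization has the mild advantages of not privileging the point $\Ex X$ (it uses the Lipschitz hypothesis on all pairs of realizations rather than only against a fixed argument) and of making the sharpness case $F=\mathrm{id}$ an identity rather than a separate check, while the paper's version avoids introducing a product space. Either way the inequality obtained is $\mathrm{Var}[F(X)] \le \mathrm{Var}(X)$, which is what the surrounding text actually uses.
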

\begin{proof}
Define $G(x) = F(x) - F(\Ex X)$. Then $\text{Var}(G(X)) = \text{Var}(F(X))$ since variance is translation-invariant. By the Lipschitz condition, for any $x \in \mathbb{R}$ 
\begin{equation}|G(x)| = |F(x) - F(\Ex X)| \leq |x-\Ex X|,\end{equation}
so $G(x)^2 \leq x^2$. Taking expectations:  
\begin{equation}\Ex\left[G(X)^2\right] \leq \mathrm{Var}(X) = 1.\end{equation}
Thus,  
\begin{equation}\text{Var}(F(X)) = \Ex\left[G(X)^2\right] - \left(\Ex [G(X)]\right)^2 \leq \Ex\left[G(X)^2\right] \leq 1.\end{equation}
Since equality holds when $F(x) = x$, the conclusion follows. 
\end{proof}

\end{appendices}

%%===========================================================================================%%
%% If you are submitting to one of the Nature Portfolio journals, using the eJP submission   %%
%% system, please include the references within the manuscript file itself. You may do this  %%
%% by copying the reference list from your .bbl file, paste it into the main manuscript .tex %%
%% file, and delete the associated \verb+\bibliography+ commands.                            %%
%%===========================================================================================%%
% \nocite{*} 
% \bibliographystyle{plain}  % 选择样式
\bibliography{refDingZhuang}  

%% if required, the content of .bbl file can be included here once bbl is generated
%%\input sn-article.bbl

\end{document}